\documentclass[11pt]{article}


\usepackage[utf8]{inputenc}
\usepackage[T1]{fontenc}
\usepackage{lmodern}  
\usepackage{microtype}
\usepackage{amsmath}
\usepackage{amssymb}
\usepackage{amsthm}
\usepackage{thmtools, thm-restate}  
\usepackage{bm}
\usepackage{dsfont}
\usepackage{authblk}
\usepackage{fullpage,parskip}
\usepackage{comment}
\usepackage{wrapfig}
\usepackage{tikz}
\usepackage{mathtools}
\usepackage{bbm}
\usepackage[shortlabels]{enumitem}
\usepackage[
    backend=biber,
    style=alphabetic,
    maxcitenames=10,
    maxbibnames=99,
    natbib=true,
    url=false,
    doi=false, 
    isbn=false]{biblatex}
\usepackage{nag}
\usepackage{algpseudocode}
\usepackage{float} 

\usepackage[ruled,vlined]{algorithm2e}

\definecolor{custom_color}{RGB}{225,231,255}
\usepackage[framemethod=tikz]{mdframed}



\newcommand{\declarecolor}[2]{\definecolor{#1}{RGB}{#2}\expandafter\newcommand\csname #1\endcsname[1]{\textcolor{#1}{##1}}}
\declarecolor{White}{255, 255, 255}
\declarecolor{Black}{0, 0, 0}
\declarecolor{Maroon}{128, 0, 0}
\declarecolor{Coral}{255, 127, 80}
\declarecolor{LimeGreen}{50, 205, 50}
\declarecolor{DarkGreen}{0, 100, 0}
\declarecolor{Navy}{0, 0, 128}

\usepackage{hyperref}
\usepackage[capitalize, nameinlink]{cleveref}
\hypersetup{
	colorlinks=true,
	pdfpagemode=UseNone,
	citecolor=DarkGreen,
	linkcolor=Maroon,
	urlcolor=Navy,
	pdfstartview=FitW
}


\theoremstyle{plain}
\newtheorem{theorem}{Theorem}[section]
\newtheorem{lemma}[theorem]{Lemma}
\newtheorem{corollary}[theorem]{Corollary}

\newtheorem{fact}[theorem]{Fact}

\newtheorem{observation}[theorem]{Observation}
\newtheorem{claim}[theorem]{Claim}
\newtheorem{assumption}[theorem]{Assumption}

\theoremstyle{definition}
\newtheorem{definition}[theorem]{Definition}

\theoremstyle{remark}

\newenvironment{proofsketch}{\proof}{\endproof}  




\usetikzlibrary{shapes, patterns, decorations, fit, intersections}


\newcommand*{\Z}{{\mathbb{Z}}}

\newcommand*{\eps}{{\varepsilon}}
\newcommand*{\cI}{{\mathcal{I}}}

\newcommand*{\calP}{{\mathcal{P}}}

\newcommand{\alert}[1]{{\color{orange}#1}}

\newcommand{\tl}[1]{\widetilde{#1}}

\renewcommand{\vec}[1]{\bm{#1}}
\newcommand{\mat}[1]{\mathbf{#1}}

\let\poly\relax
\let\polylog\relax
\DeclareMathOperator{\poly}{poly}
\DeclareMathOperator{\polylog}{polylog}
\DeclareMathOperator{\tw}{tw}
\DeclareMathOperator{\dilation}{\mathtt{dilation}}
\DeclareMathOperator{\congestion}{\mathtt{congestion}}
\DeclareMathOperator{\congest}{\textsc{CONGEST}}
\DeclareMathOperator{\congclique}{\mathsf{CLIQUE}}
\DeclareMathOperator{\local}{\mathsf{LOCAL}}
\DeclareMathOperator{\pram}{\textsc{PRAM}}
\DeclareMathOperator{\ncc}{\textsc{NCC}}
\DeclareMathOperator{\specspars}{\textsc{SpectralSparsify}}
\DeclareMathOperator{\solver}{\textsc{Solver}}
\DeclareMathOperator{\build}{\textsc{BuildChain}}
\DeclareMathOperator{\randwalkschur}{\textsc{RandomWalkSchur}}
\DeclareMathOperator{\ultraspars}{\textsc{UltraSparsify}}
\DeclareMathOperator{\eliminate}{\textsc{Eliminate}}
\DeclareMathOperator{\approxSC}{\textsc{ApproxSC}}
\newcommand{\approxeps}{\approx_{\eps}}
\newcommand{\schur}{\mat{SC}}
\DeclareMathOperator{\hybrid}{\textsc{HYBRID}}
\newcommand{\supervertex}{S^{G \rightarrow \overline{G}}}
\newcommand{\tree}{T^{G \rightarrow \overline{G}}}
\newcommand{\overG}{\overline{G}}
\newcommand{\overD}{\overline{D}}
\newcommand{\overn}{\overline{n}}
\newcommand{\hatG}{\widehat{G}}

\DeclareMathOperator{\lev}{\vec{lev}}

\DeclareMathOperator{\res}{res}

\DeclareMathOperator{\SQ}{SQ}

\DeclarePairedDelimiterX{\card}[1]{\lvert}{\rvert}{#1}
\DeclarePairedDelimiterX{\abs}[1]{\lvert}{\rvert}{#1}
\DeclarePairedDelimiterX{\norm}[1]{\lVert}{\rVert}{#1}
\DeclarePairedDelimiterX{\tuple}[1]{\lparen}{\rparen}{#1}
\DeclarePairedDelimiterX{\parens}[1]{\lparen}{\rparen}{#1}
\DeclarePairedDelimiterX{\brackets}[1]{\lbrack}{\rbrack}{#1}
\DeclarePairedDelimiterX{\set}[1]\{\}{#1}
\let\Pr\relax
\DeclarePairedDelimiterXPP{\Pr}[1]{\mathbb{P}}[]{}{#1}
\DeclarePairedDelimiterXPP{\PrX}[2]{\mathbb{P}_{#1}}[]{}{#2}
\DeclarePairedDelimiterXPP{\Ex}[1]{\mathbb{E}}[]{}{#1}
\DeclarePairedDelimiterXPP{\ExX}[2]{\mathbb{E}_{#1}}[]{}{#2}



\bibliography{paper}
\vspace{-1cm}
\title{Almost Universally Optimal Distributed Laplacian Solvers via Low-Congestion Shortcuts\footnote{The author ordering was randomized using \url{https://www.aeaweb.org/journals/policies/random-author-order/generator}. 
It is requested that citations of this work list the authors separated by \texttt{\textbackslash textcircled\{r\}} instead of commas: Anagnostides \textcircled{r} Lenzen \textcircled{r} Haeupler \textcircled{r} Zuzic \textcircled{r} Gouleakis.}}









\author[1]{Ioannis Anagnostides} 
\author[2]{Christoph Lenzen}
\author[3]{Bernhard Haeupler}
\author[4]{Goran Zuzic}
\author[5]{Themis Gouleakis}
\affil[1]{Carnegie Mellon University}
\affil[2]{CISPA Helmholtz Center for Information Security}
\affil[3]{ETH Z\"urich \& Carnegie Mellon University}
\affil[4]{ETH Z\"urich}
\affil[5]{National University of Singapore}
\footnotetext{Funding acknowledgments are deferred to the next page.}

\date{}

\begin{document}

\maketitle

\pagenumbering{gobble}

\begin{abstract}
In this paper, we refine the (almost) \emph{existentially optimal} distributed Laplacian solver recently developed by Forster, Goranci, Liu, Peng, Sun, and Ye (FOCS `21) into an (almost) \emph{universally optimal} distributed Laplacian solver.

\medskip

Specifically, when the topology is known, we show that any Laplacian system on an $n$-node graph with \emph{shortcut quality} $\text{SQ}(G)$ can be solved within $n^{o(1)} \text{SQ}(G) \log(1/\varepsilon)$ rounds, where $\varepsilon$ is the required accuracy. This almost matches our lower bound which guarantees that any correct algorithm on $G$ requires $\widetilde{\Omega}(\text{SQ}(G))$ rounds, even for a crude solution with $\varepsilon \le 1/2$. Even in the unknown-topology case (i.e., standard CONGEST), the same bounds also hold in most networks of interest. Furthermore, conditional on conjectured improvements in state-of-the-art constructions of low-congestion shortcuts, the CONGEST results will match the known-topology ones.

\medskip

Moreover, following a recent line of work in distributed algorithms, we consider a hybrid communication model which enhances CONGEST with limited global power in the form of the node-capacitated clique (NCC) model. In this model, we show the existence of a Laplacian solver with round complexity $n^{o(1)} \log(1/\varepsilon)$.

\medskip

The unifying thread of these results, and our main technical contribution, is the study of novel \emph{congested} generalization of the standard \emph{part-wise aggregation} problem. We develop near-optimal algorithms for this primitive in the Supported-CONGEST model, almost-optimal algorithms in (standard) CONGEST, as well as a very simple algorithm for bounded-treewidth graphs with slightly worse bounds. This primitive can be readily used to accelerate the FOCS`21 Laplacian solver. We believe this primitive will find further independent applications.
\end{abstract}

\newpage

\footnotetext{Funding acknowledgments. Bernhard Haeupler: Supported in part by NSF grants CCF-1814603, CCF-1910588, NSF CAREER award CCF-1750808, a Sloan Research Fellowship, funding from the European Research Council (ERC) under the European Union's Horizon 2020 research and innovation program (ERC grant agreement 949272), and the Swiss National Foundation (project grant 200021-184735). Goran Zuzic: Supported in part by the Swiss National Foundation (project grant 200021-184735). Themis Gouleakis: Supported in part by an NRF Fellowship for AI (R-252-000-A33-133). Part of the work was done while visiting the Simons Institute for Theory of Computing.}

\tableofcontents
\thispagestyle{empty} 

\clearpage
\pagenumbering{arabic}

\section{Introduction}

The \emph{Laplacian paradigm} has emerged as one of the cornerstones of modern algorithmic graph theory. Integrating techniques from combinatorial optimization with powerful machinery from numerical linear algebra, it was originally pioneered in \cite{DBLP:journals/siammax/SpielmanT14} who established the first nearly-linear time solvers for a (linear) Laplacian system. Thereafter, there has been a considerable amount of interest in providing simpler and more efficient solvers~\cite{DBLP:journals/siamcomp/KoutisMP14,DBLP:conf/stoc/KelnerOSZ13,DBLP:conf/focs/KyngS16}. Indeed, this framework has led to some state of the art algorithms for a wide range of fundamental graph-theoretic problems; e.g., see  \cite{Axiotis21:Faster,DBLP:conf/focs/Madry16,DBLP:conf/soda/CohenMSV17,DBLP:conf/focs/BrandLNPSS0W20,DBLP:conf/soda/KelnerLOS14,DBLP:conf/soda/Peng16,DBLP:conf/focs/AxiotisMV20}, and references therein. In the distributed setting, a major breakthrough was very recently made in \cite{DBLP:journals/corr/abs-2012-15675}. In particular, the authors developed a distributed algorithm that solves any Laplacian system on an $n$-node graph after $n^{o(1)} (\sqrt{n} + D) \log(1/\eps)$ rounds of the standard $\congest$ model, where $D$ represents the hop-diameter of the underlying network and $\eps > 0$ is the error of the solver. Moreover, they showed that their algorithm is \emph{existentially optimal}, up to the $n^{o(1)}$ factor, establishing a lower bound of $\widetilde{\Omega}(\sqrt{n} + D)$ rounds via a reduction from the $s-t$ connectivity problem~\cite{10.1145/1993636.1993686}.

This \emph{existential} lower bound in the $\congest$ model of distributed computing should hardly come as any surprise. Indeed, it is well-known by now that a remarkably wide range of \emph{global} optimization problems, including minimum spanning tree (MST), minimum cut (Min-Cut), maximum flow, and single-source shortest paths (SSSP), require $\widetilde{\Omega}(\sqrt{n} + D)$ rounds\footnote{As usual, we use the notation $\widetilde{O}(\cdot)$ and $\widetilde{\Omega}(\cdot)$ to suppress polylogarithmic factors on $n$.} \cite{814597,10.1145/1007352.1007407,10.1145/1993636.1993686}. The same limitation generally applies to any non-trivial approximation and even under randomization. Nonetheless, these lower bounds are constructed on some pathological graph instances which arguably do not occur in practice. This begs the question: \emph{Can we obtain more refined performance guarantees based on the underlying topology of the communication network?} The framework of \emph{low-congestion shortcuts}, introduced by \cite{DBLP:conf/soda/GhaffariH16}, demonstrated that bypassing the notorious $\Omega(\sqrt{n})$ lower bound is possible: MST and Min-Cut on \emph{planar graphs} can be solved in $\widetilde{O}(D)$ rounds. This is crucial, given that in many graphs of practical significance the diameter is remarkably small; e.g., $D = \polylog(n)$ (as is folklore, this holds for most social networks), implying \emph{exponential improvements} over generic algorithms used for general graphs. In the context of the distributed Laplacian paradigm, we raise the following question:


\begin{quote}
\centering
\textit{Is there a faster distributed Laplacian solver under ``non-worst-case'' families of graphs in the $\congest$ model?}
\end{quote}

The only known technique in distributed computing for designing algorithms that go below the $\sqrt{n}$-bound is the low-congestion shortcut framework of Ghaffari and Haeupler~\cite{DBLP:conf/soda/GhaffariH16}, and its large ecosystem of tools built around it~\cite{DBLP:conf/podc/HaeuplerIZ16,DBLP:conf/wdag/HaeuplerIZ16,DBLP:conf/stoc/HaeuplerWZ21,DBLP:journals/corr/abs-2008-03091,zuzic2022universally,ghaffari2021universally,hopexpander2022}. However, the ``$\rho$-congested minor'' primitive introduced and extensively used in the novel distributed Laplacian solver~\cite{DBLP:journals/corr/abs-2012-15675} is out-of-reach from the current set of tools available in the low-congestion shortcut framework. We address this issue by introducing an analogous primitive called \emph{$\rho$-congested part-wise aggregation}, which greatly simplifies the interface used by \cite{DBLP:journals/corr/abs-2012-15675}. We then extend the low-congestion shortcut framework with new techniques that enables it to near-optimally solve this primitive: we provide both an algorithm that utilizes the very recent hop-constrained expander decompositions for shortcut construction~\cite{hopexpander2022} to solve the primitive in general graphs with a linear dependence on $\rho$, as well as a very simple algorithm with a quadratic $\rho$-dependence for bounded-treewidth graphs. Finally, we settle our original question in the positive by establishing that our new primitive can be readily used to accelerate the distributed Laplacian solver for non-worst-case topologies.

Specifically, we show our new techniques are sufficient to lift the existentially optimal algorithm~\cite{DBLP:journals/corr/abs-2012-15675} to a \emph{universally optimal} algorithm---modulo $n^{o(1)}$ factor inherent in the prior approach---for distributedly solving a Laplacian system, meaning that, \emph{for any topology}, our algorithm is essentially as fast as possible. In other words, for any graph, our algorithm almost matches the best possible (correct) algorithm for that graph. This result is unconditional in essentially all settings of interest (see \Cref{thm:laplacian-full-models} for details), but relies on conjectured improvements of current state-of-the-art constructions of low-congestion shortcuts to achieve unqualified universal optimality---like all other results in the area.

Furthermore, another concrete way of bypassing the $\widetilde{\Omega}(\sqrt{n} + D)$ lower bound, besides investigating non-worst-case families of graphs, is by enhancing the local communication network with a limited amount of \emph{global power}. Indeed, research concerning \emph{hybrid} networks was recently initiated in the realm of distributed algorithms~\cite{DBLP:conf/soda/AugustineHKSS20}, although networks combining different communication modes have already found numerous applications in real-life computing systems; as such, hybrid networks have been intensely studied in other areas of distributed computing (see \cite{CHEN201645,10.1145/1851182.1851222,KAR2018203}, and references therein). In this paper, we will enhance the standard $\congest$ model with the recently introduced \emph{node-capacitated clique} (henceforth $\ncc$)~\cite{DBLP:conf/spaa/AugustineGGHSKL19}. The latter model enables all-to-all communication, but with severe capacity restrictions for every node. The integration of these models will be referred to as the $\hybrid$ model for the rest of this work. This leads to the following central question:

\begin{quote}
\centering
\textit{Is there a faster distributed Laplacian solver in the $\hybrid$ model?}
\end{quote}

Our paper essentially settles this question by showing the same $\rho$-congested part-wise aggregation primitive can be efficiently solved in $\tilde{O}(\rho)$ rounds of $\ncc$, implying an almost optimal $n^{o(1)}$-round distributed algorithm for solving Laplacian systems in the $\hybrid$ model. A conceptual contribution of our approach is that we treat both $\congest$, Supported-$\congest$, and $\hybrid$ in a \emph{unified way} through the lens of the low-congestion shortcut framework, by designing our algorithm using high-level primitives and leaving the model-specific translations to the framework itself. We note that a similar unified view of PRAM (i.e., parallel) and $\congest$ (i.e., distributed) graph algorithms through the same lens has led to very recent breakthroughs on long-standing open problems for both of these settings~\cite{rozhon2022undirected}.

\subsection{Overview of our Contributions and Techniques}


The unifying thread and the main technical ingredient of our (almost) universally optimal distributed Laplacian solvers is a new fundamental communication primitive which we refer to as the \emph{congested part-wise aggregation problem}. Specifically, we develop near-optimal algorithms for solving this problem in the (Supported-)CONGEST and the NCC model (\Cref{section:congested-aggregation}), and then we utilize this primitive to develop almost universally optimal Laplacian solvers in \Cref{section:Laplacian}.

\subsubsection{The Congested Part-Wise Aggregation Problem}

To introduce the congested part-wise aggregation problem, let us first give some basic background. The aforementioned Ghaffari-Haeupler framework of low-congestion shortcuts revolves around the so-called \emph{part-wise aggregation problem} posed as follows: ``The graph is partitioned into \emph{disjoint} and individually-connected parts, and we need to compute some simple aggregate function for each part, e.g., the minimum of the values held by the nodes in a given part''~\citep{DBLP:conf/soda/GhaffariH16} (see \Cref{definition:part_wise_problem} for a formal definition). Importantly, it has been shown that this primitive can be solved efficiently in \emph{structured} topologies, and that many problems (including the MST, shortest path, min-cut, etc.) reduce to a small number of calls to a part-wise aggregation oracle, leading to universally optimal algorithms. Unfortunately, it is not clear how to reduce solving a Laplacian system to (a small number of) part-wise aggregation calls and in this paper, we primarily address this issue.

Our first technical contribution is to extend the framework of low-congestion shortcuts by studying a more general primitive: one that incorporates \emph{congestion} (of the input parts) into the underlying \emph{part-wise aggregation} instance. More precisely, unlike the standard part-wise aggregation problem, we allow each node to participate in up to $\rho \in \Z_{\geq 1}$ aggregation parts (see \Cref{definition:congested_part_wise_problem}). We later show that efficient solutions to this primitive leads to efficient distributed Laplacian solvers.

We first remark that a natural strategy for solving congested part-wise aggregation instances does not work: congested instances \emph{cannot}, in general, be directly reduced to a ``small'' collection of $1$-congested instances, thereby necessitating a more refined approach. To this end, our approach is based on ``lifting'' the underlying communication network $\overG$ into its $\rho$-\emph{layered version} $\hatG_{O(\rho)}$: every edge is replaced with a matching and every node with a $\rho$-clique. The importance of this transformation is that, as we show in \Cref{pa-layered-to-pa-congested}, the $\rho$-congested part-wise aggregation problem can be reduced to a $1$-congested instance on the $\rho$-layered graph (\Cref{sec:layered-graph}). This is first established under the assumption that individual parts correspond to simple paths, and then we extend our results to general parts by following \citet{DBLP:conf/stoc/HaeuplerWZ21}. In light of this reduction, we next focus on solving the $1$-congested part-wise aggregation instance on the layered graph.

As a warm-up, we treat graphs with bounded \emph{treewidth} $\tw(G)$ (\Cref{definition:treewidth}). It is known from \cite{DBLP:conf/wdag/HaeuplerIZ16} that on a graph $G$ with treewidth $\tw(G)$, a $1$-congested part-wise aggregation instance can be solved in $\tl{O}(\tw(G) D)$ rounds of CONGEST. Keeping this in mind, we first show that the treewidth of the $\rho$-layered graph $\hatG_{\rho}$ can only increase by a factor of $\rho$ compared to the original graph (\Cref{lemma:simulated_density}). Hence, we can solve $1$-congested instances in $\hatG_{O(\rho)}$ in $\tl{O}(\rho \tw(\overG) D)$ rounds (when the underlying network is $\hatG_{O(\rho)}$), which in turn allows us to solve $\rho$-congested instances on $\overG$ in $\tl{O}(\rho^2 \tw(G) D)$ time in $G$ (another $\rho$ factor is necessary to simulate $\hatG_{O(\rho)}$ in $\overG$). This positive result poses a natural question: can we achieve similar results on graphs with bounded \emph{minor density} $\delta(G)$ (\Cref{def:minor-density})? 
However, the answer to this question is negative: \emph{minor density} can blow up even for a $2$-layered planar graph (see \Cref{observation:minordensity}), making such a result impossible. 

Then, we look at arbitrary graphs $G$: it is known that $1$-congested part-wise aggregation instances can be solved in a number of rounds that is controlled by $\SQ(G)$, where $\SQ(G)$ is the \emph{shortcut quality} of $G$ (a certain graph parameter we formalize in \Cref{def:shortcut-quality}). Specifically, it can be solved in $\tl{O}(\SQ(G))$ rounds when the topology is known in advance\footnote{This model is also known as the \emph{supported} $\congest$. That is, $\congest$ under the assumption that the topology is known; see \Cref{sec:prel} for a formal description of the model. Our techniques also apply in the full generality of $\congest$, as we explain in the sequel.}~\cite{DBLP:conf/stoc/HaeuplerWZ21} and $\poly(\SQ(G)) \cdot n^{o(1)}$ in general CONGEST~\cite{hopexpander2022}. The shortcut quality parameter is significant because it was shown that many distributed problems (including the MST, shortest path, min-cut, and---Laplacian solving, as we show later) require $\tl{\Omega}(\SQ(G))$ rounds in CONGEST to be solved on $G$~\cite{DBLP:conf/stoc/HaeuplerWZ21}. Therefore, algorithms that have an upper bound close to $\SQ(G)$ are \emph{universally optimal}.



With the end goal of solving the $1$-congested part-wise aggregations on layered graphs $\hatG_\rho$ in time controlled by $\SQ(G)$, our main result established that \emph{the shortcut quality of the $\rho$-layered graph does not increase} (modulo polylogarithmic factors) as compared to the original graph (\Cref{theorem:Grho-small-shortcut-quality}). This has a plethora of important consequences: (1) when $\SQ(G) \le n^{o(1)}$, we can unconditionally solve $\rho$-congested part-wise aggregation instances in $\rho \cdot n^{o(1)}$ CONGEST rounds and (2) when the topology of $G$ is known, there exists a distributed algorithm which solves any $\rho$-congested part-wise aggregation problem in $\rho \cdot \widetilde{O}(\SQ(G))$ rounds. As a consequence of our general result, the shortcut quality of any $2$-layered planar graph is $\widetilde{O}(D)$ since it is known that the shortcut quality of a planar graph is $\tl{O}(D)$~\cite{DBLP:conf/soda/GhaffariH16}. This constitutes perhaps the most natural example of a graph whose minor density is very far from the shortcut quality; the only other example documented in the literature so far is that of \emph{expander graphs}.

Our proof proceeds by employing alternative characterizations of the shortcut quality in terms of certain communication tasks. Specifically, shortcut quality can be shown to be equal (modulo polylogarithmic factors) to the following two-player max-min game: the first (max) player chooses $k$ sources and $k$ sinks in the graph such that we can find $k$ node-disjoint paths matching the sources with the sinks; then the second (min) player finds the smallest so-called \emph{quality} $Q$ such that there exist $k$ paths matching the sources with the sinks with the path lengths being at most $Q$ and each edge of the underlying graph supporting at most $Q$ of second player's paths. This characterization allows us to compare the shortcut quality of $\hatG_{\rho}$ with $\overG$ as follows: take the worst-case (first player's) set of sources and sinks in $\hatG_\rho$. Project them to $\overG$ and note they have node congestion $\rho$ (due to the construction of $\hatG_{\rho}$). Then, we show we can decompose (i.e., partition) these set of sources and sinks into $\widetilde{O}(\rho)$ pairs of sub-sources and sub-sinks that are node-disjointly connectable in $G$. However, each such set enjoys paths of quality $\SQ(G)$, hence embedding each such pair in a separate layer of $\hatG_{\rho}$ shows that the shortcut quality of $\SQ(\hatG_\rho)$ is at most $\widetilde{O}(\SQ(\overG))$. Although this general approach improves over our result for treewidth-bounded graphs we previously described, our approach for the latter class of graphs is substantially simpler and more suited for potential practical applications.


\subsubsection{Almost Universally Optimal Laplacian Solvers}

First, we note that any distributed Laplacian solver that always correctly outputs an answer on a fixed graph $G$ must take at least $\tilde{\Omega}(\SQ(G))$ rounds, giving us a lower bound to compare ourselves with. Our refined lower bound uses the hardness result recently shown by \cite{DBLP:conf/stoc/HaeuplerWZ21} for the spanning connected subgraph problem, applicable for \emph{any} (i.e., non-worst-case) graph $G$. Specifically, we show that a Laplacian solver can be leveraged to solve the spanning connected subgraph problem, thereby substantially strengthening the lower bound in \cite{DBLP:journals/corr/abs-2012-15675}.
\begin{restatable}{proposition}{lowerb}
  \label{theorem:lower_bound}
  Consider a graph $\overline{G}$ with shortcut quality $\SQ(\overline{G})$. Then, solving a Laplacian system on $\overline{G}$ with $\eps \leq \frac{1}{2}$ requires $\widetilde{\Omega}(\SQ(\overline{G}))$ rounds in both $\congest$ and Supported-$\congest$ models.
\end{restatable}

On the upper-bound side, we utilize the congested part-wise aggregation primitive to improve and refine the Laplacian solver of \cite{DBLP:journals/corr/abs-2012-15675}, leading to a substantial improvement in the round complexity under \emph{structured} network topologies.

\begin{restatable}{theorem}{thmLaplacianFullModels}\label{thm:laplacian-full-models}
  Consider any $n$-node graph $G$ with shortcut quality $\SQ(G)$ and hop-diameter $D$. There exists a distributed Laplacian solver with error $\eps > 0$  with the following guarantees:
  \begin{itemize}
  \item In the Supported-$\congest$ model, it requires $n^{o(1)} \SQ(G) \log(1/\eps)$ rounds.
  \item In the $\congest$ model, it requires $n^{o(1)} \poly(\SQ(G)) \log(1/\eps)$ rounds.
  \item In the $\congest$ model on graphs with minor density $\delta$, it requires $n^{o(1)} \delta D \log(1/\eps)$ rounds.
  \end{itemize}
\end{restatable}


We note that the above algorithm is almost (up to inherent $n^{o(1)}$ factors) universally optimality for most settings of interest. Since it is (almost) matching the $\SQ(G)$-lower-bound, it is unconditionally universally optimal when the topology is known in advance (i.e., Supported-CONGEST). Furthermore, in standard CONGEST, we give almost universally optimal $D n^{o(1)} \log(1/\eps)$-round algorithms for topologies that include planar graphs, $n^{o(1)}$-\emph{genus} graphs, $n^{o(1)}$-treewidth graphs, excluded-minor graphs, since all of them are graphs with minor density $\delta(G) = n^{o(1)}$. Furthermore, for the realistic case of $D \le n^{o(1)}$, it holds for most networks of interest that $\SQ(G) \le n^{o(1)}$ (e.g., expanders, hop-constrained expanders, as well as all classes mentioned earlier), for which we get $n^{o(1)} \log(1/\eps)$-round solvers. Finally, the conjectured improvements of the state-of-the-art of almost-optimal low-congestion shortcut constructions would immediately lift our results to be unconditionally universally optimal in CONGEST. However, the issue is orthogonal and out-of-scope of this paper.

Furthermore, in $\hybrid$ we obtain an almost optimal complexity in \emph{general graphs}:

\begin{restatable}{theorem}{thmLaplacianHybrid}\label{thm:laplacian-hybrid}
Consider any $n$-node graph. There exists a distributed Laplacian solver in the $\hybrid$ model with round complexity $n^{o(1)} \log(1/\eps)$, where $\eps > 0$ is the error of the solver.
\end{restatable}

This implies a remarkably fast subroutine for solving a Laplacian system in $\hybrid$ under arbitrary topologies. As a result, we corroborate the observation that a very limited amount of global power can lead to substantially faster algorithms for certain optimization problems, supplementing a recent line of work~\cite{Censor-Hillel21:On,DBLP:conf/soda/AugustineHKSS20,10.1145/3382734.3405719,feldmann2020fast,censorhillel2020distance,gotte2020timeoptimal,Kuhn22:Routing,Coy22:Nearshortest}. Furthermore, our framework based on the congested part-wise aggregation problem allows for a unifying treatment of both (Supported-)$\congest$ and $\hybrid$, and we consider this to be an important conceptual contribution of our work. Indeed, as we previously explained, both of our accelerated Laplacian solvers rely on faster algorithms for solving the congested part-wise aggregation problem. In particular, for (Supported-)$\congest$ we have already described our approach in detail, while in the $\hybrid$ model we employ certain communication primitives developed in \cite{DBLP:conf/spaa/AugustineGGHSKL19} for dealing with congestion in part-wise aggregations. A byproduct of our results is that the framework of low-congestion shortcuts interacts particularly well with the $\hybrid$ model, as was also observed in~\cite{anagnostides2021deterministic}.


\subsection{Further Related Work}

Our main reference point is the recent Laplacian solver of \citet{DBLP:journals/corr/abs-2012-15675} with \emph{existentially} almost-optimal complexity of $n^{o(1)}(\sqrt{n} + D) \log(1/\eps)$ rounds, where $\eps > 0$ represents the error of the solver. Specifically, they devised several new ideas and techniques to circumvent certain issues which mostly relate to the bandwidth restrictions of the $\congest$ model; these building blocks, as well as the resulting Laplacian solver are revisited in our work to refine the performance of the solver. We are not aware of any previous research addressing this problem in the distributed context. On the other hand, the Laplacian paradigm has attracted a considerable amount of interest in the community of parallel algorithms. Most notably, we refer to \cite{DBLP:conf/stoc/PengS14,DBLP:journals/mst/BlellochGKMPT14}. These approaches in the $\pram$ model of parallel computing fail---at least without non-trivial modifications---to lead to a almost-optimal solver in the distributed context~\cite{DBLP:journals/corr/abs-2012-15675}.

In addition to being a problem of independent interest, solving Laplacian systems often leads to a plethora of very fast algorithms (albeit typically polynomially-away from being optimal) for other problems such as (exact) maximum flow~\cite{DBLP:conf/focs/Madry16}, min-cost flow~\cite{Axiotis21:Faster}, shortest paths with negative weights~\cite{DBLP:conf/soda/CohenMSV17}, etc. The recent distributed Laplacian solver~\cite{DBLP:journals/corr/abs-2012-15675} also contributed fast analogues of these algorithms in the distributed model. A natural question to ask is whether we can also use our techniques to make these algorithms work for more structured graphs. However, these algorithms rely on directed or exact shortest path computations, which currently represent a major barrier for shortcut-based approaches. Moreover, the same set of problems represent a barrier even for existentially-optimal approaches as the current state-of-the-art is a factor of $D^{1/4}$ away from achieving unqualified existential optimality~\cite{chechik2021single}.

Research concerning hybrid communication networks in distributed algorithms was recently initiated by \cite{DBLP:conf/soda/AugustineHKSS20}. Specifically, they investigated the power of a model which integrates the standard $\local$ model~\cite{10.1137/0221015} with the recently introduced node-capacitated clique ($\ncc$)~\cite{DBLP:conf/spaa/AugustineGGHSKL19}, focusing mostly on distance computation tasks. Several of their results were subsequently improved and strengthened in subsequent works~\cite{10.1145/3382734.3405719,censorhillel2020distance} under the same model of computation. In our work we consider a substantially weaker model, imposing a severe limitation on the communication over the ``local edges''. This particular variant has been already studied in some recent works for a variety of fundamental problems~\cite{feldmann2020fast,gotte2020timeoptimal}.

The $\ncc$ model, which captures the global network in all hybrid models studied thus far, was introduced in~\cite{DBLP:conf/spaa/AugustineGGHSKL19} partly to address the unrealistic power of the \emph{congested clique} ($\congclique$)~\cite{10.1145/777412.777428}. In the latter model each node can communicate \emph{concurrently and independently} with \emph{all} other nodes by $O(\log n)$-bit messages. In contrast, the $\ncc$ model allows communication with $O(\log n)$ (arbitrary) nodes per round. As a result, in the $\hybrid$ model and under a sparse local network, only $\widetilde{\Theta}(n)$ bits can be exchanged overall per round, whereas $\congclique$ allows for the exchange of up to $\widetilde{\Theta}(n^2)$ (distinct) bits. As evidence for the power of $\congclique$ we note that even slightly super-constant lower bounds would give new lower bounds in circuit complexity, as implied by a simulation argument in~\cite{DBLP:conf/podc/DruckerKO13}. 

\section{Preliminaries}
\label{sec:prel}

\paragraph{General notation} We denote with $[k] := \{ 1, 2, \ldots, k \}$. Graphs throughout this paper are undirected. The nodes and the edges of a given graph $G$ are denoted as $V(G)$ and $E(G)$, respectively. We also use $n := |V(G)|$ for brevity. The graphs are often weighted, in which case we assume (as is standard) that for all $e \in E(G), \vec{w}(e) \in \{1, 2, \dots, \poly(n)\}$. We will denote the hop-diameter of a graph $G$ with $D(G)$ (the hop-diameter ignores weights). Moreover, we use $A \uplus B$ to denote the multiset union, i.e., each element is repeated according to its multiplicity; this operation corresponds to disjoint unions when $A \cap B = \emptyset$.



\paragraph{Communication models} The communication network consists of a set of $\overline{n}$ entities with $[\overline{n}] := \{1, 2, \dots, \overline{n}\}$ being the set of their IDs, and a local communication \emph{topology} given by a graph $\overline{G}$.\footnote{To avoid any possible confusion we point out that, for consistency with the nomenclature of \citet{DBLP:journals/corr/abs-2012-15675}, we 
  henceforth reserve $\overline{G}$ to denote the underlying \emph{communication network}, while $G$ is used in statements regarding arbitrary graphs.} We define $D := D(\overG)$ to be the (hop-)diameter of the underlying network. At the beginning, each node knows its own unique $O(\log \overn)$-bit identifier as well as the weights of the incident edges. Communication occurs in \emph{synchronous rounds}, and in every round nodes have unlimited computational power to process the information they possess. We will consider models with both \emph{local} and \emph{global} communication modes.

  The \emph{local} communication mode will be modeled with the \emph{CONGEST model}~\cite{peleg2000distributed} and \emph{Supported-CONGEST model}~\cite{Schmid2013}, for which in each round every node can exchange an $O(\log \overline{n})$-bit message with each of its neighbors in $\overline{G}$ via the \emph{local} edges. In the (standard) $\congest$ model, each node $v \in V(\overG)$ initially only knows the identifiers of each node in $v$'s own neighborhood, but has no further knowledge about the topology of the graph. On the other hand, in the Supported-CONGEST model, all nodes know the entire topology of $\overG$ upfront, but not the input.

  The \emph{global} communication mode will be modeled using \emph{NCC}~\cite{DBLP:conf/spaa/AugustineGGHSKL19}, for which in each round every node can exchange $O(\log \overline{n})$-bit messages with $O(\log \overline{n})$ arbitrary nodes via \emph{global} edges. If the capacity of some channel is exceeded, i.e., too many messages are sent to the same node, it will only receive an \emph{arbitrary} (potentially adversarially selected) subset of the information based on the capacity of the network; the rest of the messages are dropped. In this context, we will let $\hybrid$ be the integration of $\congest$ and $\ncc$ (i.e., nodes have both a \emph{local} and a \emph{global} communication mode at their disposal).

  The performance of a distributed algorithm will be measured in terms of its \emph{round complexity}---the number of rounds required so that every node knows its part of the output. For randomized algorithms it will suffice to reach the desired state with high probability.\footnote{We say that an event holds with high probability if it occurs with probability at least $1 - 1/n^c$ for a (freely choosable) constant $c > 0$.} We will assume throughout this work that nodes have access to a common source of randomness; this comes without any essential loss of generality in our setting~\citep{10.1145/2767386.2767417}. When talking about a distributed algorithm for a specific problem (e.g., Laplacian solving, part-wise aggregation, etc.) we assume the input is appropriately \emph{distributedly stored} (i.e., each node will know its own part) and, upon termination, it will be required that the output is appropriately distributedly stored. The appropriate way to distributedly store the input and output will be explained in the problem definition.
 
\paragraph{Low-Congestion Shortcuts} A recurring scenario in distributed algorithms for global problems (e.g.\ MST) boils down to solving the following part-wise aggregation problem:

\begin{definition}[Part-Wise Aggregation Problem]
    \label{definition:part_wise_problem}
    Consider an $n$-node graph $G$ whose node set $V(G)$ is \emph{partitioned} into $k$ (disjoint) parts $P_1 \uplus \dots \uplus P_k \subseteq V(G)$ such that each induced subgraph $G[P_i]$ is \emph{connected}. In the \emph{part-wise aggregation} problem, each node $v \in V$ is given its part-ID (if any) and an $O(\log n)$-bit value $\vec{x}(v)$ as input. The goal is that, for every part $P_i$, all nodes in $P_i$ learn the part-wise aggregate $\bigoplus_{w \in P_i} \vec{x}(w)$, where $\bigoplus$ is an arbitrary pre-defined \emph{aggregation function}.
\end{definition}
%
%
Throughout this paper, we will assume that the aggregation function $\bigoplus$ is commutative and associative (e.g. min, sum, logical-AND), although this is not strictly needed (e.g., see \cite{ghaffari2021universally}). To give a concrete example, in the context of Boruvka's algorithm for the MST problem, determining the minimum-weight outgoing edge for each part is an instance of a part-wise aggregation problem with $\bigoplus := \min$. To solve such problems, \citet{DBLP:conf/soda/GhaffariH16} introduced a natural combinatorial graph structure which they refer to as \emph{low-congestion shortcuts}.

\begin{definition}[Low-Congestion Shortcuts]\label{def:shortcuts}
  Consider a graph $G$ whose node set $V(G)$ is \emph{partitioned} into $k$ (disjoint) parts $P_1 \uplus \dots \uplus P_k \subseteq V(G)$ such that each induced subgraph $G[P_i]$ is \emph{connected}. A collection of subgraphs $H_1, \dots, H_k$ is a \emph{shortcut} of $G$ with \emph{congestion} $c$ and \emph{dilation} $d$ if the following properties hold: (i) the (hop) diameter of each subgraph $G[P_i] \cup H_i$ is at most $d$, and (ii) every edge is included in at most $c$ many of the subgraphs $H_i$. The quantity $Q = c + d$ will be referred to as the \emph{quality} of the shortcut.
\end{definition}

Importantly, a shortcut of quality $Q$ allows us to solve the part-wise aggregation problem in $\tl{O}(Q)$ rounds of $\congest$, as formalized below. For self-sufficiency, we include the proof in \Cref{appendix:prel}. 

\begin{restatable}{proposition}{shortcutparts}
\label{shortcuts-solving-pa}
  Suppose that $P_1, \ldots, P_k$ is any part-wise aggregation instance in a communication network $\overG$. Given a shortcut of quality $Q$, we can solve with high probability the part-wise aggregation problem in $\tl{O}(Q)$ $\congest$ rounds.
\end{restatable}


\paragraph{Shortcut Quality and Construction of Shortcuts} Shortcut quality, introduced below, is a fundamental graph parameter that has been proven to characterize the complexity of many important problems in distributed computing.

\begin{definition}\label{def:shortcut-quality}
  Given a graph $G = (V, E)$, we define the \emph{shortcut quality} $\SQ(G)$ of $G$ as the optimal (smallest) shortcut quality of the worst-case partition of $V$ into disjoint and connected parts $P_1 \uplus P_2 \uplus \ldots \uplus P_k \subseteq V$.
\end{definition}

For fundamental problems such as MST, SSSP, and Min-Cut any correct algorithm requires $\tl{\Omega}(\SQ(\overG))$ rounds on any network $\overG$, even if we allow randomized solutions and (non-trivial) approximation factors. In fact, this limitation holds even when the network topology $\overG$ is known to all nodes in advance~\citep{DBLP:conf/stoc/HaeuplerWZ21}. We remark that $\widetilde{\Omega}(D(\overG)) \le \SQ(\overG) \le O(D(\overG) + \sqrt{\overn})$, and the upper bound is known to be tight in certain (pathological) worst-case graph instances. This explains the notorious (existential) $\widetilde{\Omega}(D + \sqrt{n})$ lower bound pervasive in distributed computing~\citep{10.1145/1993636.1993686}.

Moreover, \emph{assuming} fast distributed algorithms for constructing shortcuts of quality competitive with $\SQ(\overG)$, all of the aforementioned problems can be solved in $\tl{O}( \SQ(\overG))$ rounds~\citep{DBLP:conf/soda/GhaffariH16,zuzic2022universally,ghaffari2021universally}. However, the key issue here is the algorithmic construction of the shortcuts upon which the above papers rely. While there has been a lot of recent progress in this regard, current algorithms are quite complicated and have sub-optimal guarantees. We recall below these state-of-the-art $\SQ(\overG)$-competitive construction results.

\begin{theorem}\label{thm:big-boy-construction}
  There exists a distributed algorithm that, given any part-wise aggregation instance on any $\overn$-node graph $\overG$, computes with high probability a shortcut with the following guarantees:
  \begin{itemize}\setlength\itemsep{0em}
  \item In $\congest$, the shortcut has quality $\poly\!\left(\SQ(\overG)\right) \cdot \overline{n}^{o(1)}$ and the algorithm terminates in $\poly\!\left(\SQ(\overG)\right) \cdot \overline{n}^{o(1)}$ rounds~\citep{hopexpander2022}.
  \item In Supported-CONGEST, the shortcut has quality $\tl{O}(\SQ(\overG))$ and the algorithm terminates in $\tl{O}(\SQ(\overG))$ rounds~\citep{DBLP:conf/stoc/HaeuplerWZ21}.
  \end{itemize}
\end{theorem}

\paragraph{Universal Optimality} A distributed algorithm is said to be \emph{$\alpha$-universally optimal} if, on every network graph $\overG$, it is $\alpha$-competitive with the fastest correct algorithm on $\overG$~\citep{DBLP:conf/stoc/HaeuplerWZ21}. Even the existence of such algorithms is not at all clear as it would seem possible that vastly different algorithms are required to leverage the structure of different networks. Nevertheless, a remarkable consequence of \Cref{thm:big-boy-construction} is that in Supported-CONGEST we can design $\widetilde{O}(1)$-universally optimal algorithms for many fundamental optimization problems. Moreover, efficient shortcut construction is the only obstacle towards achieving these results in the full generality of $\congest$, which is an issue orthogonal and out of scope for this paper. Still, the aforementioned results are sufficient to design $\overline{n}^{o(1)}$-universally optimal algorithms on graphs that have shortcut quality $\SQ(\overG) = \overline{n}^{o(1)}$, as it is arguably the case in most networks of practical interest.



\paragraph{Graphs Excluding Dense Minors} It turns out that the crucial issue of efficient shortcut construction can be resolved with a near-optimal, simple, and even deterministic algorithm for the rich class of graphs with \emph{bounded minor density}. Formally, let us first recall the following definition.\footnote{See the first part of \Cref{definition:low_congestion-minor} for a formal description of a minor.}
%
%
%

\begin{definition}[Minor Density]\label{def:minor-density}
The \emph{minor density} $\delta(G)$ of a graph $G$ is defined as 
\begin{equation*}
    \delta(G) = \max \left\{ \frac{|E'|}{|V'|} : H = (V', E') \textrm{ is a minor of $G$} \right\}.
\end{equation*}
\end{definition}
It should be noted that $\delta(G) = \widetilde{\Theta}(r(G))$, where $r(G)$ is the \emph{complete-graph minor size}, i.e., $r(G) = \max \{r : K_r \textrm{ is a minor of $G$} \}$~\citep{thomason_1984,THOMASON2001318}. Furthermore, any family of graphs closed under taking minors (such as planar graphs) has a constant minor density. For such graphs, \citet{DBLP:journals/corr/abs-2008-03091} established efficient shortcut construction: 
\begin{theorem}[\cite{DBLP:journals/corr/abs-2008-03091}]
    \label{theorem:quality-existence}
    Any graph $G$ with hop-diameter $D$ and minor density $\delta(G)$ admits shortcuts of quality $\tl{O}(\delta D)$, which can be constructed with high probability in $\tl{O}(\delta D)$ rounds of $\congest$.
\end{theorem}

The (linear) dependency on the minor density is existentially optimal~\citep[Lemma 3.2]{DBLP:journals/corr/abs-2008-03091}. It should be noted that, in the context of \Cref{theorem:quality-existence}, there is also a deterministic distributed algorithm with a slightly worse guarantee~\citep{DBLP:journals/corr/abs-2008-03091}. Some of our results apply for communication networks with \emph{bounded treewidth}, so let us recall the following definition.

\begin{definition}[Tree Decomposition and Treewidth]
    \label{definition:treewidth}
    A \emph{tree decomposition} of a graph $G$ is a tree $T$ with tree-nodes $X_1, \dots, X_k$, where each $X_i$ is a subset of $V(G)$ satisfying the following properties:
    \begin{enumerate}
        \item $ V = \bigcup_{i=1}^k X_i$;
        \item For any node $u \in V(G)$, the tree-nodes containing $u$ form a connected subtree of $T$;
        \item For every edge $\{u, v\} \in E(G)$, there exists a tree-node $X_i$ which contains both $u$ and $v$.
    \end{enumerate}
    The \emph{width} $w$ of the tree decomposition is defined as $w := \max_{i \in [k]} |X_i| - 1$. Moreover, the \emph{treewidth} $\tw(G)$ of $G$ is defined as the minimum of the width among all possible tree decompositions of $G$.
\end{definition}

Bounded-treewidth graphs inherit all of the nice properties guaranteed by \Cref{theorem:quality-existence}, as implied by the following well-known fact.

\begin{fact}
    \label{fact:minor_density} 
    For any graph $G$, $\delta(G) \leq \tw(G)$.
\end{fact}



\section{The Congested Part-Wise Aggregation Problem}
\label{section:congested-aggregation}


This section is concerned with a \emph{congested} generalization of the standard part-wise aggregation problem (\Cref{definition:part_wise_problem}), formally introduced below.

\begin{definition}[Congested Part-Wise Aggregation Problem]\label{definition:congested_part_wise_problem}
  Consider an $n$-node graph $G$ with a collection of $k$ subsets of nodes $P_1, \ldots, P_k \subseteq V(G)$ called \emph{parts} such that each induced subgraph $G[P_i]$ is \emph{connected} and each node $v \in V(G)$ is contained in at most $\rho \in \mathbb{Z}_{\ge 1}$ many parts, i.e., $\forall v \in V(G)\ \ |\{ i : P_i \ni v \}| \le \rho$. In the \emph{$\rho$-congested part-wise aggregation} problem, each node $v$ is given the following as input: for each part $P_i \ni v$ node $v$ knows the part-ID $i$ and an $O(\log n)$-bit part-specific value $\vec{x}_i(v)$. The goal is that, for each part $P_i$, all nodes in $P_i$ learn the part-wise aggregate $\bigoplus_{w \in P_i} \vec{x}_i(w)$, where $\bigoplus$ is an arbitrary pre-defined \emph{aggregation function}.
\end{definition}

This congested generalization of the standard part-wise aggregation problem that we study in this section turns out to be a central ingredient in our refined Laplacian solver; this is further explained in \Cref{section:Laplacian}. 
The remainder of this section is organized as follows. In \Cref{subsection:congest} we establish near-optimal algorithms for solving congested part-wise aggregations in $\congest$, which is also the main focus of this section. We conclude by pointing out the construction for $\ncc$ in \Cref{subsection:ncc}.

\subsection{Solving Congested Instances in the CONGEST Model}
\label{subsection:congest}

The first natural strategy for solving the $\rho$-congested part-wise aggregation problem of \Cref{definition:congested_part_wise_problem} is through a reduction to $\poly(\rho)$ $1$-congested instances. However, this approach immediately fails even if we allow $\rho = 2$. Indeed, there exist congested part-wise aggregation instances for which every two (distinct) parts share a common node, even when $\rho = 2$, leading to the following observation.

\begin{observation}
    For an infinite family of values $\overline{n}$, there exists an $\overline{n}$-node planar graph $\overline{G}$ and a $2$-congested part-wise aggregation instance $\cI$ with $k = \Theta(\sqrt{\overline{n}})$ parts such that reducing $\cI$ to the union of $k'$ $1$-congested part-wise aggregation instances on $\overline{G}$ requires $k' = \Omega(\sqrt{\overline{n}})$.
\end{observation}
\begin{wrapfigure}{R}{8cm}
  \centering
    \includegraphics[scale=0.45]{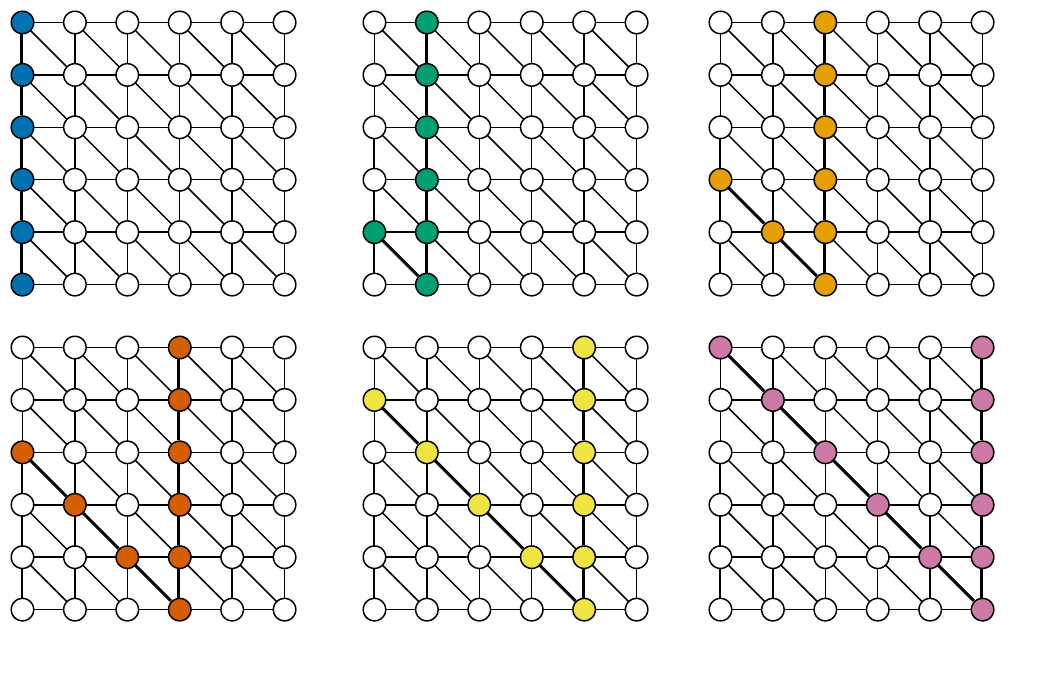}
    \caption{A $2$-congested part-wise aggregation problem on a $6 \times 6$ grid (the instance immediately extends to a $\sqrt{\overline{n}} \times \sqrt{\overline{n}}$ topology). Different colors highlight different parts of the instance.} 
    \label{fig:union}
\end{wrapfigure}
Such a pattern is illustrated in \Cref{fig:union}. As a result, directly employing a $1$-congested part-wise aggregation oracle is of little use since it would introduce an overhead depending on the number of parts. In light of this, we develop a more refined approach that leverages what we refer to as the \emph{layered graph}. This concept is introduced in \Cref{sec:layered-graph}, where we show that the congested part-wise aggregation problem can be reduced to the $1$-congested part-wise aggregation problem in the layered graph. Then, we give an algorithm for the $\rho$-congested part-wise aggregation problem in treewidth-bounded graphs through a simple approach in \Cref{sec:treewith-partwise}, yielding an $\tl{O}(\rho^2 \tw(\overG) D)$-round algorithm. Finally, we show that the shortcut quality $\SQ$ of the $\rho$-layered graph does not increase (modulo polylogarithmic factors) as compared to the original graph (\Cref{theorem:Grho-small-shortcut-quality}). This implies a solution for $\rho$-congested part-wise aggregations in general graphs with a runtime with the optimal, linear, dependence on $\rho$, albeit at the cost of a more involved argument (\Cref{sec:general-partwise}, specifically \Cref{congested-pa-general-solving}).

\subsubsection{The Layered Graph}\label{sec:layered-graph}

Here we introduce the \emph{layered graph} $\hatG_{\rho}$ associated with the underlying graph $\overG$. Then, we reduce the problem of $\rho$-congested part-wise aggregation on $\overG$ to a $1$-congested instance on $\hatG_{O(\rho)}$.

\begin{wrapfigure}{R}{8cm}
    \centering
    \includegraphics[scale=0.4]{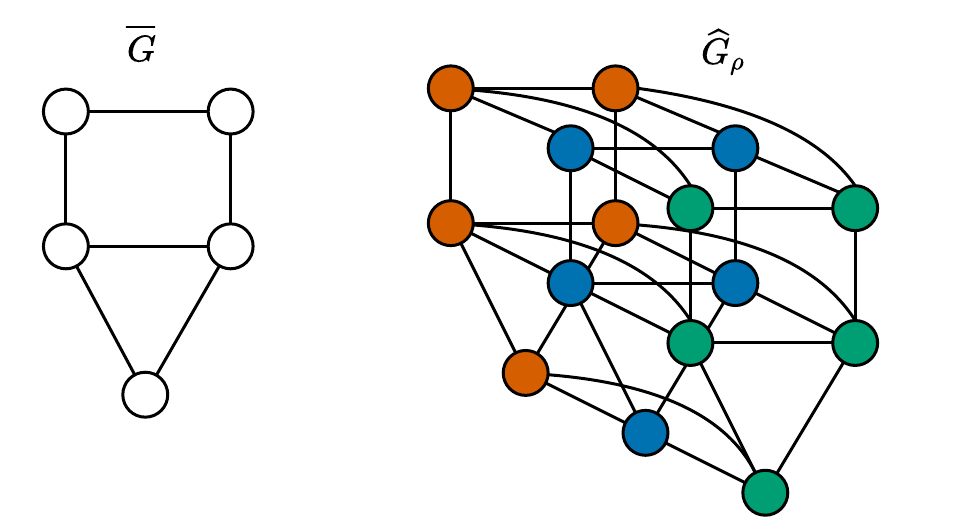}
    \caption{An example of a transformation from $\overline{G}$ to the layered graph $\widehat{G}_{\rho}$ with $\rho = 3$. We have highlighted with different colors different layers of the graph.}
    \label{fig:transformation}
\end{wrapfigure}

\paragraph{The Layered Graph} Consider an underlying network $\overline{G}$ and some $\rho \in \mathbb{Z}_{\ge 1}$, corresponding to the congestion parameter in \Cref{definition:congested_part_wise_problem}. The \emph{layered graph} $\hatG_\rho$ is constructed in the following way. First, we let $\hatG_\rho$ be a disjoint union of $\rho$ copies of $\overG$ (called \emph{layers}), namely $\overG_1, \overG_2, \ldots, \overG_{\rho}$. Each node $v \in V(\overG)$ is associated with its copies $v_1, v_2, \ldots, v_\rho \in V(\hatG_\rho)$. We also add an edge between each two copies that originate from the same node (i.e., we add a clique to $\hatG_\rho$ on the set of copies associated with the same node $v \in V(\overG)$); this construction is illustrated in \Cref{fig:transformation}. The layered graph induces a natural \emph{projection} operation $\pi : V(\hatG_{\rho}) \to V(\overG)$ which maps a copy $v_i$ to its original node $v = \pi(v_i)$. Furthermore, we often talk about simulating $\hatG_{\rho}$ in $\overG$, by which we mean that each node $v$ simulates---learns all the inputs and can generate all outputs---for its copies $v_1, \ldots, v_\rho$. Throughout this paper, we will assume that $\rho = \poly(\overline{n})$ so that any $O(\log n)$-bit message on $\hatG_{\rho}$ can be sent within $O(1)$ rounds in $\overG$; this also keeps the $\tl{O}$-notation well-defined.

The main goal of this section is to establish that the $\rho$-congested part-wise aggregation problem on $\overG$ can be reduced to a $1$-congested instance on $\widehat{G}_{O(\rho)}$, as formalized below.

\begin{restatable}[Unrestricted Congested Part-Wise Aggregation]{lemma}{paLayeredToPaCongested}\label{pa-layered-to-pa-congested}
  Let $\overG$ be an $\overn$-node graph and let $\mathbb{Z}_{\ge 1} \ni \rho \le \poly(\overn)$. Suppose that any ($1$-congested) part-wise aggregation on $\hatG_{O(\rho)}$ can be solved with a $\tau$-round CONGEST algorithm on $\hatG_{O(\rho)}$. Then, there exists an $\tl{O}(\rho \cdot \tau)$-round CONGEST algorithm on $\overG$ that solves any $\rho$-congested part-wise aggregation instance on $\overG$.
\end{restatable}

The remainder of this section is dedicated to the proof of this result. We first point out that any $\congest$ algorithm on $\widehat{G}_{\rho}$ can be simulated with only a $\rho$ multiplicative overhead in the round complexity (see \Cref{appendix:proof-3}).

\begin{restatable}[Simulating $\hatG_{\rho}$ in $\overG$]{lemma}{simGrho}
  \label{lemma:simulating-Grho}
  For any $\overG$ and any $\mathbb{Z}_{\ge 1} \ni \rho \le \poly(\overn)$, we can simulate any $\tau$-round CONGEST algorithm on $\hatG_\rho$ with a $(\rho \cdot \tau)$-round CONGEST algorithm on $\overG$.
\end{restatable}

Furthermore, we will use a folklore result showing how to color a (multi)graph of maximum degree $\Delta$ in $O(\Delta)$ colors in $O(\log n)$ rounds of CONGEST. By multigraph here we simply mean that there can be multiple parallel edges between the same pair of nodes, and every such edge can carry an independent message per round. To keep the paper self-contained we provide a short sketch of the proof in \Cref{appendix:proof-3}.

\begin{restatable}[Folklore, \cite{johansson1999simple}]{fact}{Johan}
\label{fact:fast-distributed-edge-coloring}
  Given a (multi)graph $G$ with $n$ nodes and maximum degree $\Delta \le \poly(n)$, there exists a randomized CONGEST algorithm that colors the edges of $G$ with $O(\Delta)$ colors and completes in $O(\log n)$ rounds, with high probability. The coloring is proper, i.e., two edges that share an endpoint are assigned a different color.
\end{restatable}

Now we are ready to prove a version of our main reduction (\Cref{pa-layered-to-pa-congested}), but with the slightly twist that we restrict each part of the $\rho$-congested part-wise aggregation problem to be a simple path. 
This restriction will be removed later.

\begin{lemma}[Path-Restricted Congested Part-Wise Aggregation]
    \label{lemma:path-restr}
  Let $\overG$ be a $\overn$-node graph and let $\mathbb{Z}_{\ge 1} \ni \rho \le \poly(\overn)$. Suppose that there exists a $\tau$-round CONGEST algorithm solving the ($1$-congested) part-wise aggregation on $\hatG_{O(\rho)}$. Then, there exists an $\tl{O}(\rho \cdot \tau)$-round CONGEST algorithm on $\overG$ that solves any $\rho$-congested part-wise aggregation instance on $\overG$ when each part is \emph{restricted to be a simple path}\footnote{I.e., there exists a simple path traversing all the nodes of the part, and each node knows the corresponding incident edges of that path.} (nodes are not repeated in simple paths).
\end{lemma}
\begin{proof}
  Let $\calP = \{ P_1, P_2, \ldots, P_k \}$ be subsets of nodes in $\overG$ comprising the parts of some $\rho$-congested part-wise aggregation on $\overG$. We will construct paths $\calP' = \{ P'_1, P'_2, \ldots, P'_k \}$ in $\hatG_{O(\rho)}$ in a way that solving a part-wise aggregation on $\calP'$ corresponds to solving a $\rho$-congested part-wise aggregation on $\calP$.

Let $E_i$ be the set of edges of $\overG$ comprising the simple path traversing all the nodes in $P_i$, and consider the graph $G' := (V(\overG), \biguplus_{i=1}^k E_i)$. 
First, we observe that the degree of any node in $v \in V(G') = V(\overG)$ is at most $2 \rho$ since at most $\rho$ many parts contain $v$ and each part contributes at most $2$ to the degree (since $P_i$ is a simple path). Furthermore, we can simulate any $\psi$-round CONGEST algorithm on $G'$ with a $(\psi \cdot \rho)$-round CONGEST algorithm on $\overG$ as each edge $e \in E(\overG)$ appears at most $\rho$ times in $E(G')$ due to the part-wise aggregation instance being at most $\rho$-congested. Therefore, using \Cref{fact:fast-distributed-edge-coloring} we can distributedly color the edges of $G'$ into at most $O(\rho)$ colors in $O(\log n)$ CONGEST rounds on $G'$, which translates to $\tl{O}(\rho)$ CONGEST rounds on $\overG$. Suppose that the algorithm assigns a color $\vec{c}(e) \in \{ 1, \ldots, O(\rho) \}$ to each edge $e \in \biguplus_i E_i$.

We now construct $P'_i \subseteq \hatG_{O(\rho)}$ as follows: consider each edge $\{u, v\} \in E_i$ and add both $u_{\vec{c}(\{u, v\})}, v_{\vec{c}(\{u, v\})} \in V(\hatG_{O(\rho)})$ to $P'_i$ (i.e., the $\vec{c}(\{u,v\})$-th copy of both $u$ and $v$). By construction, $P'_i$ induces a connected subgraph and the projection $P'_i$ to $\overG$ is exactly $P_i$. Next, we invoke the ($1$-congested) part-wise aggregation $\tau$-round algorithm for $\{ \calP'_1, \ldots, \calP'_k \}$ on $\hatG_{\rho}$, which can be converted to an $\tl{O}(\tau \cdot \rho)$-round algorithm on $\overG$ (\Cref{lemma:simulating-Grho}). Thus, we obtain an $\tl{O}(\tau \cdot \rho)$-round CONGEST algorithm on $\overG$ which solves any path-restricted $\rho$-congested part-wise aggregation problem.
\end{proof}
Finally, our reduction in \Cref{pa-layered-to-pa-congested} follows by reformulating \citep[Lemma 7.2]{DBLP:conf/stoc/HaeuplerWZ21}, as we argue in \Cref{appendix:proof-3}.
%
%

\subsubsection{Treewidth-Bounded Graphs}
\label{sec:treewith-partwise}

Here we leverage the reduction we established in \Cref{pa-layered-to-pa-congested} to obtain a simple algorithm for solving the congested part-wise aggregation problem in treewidth-bounded graphs. The crucial observation is that the treewidth of the layered graph can only grow by a factor of $\rho$ compared to the treewidth of the underlying graph, as we show in \Cref{lemma:simulated_density}.

\begin{restatable}{claim}{triv}
  \label{claim:simulated_diameter}
  $D(\widehat{G}_{\rho}) \leq D(\overline{G}) + 1$.
\end{restatable}

\begin{lemma}
    \label{lemma:simulated_density}
    If the treewidth of $\overline{G}$ is $\tw(\overline{G})$, then $\tw(\widehat{G}_{\rho}) \leq \rho \tw(\overline{G}) + \rho - 1$.
\end{lemma}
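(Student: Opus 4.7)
The plan is to lift an optimal tree decomposition of $\overline{G}$ to a tree decomposition of $\widehat{G}_\rho$ by ``duplicating'' each vertex of a bag according to its congestion $\rho(\cdot)$. Concretely, let $T$ be a tree decomposition of $\overline{G}$ of width $\tw(\overline{G})$, with bags $X_1, \dots, X_\ell \subseteq V(\overline{G})$. Using the same tree skeleton $T$, I would define new bags
\[
    \widehat{X}_i := \set{u_j : u \in X_i,\ j \in [\rho(u)]} \subseteq V(\widehat{G}_\rho).
\]
The claim then reduces to verifying that $(T, \{\widehat{X}_i\})$ is a valid tree decomposition of $\widehat{G}_\rho$ and bounding the size of the bags.

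For the size bound, since $|X_i| \le \tw(\overline{G}) + 1$ and each original vertex contributes at most $\rho = \max_u \rho(u)$ copies, we have $|\widehat{X}_i| \le \rho(\tw(\overline{G}) + 1)$, giving width at most $\rho\tw(\overline{G}) + \rho - 1$, which suffices. For the three tree-decomposition axioms: every $u_j \in V(\widehat{G}_\rho)$ is covered because its ``parent'' $u$ lies in some $X_i$, so $u_j \in \widehat{X}_i$; every edge $\{u_i, v_j\} \in E(\widehat{G}_\rho)$ is covered because, by construction, such an edge exists only if $\{u,v\} \in E(\overline{G})$, and then any bag $X_\ell$ of $T$ containing both $u$ and $v$ yields a bag $\widehat{X}_\ell$ containing both $u_i$ and $v_j$; finally, the bags containing a fixed $u_j$ are precisely those $\widehat{X}_i$ with $u \in X_i$, which form a connected subtree of $T$ by the connectivity property of the original decomposition.

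I do not anticipate any real obstacle here; the transformation is engineered exactly so that tree decompositions pull back, and the only quantitative loss is the factor $\rho$ from duplicating vertices inside each bag. The bound $\rho\tw(\overline{G}) + \rho$ stated in the lemma is then immediate from the width calculation above.
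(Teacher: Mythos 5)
Your proposal is correct and follows essentially the same route as the paper: lift the optimal tree decomposition of $\overline{G}$ by replacing each vertex $u$ in a bag with its $\rho(u)$ copies, verify the three axioms, and bound the bag size by $\rho(\tw(\overline{G})+1)$. Nothing further is needed.
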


\begin{proof}
Consider a tree decomposition (in the sense of \Cref{definition:treewidth}) of $\overline{G}$ into tree-nodes $\{X_j\}_{j=1}^k$ such that the width of the decomposition satisfies $w = \tw(\overline{G})$. We will show that there exists a tree decomposition on the graph $\widehat{G}_{\rho}$ with width at most $\rho (w + 1) - 1$, which in turn will imply that $\tw(\widehat{G}_{\rho}) \leq \rho ( w + 1) -1 = \rho (\tw(\overline{G}) +1) - 1$. Indeed, consider the following sets:
\begin{equation*}
    \widehat{X}_j := \{ u_i : u \in X_j, i \in [\rho] \},
\end{equation*}
for all $j \in [k]$. In words, each node $V(\overG) \ni u \in X_j$ is replaced by all of its copies $u_i$ in $\widehat{X}_j$. Observe that, by construction, $|\widehat{X}_j| = \rho |X_j|$. Thus, it suffices to show that the collection of sets $\{\widehat{X}_j \}_{j=1}^k$ forms a legitimate tree decomposition. First, since $V(\overline{G}) \subseteq \bigcup_j X_j$, it follows that $V(\widehat{G}_{\rho}) \subseteq \bigcup \widehat{X}_j$. Moreover, consider any two sets $\widehat{X}_j, \widehat{X}_{\ell}$, both containing a node $u_i \in V(\widehat{G}_{\rho})$ for some $i \in [\rho]$. Then, we know that all the tree-nodes in the (unique) path between $X_j$ and $X_{\ell}$ based on the original tree decomposition include $u$ since $X_j$ and $X_{\ell}$ both include $u$ and $\{ X_j \}$ is a tree decomposition of $\overline{G}$. In turn, this implies that all the tree-nodes in the path between $\widehat{X}_j$ and $\widehat{X}_{\ell}$ also contain $u_i$. Thus, the tree-nodes containing $u_i$ form a connected subtree. Finally, we know that for every edge $\{u, v\} \in E(\overline{G})$ there exists a subset $X_j$ such that $u, v \in X_j$. Hence, we can infer that for every edge in $E(\widehat{G}_{\rho})$ there is a tree-node $\widehat{X}_j$ which includes both incident endpoints. As a result, we have constructed a tree decomposition in $\widehat{G}_{\rho}$ with width $\max_{j \in [k]} |\widehat{X}_j| - 1 \leq \rho (w + 1) - 1$.
\end{proof}

\begin{corollary}
    \label{corollary:treewidth}
  Let $\overG$ be an $\overn$-node communication network of diameter at most $D$ and treewidth $\tw(\overline{G})$. Then, we can solve with high probability any $\rho$-congested part-wise aggregation problem in $\overG$ within $\widetilde{O}(\rho^2 \cdot \tw(\overline{G}) \cdot D )$ rounds of $\congest$.
\end{corollary}
\begin{proof}
    First, we know from \Cref{lemma:simulated_density} that $\tw(\hatG_{\rho}) = O(\rho \tw(\overG))$, in turn implying that the minor density of $\widehat{G}_{\rho}$ can be bounded as $\delta(\hatG_{\rho}) \le \tw(\hatG_{\rho}) = O(\rho \tw(\overG))$ (\Cref{fact:minor_density}). Thus, \Cref{theorem:quality-existence} implies that $\widehat{G}_{\rho}$ admits shortcuts of quality $\widetilde{O}(\rho \tw(\overline{G}) D(\overline{G}))$, which can be additionally constructed in $\widetilde{O}(\rho \tw(\overline{G}) D(\overline{G}))$ rounds of communication on $\widehat{G}_{\rho}$. 
    Finally, we have shown in \Cref{pa-layered-to-pa-congested} that this is sufficient to solve any $\rho$-congested part-wise aggregation problem on $\overline{G}$ in $\widetilde{O}(\rho^2 \cdot tw(\overG) \cdot D(\overline{G}))$ rounds of $\congest$, concluding the proof.
\end{proof}

\paragraph{Minor Density in the Layered Graph} In light of \Cref{lemma:simulated_density}, a natural question is whether an analogous bound holds with respect to the minor density of the underlying graph; i.e., whether $\delta(\widehat{G}_{\rho}) = \poly(\rho) \delta({G})$. Such a result would be strictly stronger as it would apply to the broader class of graphs with bounded minor density, and would essentially lift all the results in~\citep{DBLP:journals/corr/abs-2008-03091}, such as \Cref{theorem:quality-existence}, to the node-congestion setting in a black-box manner. Unfortunately, this is not possible. 

Indeed, consider a $\sqrt{{n}} \times \sqrt{{n}}$ grid ${G}$---where $\sqrt{{n}}$ is assumed to be an integer---such that every node in the graph is $2$-congested. Then, it is clear that $\delta({G}) = \widetilde{O}(1)$ (since planar graphs have excluded minors). On the other hand, we claim that $\delta(\widehat{G}_{\rho}) = \Omega(\sqrt{{n}})$. To see this, denote by $(i, j)$ the node positioned in the $i$-th row and $j$-th column with respect to the original graph, and by $(i', j')$ the node positioned in the $i$-th row and $j$-th column of the "duplicate" layer, for $i, j \in [\sqrt{{n}}]$. Moreover, let $C_j = \{ (i, j) : i \in [\sqrt{{n}}] \}$ be the nodes comprising the $j$-th column of the original graph and $R_{i} = \{ (i', j') : j' \in [\sqrt{{n}}] \}$ be the nodes comprising the $i$-th row of the duplicate layer. Then, it follows that the minor graph induced by the connected components $R_1, \dots, R_{\sqrt{{n}}}, C_1, \dots, C_{\sqrt{{n}}}$ contains the complete bipartite graph $K_{\sqrt{{n}}, \sqrt{{n}}}$ as a subgraph (\Cref{fig:minordensity}). As a result, this implies that the minor density of $\widehat{G}_{\rho}$ is $\Omega(\sqrt{{n}})$.

\begin{observation}
    \label{observation:minordensity}
    There exists an $n$-node graph ${G}$ with minor density $\delta({G}) = \widetilde{O}(1)$, but its $2$-layered version $\widehat{G}_{2}$ has minor density $\delta(\widehat{G}_{2}) = \Omega(\sqrt{{n}})$.
\end{observation}

\begin{figure}[!ht]
  \centering
  \includegraphics[scale=0.65]{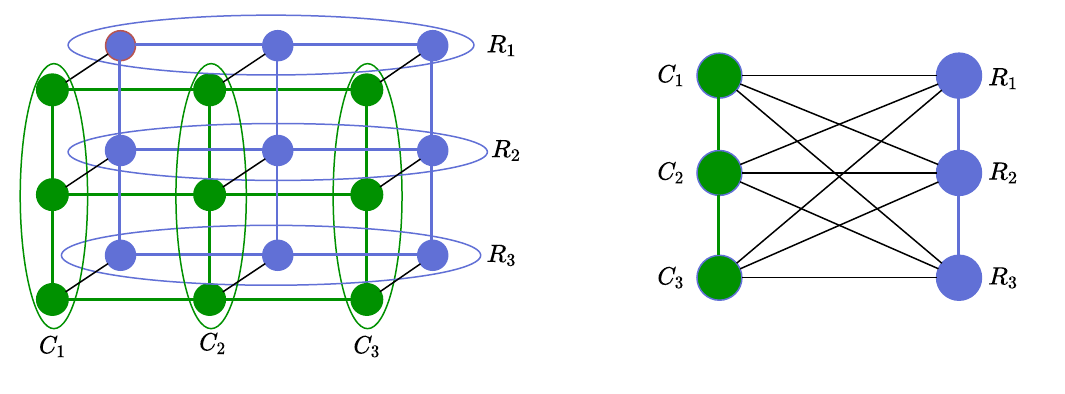}
  \caption{The layered graph $\widehat{G}_{\rho}$ corresponding to a $3 \times 3$ grid with every node having congestion $\rho = 2$ (leftmost image), and a minor of $\widehat{G}_{\rho}$ induced by the connected components $\{C_1, C_2, C_3, R_1, R_2, R_3\}$ (rightmost image).}
  \label{fig:minordensity}
\end{figure}

\subsubsection{General Graphs}
\label{sec:general-partwise}

We conclude with our main result of \Cref{subsection:congest}: a near-optimal distributed algorithm for solving the $\rho$-congested part-wise aggregation problem in general graphs. In light of our reduction in \Cref{pa-layered-to-pa-congested}, the technical crux is to control the degradation in the shortcut quality incurred by the transformation into the layered graph. Surprisingly, we show that the shortcut quality of $\hatG_{\rho}$ does not increase by more than a polylogarithmic factor even when the number of layers is polynomial:

\begin{restatable}{theorem}{theoremGrhoSmallShortcutQuality}\label{theorem:Grho-small-shortcut-quality}
  For any $\overn$-node graph $\overG$ and any $\mathbb{Z}_{\ge 1} \ni \rho \le \poly(\overn)$, we have that $\SQ(\hatG_{\rho}) = \tl{O}(\SQ(\overG))$.
\end{restatable}

This theorem improves over our previous result for treewidth-bounded graphs (\cref{lemma:simulated_density}) since the latter guarantee inevitably induces a linear factor of $\rho$ in the shortcut quality of $\hatG_{\rho}$. While this will not affect the asymptotic performance of the Laplacian solver, this improvement might prove to be important for future applications. Assuming that we have shown \Cref{theorem:Grho-small-shortcut-quality}, we can then utilize the efficient shortcut constructions given in \Cref{thm:big-boy-construction} to solve $\rho$-congested part-wise aggregations on any graph. 

\begin{restatable}{corollary}{congestedPaGeneralSolving}\label{congested-pa-general-solving}
  There exists a randomized distributed algorithm that, for any $\overn$-node graph $\overG$ and $\rho \in \mathbb{Z}_{\ge 1} \leq \poly(\overn)$, solves with high probability any $\rho$-congested part-wise aggregation instance on $\overG$ with the following guarantees:
  \begin{itemize}\setlength\itemsep{0em}
  \item In the $\congest$, the algorithm terminates in at most $\rho \cdot \poly\!\left(\SQ(\overG) \right) \cdot \overline{n}^{o(1)}$ rounds.
  \item In the $\congest$ model on graphs with minor density $\delta$, it requires $\tl{O}(\rho \cdot \delta \cdot D)$ rounds.
  \item In the Supported-$\congest$, the algorithm terminates in $\tl{O}(\rho \cdot \SQ(\overG))$ rounds.
  \end{itemize}
\end{restatable}
\begin{proof}
  We construct (virtually) the graph $\hatG_{O(\rho)}$. By \Cref{theorem:Grho-small-shortcut-quality}, we know that $\SQ(\hatG_{O(\rho)}) = \tl{O}(\SQ(\overG))$. Thus, we can construct shortcuts on $\hatG_{O(\rho)}$ in $\poly\!\left(\SQ(\overG)\right) \cdot \overline{n}^{o(1)}$, $\widetilde{O}(\delta D)$, and $\widetilde{O}(\SQ(\overG))$ rounds in general CONGEST (\Cref{thm:big-boy-construction}), Supported-CONGEST (\Cref{thm:big-boy-construction}), and CONGEST with minor density $\delta$ (\Cref{theorem:quality-existence}), respectively. Therefore, we can solve $1$-congested part-wise aggregation instances using those shortcuts in the required times using \Cref{shortcuts-solving-pa}. Since solving a $1$-congested part-wise aggregation on $\hatG_{O(\rho)}$ suffices to solve $\rho$-congested part-wise aggregations on $\overG$ with only an $\tl{O}(\rho)$ slowdown (\Cref{pa-layered-to-pa-congested}), the proof is complete.
\end{proof}

The rest of this subsection is dedicated to the proof of \Cref{theorem:Grho-small-shortcut-quality}. To argue about the shortcut quality of the layered graph, we need to develop several generalized notions of node connectivity. Pair node and any-to-any connectivity are essentially the multi- and single-commodity versions of node connectivity, respectively.

\paragraph{Pair Node Connectivity} Given a (multi)set of source-sink pairs $\calP = \{ (s_i, t_i) \}_{i=1}^k$ in $G$, we say that $\calP$ has pair node connectivity $\rho$ if there exist paths $P_1, \ldots, P_k$, with $s_i$ and $t_i$ being the endpoints of each $P_i$, such that every node $v \in V(G)$ is contained in at most $\rho$ many paths, i.e., for all $v$ we have $|\{ i : V(P_i) \ni v \}| \le \rho$. If $\calP$ has pair node connectivity $1$ we say that they are pair \emph{node-disjointly connectable}.

\paragraph{Any-to-Any Node Connectivity} Suppose that we are given multisets of $k$ \emph{sources} $S = \{ s_1, \ldots, s_k \}$ and $k$ \emph{sinks} $T = \{ t_1 \ldots, t_k \}$. We say that $(S, T)$ have any-to-any node connectivity $\rho$ if there is a permutation $\pi : \{ 1, \ldots, k \} \to \{ 1, \ldots, k \}$ such that the pairs $\{ (s_i, t_{\pi(i)}) \}_{i=1}^k$ have pair node connectivity $\rho$. If $(S, T)$ have any-to-any node connectivity $1$ we say they are any-to-any \emph{node-disjointly connectable}.

The following decomposition lemma states that two sets with any-to-any node connectivity $\rho$ can be decomposed into $\tl{O}(\rho)$ many pairs of subsets that are any-to-any node-disjointly connectable.

\begin{lemma}\label{lemma:partition-k-connectable-into-disjoint}
  Given a graph $G$, suppose we are given any two multisets of nodes $S \subseteq V(G)$ and $T \subseteq V(G)$ of size $k := |S| = |T|$ that have any-to-any node connectivity $\rho$. Then, we can partition $S = S_1 \uplus S_2 \uplus \ldots \uplus S_{O(\rho \log k)}$ and $T = T_1 \uplus T_2 \uplus \ldots T_{O(\rho \log k)}$ such that $|S_i| = |T_i|$ and $(S_i, T_i)$ are any-to-any node-disjointly connectable.
\end{lemma}
\begin{proof}
  Suppose that each edge in $G$ has infinite capacity while each node in $G$ has unit capacity. Then, let us connect a super-source $s$ to each node $x \in S$ with a unit-capacity edge, and a super-sink $t$ to each node $x \in T$ with a unit capacity edge. By assumption, we know that there exists a flow $f$ over $E(G)$ which sends $k$ units of flow from $s$ to $t$ with edge congestion $1$ and node congestion at most $\rho$. Therefore, the flow $f / \rho$ sending $k / \rho$ units of flow from $s$ to $t$ is a feasible solution of the maximum flow linear program with node constraints (i.e., it satisfies both edge and node capacity constraints). Since that linear program is integral (i.e., has an integrality gap of $1$), there exists an integral flow $f'$ which sends at least $k / \rho$ units of flow and satisfies both node and edge capacity restrictions. In other words, there exist at least $k / \rho$ node disjoint paths (with the exception of the endpoints) between $s$ and $t$. Let $S_1 \subseteq S$ ($T_1 \subseteq T$) be the set of nodes on these paths immediately following the super-source (just before the super-sink, respectively). Clearly, by construction, $(S_1, T_1)$ are any-to-any node-disjointly connectable. Finally, we define $S' := S \setminus S_1, T' := T \setminus T_1$ and proceed iteratively as above (producing $S_2, T_2$ instead of $S_1, T_1$). In each step, the size of $S'$ and $T'$ decreases by at least a multiplicative factor of $1 - 1/\rho$. Hence, $O(\rho \log k)$ steps suffice so that $S' = T' = \emptyset$. 
\end{proof}  


Next, we introduce two communication tasks that will be useful for characterizing the shortcut quality.
\paragraph{Multiple-Unicast Problem} Suppose that we are given $k$ source-sink pairs $\calP = \{ (s_i, t_i) \}_{i=1}^k$. The goal is to find the smallest possible \emph{completion time} $\tau$ such that there are $k$ paths $P_1, \ldots, P_k$ for which (1) the endpoints of each $P_i$ are exactly $s_i$ and $t_i$; (2) the dilation is $\tau$, i.e., each path $P_i$ has at most $\tau$ hops; and (3) the congestion is $\tau$, i.e., each edge $e \in E(G)$ is contained in at most $\tau$ many paths.

\paragraph{Any-to-Any-Cast Problem} Suppose we are given $k$ \emph{sources} $S = \{ s_1, \ldots, s_k \}$ and $k$ \emph{sinks} $T = \{ t_1 \ldots, t_k \}$. The goal is to find the smallest \emph{completion time} $\tau$ such that there exists a permutation $\pi : \{ 1, \ldots, k \} \to \{ 1, \ldots, k \}$ for which the multiple-unicast problem on $\{ ( s_i, t_{\pi(i)} ) \}_{i=1}^k$ has a completion time of at most $\tau$.


Finally, we now recall (a reinterpretation of) a result characterizing shortcut quality from ~\cite{haeupler2020network,DBLP:conf/stoc/HaeuplerWZ21}. Shortcut quality was originally defined as the smallest completion-time of the worst-case generalized (with respect to parts) multiple-unicast (i.e., multi-commodity) problem over an a \emph{pair} node-disjointly connectable instance (\Cref{def:shortcut-quality}). Using recent network coding gap results we can equivalently express shortcut quality as the smallest completion-time of the worst-case any-to-any-cast (i.e., single-commodity) problem over sources and sinks that are \emph{any-to-any} node-disjointly connectable. The formal statement follows.
\begin{theorem}[\citep{haeupler2020network,DBLP:conf/stoc/HaeuplerWZ21}] \label{theorem:sq-vs-worst-any-any-cast}
  Consider any graph $G$ and let $\tau$ be the worst-case completion time of any-to-any-cast problems taken over all any-to-any node-disjointly connectable sets $(S \subseteq V(G), T \subseteq V(G))$. Then, $\tau = \tl{\Theta}(\SQ(G))$.
\end{theorem}
\begin{proof}
  It was proven in \cite[Lemma 2.8 in the Full Version]{DBLP:conf/stoc/HaeuplerWZ21} that $\SQ(G)$ is, up to $\tl{\Theta}(1)$ factors, equal to the completion time $C$ of some multiple-unicast instance with respect to some source-sink pairs $\calP := \{ (s_i, t_i) \}_{i=1}^k$ that are pair node-disjointly connectable. We note that, since sources and sinks are disjoint, it follows that $k = \poly(n)$ and $O(\log n) = O(\log k)$. Furthermore, \citep{haeupler2020network} proved that there exists a sub-instance $\calP' = \{ (s'_i, t'_i)_{i=1}^{k'} \} \subseteq \calP$ such that $\SQ(G)$ is (up to $\tl{\Theta}(1)$ factors) equal to the completion time $\tau$ of the any-to-any-cast problem with respect to $(\{ s'_i \}_{i=1}^{k'}, \{ t'_i \}_{i=1}^{k'})$. One side of the claim is clear: for any sub-instance $\calP' \subseteq \calP$ we have that $\tau \le C$. The other direction is harder and we sketch its proof here using the terminology in~\citep{haeupler2020network}. By definition and strong duality, $\mathrm{Cut}_{\calP}(2C) = \mathrm{ConcurrentFlow}_{\calP}(2C) \le 1$. Furthermore, $\mathrm{Cut}_{\calP}(C/10) = \mathrm{Cut}_{\calP}(2C)/20 \le 1/10$. Hence, by \citep[Lemma 2.6]{DBLP:conf/stoc/HaeuplerWZ21} there is a sub-instance $\calP' \subseteq \calP$ with a moving cut of distance $\tau := \tl{\Omega}(C)$ and capacity less than $|\calP'|$. Therefore, 
  this proves that the completion time of any-to-any-cast problem on $\calP'$ is at least $\tau$. With this in mind, we have that $\tl{\Omega}(\SQ(G)) = \tl{\Omega}(C) = \tau \le C = \tl{\Theta}(SQ(G))$.

  Finally, since $\calP = \{ (s_i, t_i) \}_{i=1}^k$ was pair node-disjointly connectable, it follows from definition that the sub-instance $(\{ s'_i \}_{i=1}^{k'}, \{ t'_i \}_{i=1}^{k'})$ is any-to-any node-disjointly connectable. Therefore, $(\{ s'_i \}_{i=1}^{k'}, \{ t'_i \}_{i=1}^{k'})$ satisfies the constraints of this result and has completion-time $\tau = \tl{\Theta}(\SQ(G))$, as required. It is also clear that, by shortcut quality, any any-to-any node-disjointly connectable instance has completion time at most $\SQ(G)$ using the node-disjoint paths that witness the any-to-any node-disjointness as parts of the shortcut, making $(\{ s'_i \}_{i=1}^{k'}, \{ t'_i \}_{i=1}^{k'})$ the worst-case such instance (modulo polylogarithmic factors).
\end{proof}

We now combine all of the previous ingredients to prove the main result of this section.

\begin{proof}[Proof of \Cref{theorem:Grho-small-shortcut-quality}]
  Let $S \subseteq V(\hatG_{\rho})$ and $T \subseteq V(\hatG_{\rho})$ be any-to-any node-disjointly connectable sets such that the completion time of any-to-any-cast between $S$ and $T$ is $\tl{\Theta}(\SQ(\hatG_{\rho}))$ (\Cref{theorem:sq-vs-worst-any-any-cast}). Let $k := |S| = |T|$, and suppose that $S' := \biguplus_{s \in S} \{ \pi(s) \} \subseteq V(\overG)$ and $T' := \biguplus_{t \in T} \{ \pi(t) \}  \subseteq V(\overG)$ are the multisets induced by projecting $S$ and $T$ to $\overG$, respectively. By construction of $\hatG_{\rho}$, $S'$ and $T'$ have any-to-any node connectivity $\rho$; to see this, consider the witness paths disjointly connecting them in $\hatG_{\rho}$ and project them to $\overG$. Therefore, we can partition $S' = S'_1 \uplus \ldots \uplus S'_{O(\rho \log k)}$ and $T' = T'_1 \uplus \ldots \uplus T'_{O(\log k)}$ such that $|S'_i| = |T'_i|$ and $(S'_i, T'_i)$ are any-to-any node-disjointly connectable in $\overG$ (\Cref{lemma:partition-k-connectable-into-disjoint}).

  By definition of shortcut quality, for each $i \in \{1, \ldots, O(\rho \log k)\}$ there exists a set of paths $(P^i_j)_{j=1}^{|S'_i|}$ in $\overG$ between $S'_i$ and $T'_i$ of quality (i.e., both congestion and dilation) at most $\SQ(\overG)$. Then, we inject the first $O(\log k)$ collections of paths $(P^1_j)_j, (P^2_j)_j, \ldots, (P_j^{O(\log k)})_j$ to the first layer $\overG_1$ of $\hatG_{\rho}$; the second $O(\log k)$ collections to the second layer $\overG^2$, and so on, until we finally inject the last $O(\log k)$ collections to the last layer $\overG_\rho$. Note that only the paths on the same layer interact, so both the congestion and dilation after injecting all paths into $\hatG_{\rho}$ is $O(\SQ(\overG) \log k)$. Hence, the same applies for the shortcut quality. Finally, to solve the any-to-any-cast problem on $S$ and $T$ one might need to add an between-layer edge at the beginning and at the end since each injected path is restricted to some adversarially chosen layer. However, this only increases the congestion and dilation by $O(1)$. Hence, the completion time of any-to-any-cast between $S$ and $T$ is $\tl{O}(\SQ(\overG))$, implying that $\SQ(\hatG_{\rho}) = \tl{O}(\SQ(\overG))$.
\end{proof}

\subsection{The \texorpdfstring{$\ncc$}{NCC} Model}
\label{subsection:ncc}

We next turn our attention to the $\ncc$ model. We observe that the $\rho$-congested part-wise aggregation problem admits a solution in $\poly(\rho, \log \overline{n})$ rounds of $\ncc$. This is established after appropriately translating the communication primitives established for $\ncc$ in \cite{DBLP:conf/spaa/AugustineGGHSKL19}; the details are provided in \Cref{appendix:ncc}.

\begin{restatable}{lemma}{lemmaCongPaNcc}
    \label{lemma:conge_PA-ncc}
    Let $\overline{G}$ be an $\overline{n}$-node communication network. Then, we can solve with high probability any $\rho$-congested part-wise aggregation problem on $\overline{G}$ after $O(\rho + \log \overline{n})$ rounds of $\ncc$.
\end{restatable}

\section{Almost Universally Optimal Laplacian Solvers}
\label{section:Laplacian}

In this section we relate the congested part-wise aggregation problem we studied in the previous section with the Laplacian solver of \citet{DBLP:journals/corr/abs-2012-15675}. To present a unifying analysis for both $\congest$ and $\hybrid$, as well as for future applications and extensions, we analyze the distributed Laplacian solver under the following hypothesis.

\begin{assumption}
    \label{assumption:minor_aggregation}
    Consider a model of computation which incorporates $\congest$. We assume that we can solve with high probability any $\rho$-congested part-wise aggregation problem in $Q(\rho) = O(\rho^c Q(1))$ rounds, for some universal constant $c \geq 1$.
\end{assumption}

One of our crucial observations is that the performance of the Laplacian solver of \citet{DBLP:journals/corr/abs-2012-15675} can be parameterized in terms of the complexity of the congested part-wise aggregation problem. Indeed, we revisit and refine the main building blocks of their solver in \Cref{section:solver}, leading to the following result.

\begin{theorem}[Full Version in \Cref{theorem:laplacian-abstract-full}]
    \label{theorem:laplacian-abstract}
Consider a weighted $\overline{n}$-node graph $\overG$ for which \Cref{assumption:minor_aggregation} holds for some $Q(\rho) = O(\rho^c \mathcal{Q})$, where $c$ is a universal constant and $\mathcal{Q} = \mathcal{Q}(\overG)$ is some parameter. Then, we can solve any Laplacian system after $\overline{n}^{o(1)} \mathcal{Q} \log(1/\eps)$ rounds.
\end{theorem}

Combining this theorem with \Cref{congested-pa-general-solving} and \Cref{lemma:conge_PA-ncc} yields the following immediate consequences.
%
%

\thmLaplacianFullModels*
\thmLaplacianHybrid*




\paragraph{Lower Bound in Supported-\texorpdfstring{$\congest$}{CONGEST}}

Finally, we complement our positive results with a almost-matching lower bound on any graph $\overline{G}$, applicable even under the Supported-$\congest$ model, thereby establishing universal optimality up to an $\overline{n}^{o(1)}$ factor. Our reduction leverages the refined hardness result established in~\citep{DBLP:conf/stoc/HaeuplerWZ21} for the \emph{spanning connected subgraph} problem~\citep{10.1145/1993636.1993686}. In this problem a subgraph $\overline{H}$ of $\overline{G}$ is specified with nodes knowing all of the incident edges belonging to $\overline{H}$. The goal is to let every node learn whether $\overline{H}$ is connected and spans the entire network.

\begin{theorem}[\citep{DBLP:conf/stoc/HaeuplerWZ21}]
    \label{theorem:shortcutquality-lb}
  Let $\mathcal{A}$ be any algorithm which is always correct with probability\footnote{Note that \citet{DBLP:conf/stoc/HaeuplerWZ21} only proved this for always-correct algorithms with probability $1$, but the extension we claim here follows readily from their argument.} at least $\frac{2}{3}$ for the spanning connected subgraph problem, and $T(\overline{G}) = \max_{\mathcal{I}}T_{\mathcal{A}}(\mathcal{I}; \overline{G})$ be the worst-case round-complexity of $\mathcal{A}$ under $\overline{G}$. Then, 
  \begin{equation*}
      T(\overline{G}) = \widetilde{\Omega}(\textsc{ShortcutQuality}(\overline{G})).
  \end{equation*}
\end{theorem}

In this context, we show that a Laplacian solver can be leveraged to solve the spanning connected subgraph problem, leading to the following lower bound.






\lowerb*

This substantially strengthens the \emph{existential lower bound} in~\citep{DBLP:journals/corr/abs-2012-15675}, and deviates from their argument which is based on a reduction from the $s-t$ connectivity problem. The proof is deferred to \Cref{appendix:lb}.

\section{Conclusions}

We established almost universally optimal Laplacian solvers for both the (Supported-)$\congest$ and the $\hybrid$ model. One of our main technical contributions was to introduce and study a congested generalization of the standard part-wise aggregation problem, which we believe may find further applications beyond the Laplacian paradigm in the future. For example, one candidate problem would be to refine the distributed algorithm for max-flow due to \citet{10.1145/2767386.2767440}. We also hope that our accelerated Laplacian solvers will be used as a basic primitive for obtaining improved distributed algorithms for other fundamental optimization problems as well. Indeed, \citet{DBLP:journals/corr/abs-2012-15675} showed that the Laplacian paradigm can offer \emph{sublinear} and \emph{exact} distributed algorithms for problems such as max-flow, an objective which previously appeared elusive.

\printbibliography



\clearpage

\appendix

\section{The Laplacian Solver}
\label{section:solver}

In this section we describe the basic building blocks of the distributed Laplacian solver of~\citep{DBLP:journals/corr/abs-2012-15675}. Our goal will be to cast their guarantees within our more general framework, leading to the proof of \Cref{theorem:laplacian-abstract}. First, let us introduce some notation related to Laplacian systems.

\paragraph{The Laplacian Matrix} Consider a weighted undirected graph $G = (V, E, \vec{w} > 0)$. The \emph{Laplacian} of the graph $G$ is defined as 
\begin{equation*}
    \mathcal{L}(G)_{u, v} =
\begin{cases}
\sum_{\{u, z \} \in E} \vec{w}(u, z) & \textrm{If $u = v$}, \\
- \vec{w}(u, v) & \textrm{otherwise}.
\end{cases}
\end{equation*}

The Laplacian matrix of a graph is (i) \emph{symmetric} ($\mathcal{L}(G)^T = \mathcal{L}(G)$); (ii) \emph{positive semi-definite} ($\vec{x}^T \mathcal{L}(G) \vec{x} \geq 0$ for any $\vec{x}$); and (iii) \emph{weakly diagonally dominant} ($\mathcal{L}(G)_{u, u} \geq \sum_{v \neq u} |\mathcal{L}(G)_{u, v}|$).

\begin{definition}[Schur Complement]
    For a symmetric matrix $\mat{A} \in \mathbb{R}^{n \times n}$ and a partition of $[n]$ into $\mathcal{T}$ and $S$, permute the rows and columns of $\mat{A}$ such that 
    \begin{equation*}
        \mat{A} = \begin{bmatrix}
            \mat{A}_{[S, S]} & \mat{A}_{[S, \mathcal{T}]} \\
            \mat{A}_{[\mathcal{T}, S]} & \mat{A}_{[\mathcal{T}, \mathcal{T}]}
        \end{bmatrix}.
    \end{equation*}
    Then, the \emph{Schur compelement} of $\mat{A}$ onto $\mathcal{T}$ is defined as $\schur(\mat{A}, \mathcal{T}) := \mat{A}_{[\mathcal{T}, \mathcal{T}]} - \mat{A}_{[\mathcal{T}, S]} \mat{A}^{\dagger}_{[S, S]} \mat{A}_{[S, \mathcal{T}]}$, where $\mat{M}^{\dagger}$ denotes the \emph{Moore-Penrose pseudo-inverse} of matrix $\mat{M}$. For a graph $G$ and a subset $\mathcal{T} \subseteq V(G)$, we will write $\schur(G, \mathcal{T}) := \schur(\mathcal{L}(G), \mathcal{T})$.
\end{definition}
 
\paragraph{Notation} Consider two positive semi-definite matrices $\mat{A}, \mat{B} \in \mathbb{R}^{n \times n}$. For a vector $\vec{x} \in \mathbb{R}^n$ we define $\|\vec{x}\|_{\mat{A}} := \sqrt{\vec{x}^T \mat{A} \vec{x}}$ (Mahalanobis norm). We will write $\mat{A} \approxeps \mat{B}$ if $\exp(-\eps) \mat{A} \preceq \mat{B} \preceq \exp (\eps) \mat{A}$, where $\mat{A} \preceq \mat{B}$ if and only if the matrix $\mat{B} - \mat{A}$ is positive semi-definite. For an edge $e = \{u, v\}$, we will let $\vec{b}(e) := \mathbbm{1}_{u} - \mathbbm{1}_{v}$, where $\mathbbm{1}_{u} \in \mathbb{R}^n$ represents the characteristic vector of node $u$. For a graph $G$ with \emph{resistances} $\vec{r}(e)$, we define the \emph{leverage scores} as $\lev_G(e) := \vec{r}(e)^{-1} \vec{b}^T(e) \mathcal{L}(G)^{\dagger} \vec{b}(e)$. Note that $0 \leq \lev_G(e) \leq 1$.


\subsection{Low-Congestion Minors}

Here we introduce the concept of a \emph{low-congestion minor}, a central component in the distributed Laplacian solver of \citep{DBLP:journals/corr/abs-2012-15675}. 

\begin{definition}[\cite{DBLP:journals/corr/abs-2012-15675}]
    \label{definition:low_congestion-minor}
A graph $G$ is a \emph{minor} of $\overline{G}$ if the following properties hold:
\begin{enumerate}
    \item For every node $u^G \in V(G)$ there exists: 
    \begin{itemize}
        \item[(i)] A subset of nodes of $\overline{G}$, which is termed as a \emph{super-node}, $\supervertex(u^G)$, with a leader node $\ell(u^G) \in \supervertex(u^G)$;
        \item[(ii)] A connected subgraph of $\overline{G}$ on $\supervertex(u^G)$, for which we maintain a spanning tree $\tree(u^G)$.
    \end{itemize}
    \item There exists a mapping of the edges of $G$ onto edges of $\overline{G}$, or self-loops, such that for any $\{u^G, v^G\} \in E(G)$, the mapped edge $\{u, v\}$ satisfies $u \in \supervertex(u^G)$ and $v \in \supervertex(v^G)$.
\end{enumerate}
Moreover, we say that this minor $G$ has \emph{congestion} $\rho$, or $G$ is a $\rho$-minor, if:
\begin{enumerate}
    \item Every node $u \in \overline{G}$ is contained in at most $\rho$ super-nodes $\supervertex(u^G)$, for some $u^G \in V(G)$;
    \item Every edge of $\overline{G}$ appears as the image of an edge of $G$ or in one of the trees connecting super-nodes (i.e., $\tree(u^G)$ for some $u^G$) at most $\rho$ times.
\end{enumerate}
Finally, we say that $G$ is $\rho$\emph{-minor distributed} over $\overline{G}$ if every $u \in V(\overline{G})$ stores:
\begin{enumerate}
    \item All $u^{G} \in V(G)$ for which $u \in \supervertex(u^G)$;
    \item For every edge $e$ incident to $u$, (i) all the nodes $u^G$ for which $e \in \tree(u^G)$, and (ii) all edges $e^{G}$ that map to it.
\end{enumerate}
\end{definition}

We remark that the basis of \Cref{definition:low_congestion-minor} was the earlier concept of a \emph{distributed cluster graph} of \citet{10.1145/2767386.2767440}. The important connection is that the congested part-wise aggregation problem we introduced is the central ingredient that allows performing certain ``local'' operations on a graph $\rho$-minor distributed into the underlying communication network. The following lemma is a direct consequence of \Cref{definition:low_congestion-minor}.

\begin{lemma}
    \label{lemma:minor_aggregation}
Let $G = (V, E)$ be an $n$-node graph $\rho$-minor distributed into an $\overline{n}$-node communication network $\overline{G} = (\overline{V}, \overline{E})$ for which \Cref{assumption:minor_aggregation} holds for some $Q = Q(\rho)$. Then, we can perform with high probability the following operations in the $\ncc$ model, simultaneously for all $u^G \in V(G)$, within $O(Q(\rho))$ rounds:

\begin{enumerate}
    \item Every leader $\ell(u^G)$ sends an $O(\log \overline{n})$-bit message to all the nodes in $\supervertex(u^G)$;
    \item All the nodes in $\supervertex(u^G)$ compute an aggregation function on $O(\log \overline{n})$-bit inputs.
\end{enumerate}
\end{lemma}

\subsection{The Laplacian Building Blocks}

To keep the exposition reasonably self-contained, here we review the basic ingredients of the distributed Laplacian solver developed in \citep{DBLP:journals/corr/abs-2012-15675}. Our main goal is to extend the guarantees established in \citep{DBLP:journals/corr/abs-2012-15675} under \Cref{assumption:minor_aggregation}. Then, we will combine these pieces in \Cref{sub:everything} to complete the construction.

\subsubsection{Ultra-Sparsification}

As is standard in the Laplacian paradigm, we will require a preconditioner in the form of an \emph{ultra-sparsifier}. In particular, the following lemma is established in \Cref{appendix:ultra_sparsification}, and it is a refinement of~\citep[Lemma 4.9]{DBLP:journals/corr/abs-2012-15675}:

\begin{lemma}[Ultra-Sparsification] 
    \label{lemma:ultra_sparsification}
Consider an $n$-node $m$-edge graph $G$ which is $\rho$-minor distributed into an $\overline{n}$-node communication network $\overline{G}$ for which \Cref{assumption:minor_aggregation} holds for some $Q = Q(\rho)$. Then, $\ultraspars(G, k)$ takes as input a parameter $k$ and returns after $n^{o(1)} Q(\rho)$ rounds a graph $H$ such that 

\begin{enumerate}
    \item $H$ is a subgraph of $G$;
    \item $H$ has $n - 1 + m 2^{O(\sqrt{\log n \log \log n})}/k$ edges;
    \item $\mathcal{L}(G) \preceq \mathcal{L}(H) \preceq k \mathcal{L}(G)$.
\end{enumerate}
Moreover, the algorithm returns $\widehat{G}, \mat{Z}_1, \mat{Z}_2, C$ such that 

\begin{enumerate}
    \item $\widehat{G}$ $1$-minor distributes into $H$ such that $\widehat{G} = \schur(H, C)$, with $|C| = m 2^{O(\sqrt{\log n \log \log n})} /k$;
    \item The operators $\mat{Z}_1$ and $\mat{Z}_2$ can be evaluated in $O(Q(\rho) \log n)$ rounds, and are such that 
    \begin{equation*}
        \mathcal{L}(H)^{\dagger} = \mat{Z}_1^T 
        \begin{bmatrix}
            \mat{Z}_2 & 0 \\
            0 & \mathcal{L}(\widehat{G})^{\dagger}
        \end{bmatrix}
        \mat{Z}_1.
    \end{equation*}
\end{enumerate}
\end{lemma}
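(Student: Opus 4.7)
The plan is to follow the proof of \cite[Lemma 4.9]{DBLP:journals/corr/abs-2012-15675} essentially verbatim, reusing the combinatorial construction and showing that the distributed bookkeeping collapses to $Q(\rho)$ rounds under \Cref{assumption:minor_aggregation}. First I would separate the combinatorial output from the distributed cost. The construction proceeds in two phases: (i) a low-stretch spanning tree $T$ of $G$ is built via a distributed AKPW-style construction, giving average stretch $2^{O(\sqrt{\log n \log \log n})}$; (ii) each off-tree edge is sampled independently with a probability proportional to its (scaled) stretch/leverage score, and retained with appropriately rescaled weight to form $H = T \cup (\text{sampled edges})$. The three subgraph/edge-count/spectral properties of $H$ depend only on the probabilistic spectral-sparsification analysis plus the low-stretch-tree guarantee, so they transfer unchanged.

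Next I would track the distributed cost by auditing each primitive used by the original algorithm — distance/stretch computation along the spanning tree, sampling decisions, $\specspars$ subroutines, message exchanges between a node and its super-node leader — and observing that every one of them reduces to a constant number of broadcasts or aggregations on the $\rho$-minor $G$ of $\overline{G}$. These are exactly the operations that \Cref{assumption:minor_aggregation} costs at $Q(\rho)$ rounds. Substituting this bound in place of the $\widetilde{O}(\sqrt{n}+D)$ cost that appears throughout the original proof, and absorbing the polylogarithmic factors from the repeated sparsifier calls, together with the $2^{O(\sqrt{\log n \log \log n})}$ factor from the AKPW construction, into a single $n^{o(1)}$ prefactor, produces the advertised $O(n^{o(1)} Q(\rho))$ round complexity.

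Third I would handle the Schur-complement block. Since $H$ consists of the spanning tree $T$ plus $m \cdot 2^{O(\sqrt{\log n \log \log n})}/k$ additional edges, there is a canonical terminal set $C$ of size $|C| = m\cdot 2^{O(\sqrt{\log n \log \log n})}/k$ such that eliminating $V\setminus C$ by Gaussian elimination along the degree-one tree chains yields exactly $\widehat{G} = \schur(H,C)$, and $\widehat{G}$ $1$-minor distributes into $H$ because the eliminated vertices lie on disjoint tree paths. The operators $Z_1$ and $Z_2$ encode the forward/backward substitution together with the direct solve on the eliminated chain. Evaluating either of them reduces to $O(\log n)$ levels of aggregation on $G$ (one per layer of the elimination), each costing $Q(\rho)$ rounds by \Cref{assumption:minor_aggregation}, giving the claimed $O(Q(\rho)\log n)$ cost.

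The main obstacle is ensuring that the polynomial dependence of $Q$ on $\rho$ does not inflate the final round count beyond $n^{o(1)}\cdot Q(\rho)$: the recursive/iterated invocations implicit in the AKPW and sparsification stages nominally replace $\rho$ by slightly larger congestion parameters at each level. This is precisely what the composition clause $Q(\rho_1 \rho_2) \leq \rho_1^c\, Q(\rho_2)$ of \Cref{assumption:minor_aggregation} was formulated to control: each such blow-up contributes at most a $\rho^{O(1)}$ multiplicative factor, which is $n^{o(1)}$ in both concrete instantiations of the assumption (\Cref{lemma:minor_aggregation-congest} and \Cref{lemma:minor_aggregation-ncc}), and hence collapses into the $n^{o(1)}$ prefactor without affecting the stated complexity.
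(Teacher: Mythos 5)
Your proposal matches the paper's proof essentially step for step: it assembles the same three ingredients — the distributed AKPW low-stretch spanning tree (\Cref{lemma:overall_stretch}), sampling of off-tree edges by stretch (\Cref{lemma:sampling-stretch}), and the Blelloch et al.\ parallel elimination of the tree chains yielding $\widehat{G}=\schur(H,C)$ with the operators $Z_1,Z_2$ (\Cref{lemma:parallel_contraction}) — with each distributed primitive costed at $Q(\rho)$ rounds via \Cref{assumption:minor_aggregation}. The argument is correct and no further comparison is needed.
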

Let us briefly review the pieces required for this lemma. First, we need the distributed implementation of the low-stretch spanning tree algorithm of \citet{10.1137/S0097539792224474} which is due to \citet{10.1145/2767386.2767440}. Then, this spanning tree is augmented with off-tree edges based on the sampling procedure of \citet{5671167}, leading to a graph with a spectral approximation guarantee with respect to the original graph. Finally, the parallel elimination procedure of \citet{DBLP:journals/mst/BlellochGKMPT14} is used to perform a series of contractions, leading to a subset with size analogous to the number of off-tree edges. We revisit these steps in detail in \Cref{appendix:ultra_sparsification}.

\subsubsection{Sparsified Cholesky}

The next building block is the sparsified Cholesky algorithm of \citet{10.1145/2897518.2897640}, which manages to effectively eliminate in every iteration a non-negligible fraction of the nodes. In the distributed context, we state the following lemma which is a refinement of \cite[Lemma 4.10]{DBLP:journals/corr/abs-2012-15675}.

\begin{lemma}[Sparsified Cholesky]
    \label{lemma:eliminate}
    Let $G$ be an $n$-node graph $\rho$-minor distributed into a communication network $\overline{G}$ for which \Cref{assumption:minor_aggregation} holds for some $Q = Q(\rho)$. Then, for a given parameter $d$ and error $\eps$, the algorithm $\eliminate(G, d, \eps)$ runs in $O(Q(\rho)(\log^{c} n /\eps^c)^d)$ rounds, where $c$ represents some universal constant, and returns a subset $\mathcal{T} \subset V(G)$ and access to operators $\mat{Z}_1$ and $\mat{Z}_2$ such that 
    \begin{enumerate}
        \item $|\mathcal{T}| \leq (49/50)^d |V(G)|$;
        \item The operators $\mat{Z}_1, \mat{Z}_1^T, \mat{Z}_2$ can be applied to vectors in $O( Q(\rho)(\log^{c} n /\eps^c)^d)$ rounds;
        \item 
        \begin{equation*}
            (1 - \eps)^d \mathcal{L}(G)^{\dagger} \preceq \mat{Z}_1^T 
            \begin{bmatrix}
                \mat{Z}_2 & 0 \\
                0 & \schur(G, \mathcal{T})^{\dagger}
            \end{bmatrix}
            \mat{Z}_1 \preceq (1 + \eps)^d \mathcal{L}(G)^{\dagger}.
        \end{equation*}
    \end{enumerate}
\end{lemma}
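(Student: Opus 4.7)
The plan is to adapt the distributed sparsified Cholesky framework of Forster et al.~\cite{DBLP:journals/corr/abs-2012-15675}, replacing their concrete $\widetilde{O}(\sqrt{n}+D)$ aggregation cost with the abstract bound $Q(\rho)$ supplied by \Cref{assumption:minor_aggregation}. At the heart of the argument is a single elimination round, borrowed from the sparsified Cholesky algorithm of Kyng--Sachdeva~\cite{10.1145/2897518.2897640}: identify a subset $F$ of size at least $|V(G)|/50$ consisting of ``almost independent'' low-effective-degree vertices, and simultaneously eliminate each $v \in F$ by replacing the star around $v$ with a leverage-score-sampled clique on $N(v)$. The resulting graph $\widetilde{G}$ is a $(1\pm\epsilon)$-spectral approximation of $\schur(G,V(G)\setminus F)$ with $\widetilde{O}(|V(G)|/\epsilon^2)$ edges, and each new edge naturally inherits a minor embedding into $\overline{G}$ of congestion $\rho' \leq \rho\cdot \polylog(n)/\epsilon^{O(1)}$.

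Next I would verify that a single step runs in $O(Q(\rho)\cdot \polylog(n)/\epsilon^{O(1)})$ rounds. Each sub-task---computing super-node degrees, drawing the random marks that define $F$, broadcasting the sampled-edge bits to the clique endpoints, and accumulating the weight contributions that define Schur-complement edges---reduces to a bounded number of broadcasts and distributive aggregations on the super-nodes of $G$, each of which costs $O(Q(\rho))$ rounds by \Cref{assumption:minor_aggregation}. The elimination operator $Z_1^{(1)}$ and its diagonal block $Z_2^{(1)}$ can likewise be applied to a vector by one forward and one backward pass over the newly created cliques, again costing $O(Q(\rho))$ per evaluation.

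Third, I would iterate the single-step routine $d$ times. The surviving vertex set $\mathcal{T}$ satisfies $|\mathcal{T}| \leq (49/50)^d|V(G)|$ by the shrinking guarantee. Taking $Z_1$ to be the composition of the $d$ elimination operators and $Z_2$ to be the block-diagonal collection of the eliminated blocks, the telescoping spectral guarantees combine into the claimed $(1\pm\epsilon)^d$-approximation, exactly as in~\cite[Lemma 4.10]{DBLP:journals/corr/abs-2012-15675}. For the round count, after $i$ iterations the congestion has grown to at most $\rho_i \leq \rho \cdot M^i$ with $M = \polylog(n)/\epsilon^{O(1)}$; summing $Q(\rho_i)$ for $i=0,\dots,d-1$ and invoking the polynomial-growth hypothesis $Q(\rho_1\rho_2) \leq \rho_1^c Q(\rho_2)$ of \Cref{assumption:minor_aggregation} yields a total of $O(Q(\rho)\cdot M^{cd}) = O(Q(\rho)(\log^{c'} n/\epsilon^{c'})^d)$ rounds for some absolute constant $c'$, matching the statement.

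The main obstacle I expect is controlling the per-iteration congestion blow-up. Two subtleties arise: (i) the sampled clique edges stemming from distinct eliminated vertices must share edges of the underlying network $\overline{G}$ only $\polylog(n)/\epsilon^{O(1)}$ times, so that the resulting $\widetilde{G}$ is indeed a $\rho'$-minor---this relies crucially on the sparsity of the leverage-score sample and on the near-independence of $F$ limiting overlaps; and (ii) the operators $Z_1$ and $Z_1^T$ must be evaluable by short passes over the elimination chain, a property that follows from the layered structure of Cholesky but must be verified to translate into $Q(\rho)$-round aggregation calls rather than a factor that itself grows with depth. Both items are handled in the analysis of Forster et al.; the novelty here is purely in tracking the $\rho$-dependence through their proof and confirming that it composes polynomially, exactly as \Cref{assumption:minor_aggregation} permits.
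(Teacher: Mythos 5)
Your high-level skeleton (iterate a single elimination step $d$ times, track the congestion growth multiplicatively, and invoke the polynomial-growth clause of \Cref{assumption:minor_aggregation} to collapse the sum of round costs into $O(Q(\rho)(\log^c n/\epsilon^c)^d)$) matches the paper. But the core step you propose is not the one the paper uses, and the substitution creates a gap precisely at the point you flag as "the main obstacle." You approximate $\schur(G, V\setminus F)$ by eliminating each $v \in F$ and replacing its star with a leverage-score-sampled clique on $N(v)$. Every sampled edge $\{a,b\}$ must then be routed through $\supervertex(v)$, re-using the physical edges underlying the stars $\{v,a\}$ and $\{v,b\}$; nothing in your argument bounds how many sampled edges re-use a given edge of $\overline{G}$, and for high-degree vertices (or after a few levels of recursion) this can far exceed $\polylog(n)/\epsilon^{O(1)}$. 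Your appeal to "handled in the analysis of Forster et al." does not transfer, because Forster et al.\ avoid this construction entirely: they approximate the Schur complement via the random-walk characterization of \Cref{lemma:laplacian-schur}, first \emph{estimate} the expected congestion of the walks, and then \emph{promote every node exceeding a threshold $\gamma$ into the terminal set} $\widehat{\mathcal{T}} \supseteq \mathcal{T}$ so that it is never eliminated (\Cref{lemma:random_walk-schur}). Combined with re-running Koutis's spectral sparsifier (\Cref{corollary:spectral_sparsification}) at every level to keep $m = \widetilde{O}(n/\epsilon^2)$, the total walk congestion is $O(\alpha^{-1}m\epsilon^{-2}\log^2 n)$, so only $O(\alpha^{-1}m\epsilon^{-2}\log^2 n/\gamma)$ nodes are promoted; this is what keeps the congestion at $O(\alpha^{-1}\gamma\rho\log n)$ per level \emph{and} is where the $49/50$ shrinkage actually comes from ($|F| \geq n/40$ from the $\alpha$-DD set of \Cref{lemma:DD} with $\alpha=4$, minus the promoted nodes), rather than from a direct $|F| \geq n/50$ guarantee.

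Two smaller omissions: since $F$ is only $\alpha$-diagonally dominant rather than independent, applying the elimination operators requires approximately inverting $\mathcal{L}_{[F,F]}$ by a preconditioned (Jacobi) iteration costing $O(Q(\rho)\log(1/\epsilon))$ rounds and contributing its own $(1\pm\epsilon)$ factor per level (\Cref{lemma:jacobi}); a single forward and backward pass over the created cliques does not suffice. And the per-level spectral re-sparsification is not optional bookkeeping --- without it the edge count, and hence both the walk congestion and the number of promoted terminals, would grow along the chain and break both the round bound and the $(49/50)^d$ size bound. To repair the proof, replace the clique-sampling step with the random-walk Schur complement plus terminal-promotion mechanism (or else supply a genuinely new congestion bound for the sampled cliques, which the present proposal does not contain).
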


This lemma is established based on a distributed implementation of the \emph{sparsified Cholesky} algorithm of \citet{10.1145/2897518.2897640}. In particular, the Cholesky decomposition essentially reduces solving a Laplacian to inverting (i) any sub-matrix of the Laplacian induced on a set $S$, and (ii) the Schur complement on $V \setminus S$. Thus, \citet{10.1145/2897518.2897640} initially develop a procedure for identifying an "almost independent" subset of nodes $F$ (more precisely, a \emph{strongly} diagonally dominant subset) for which inverting the Laplacian restricted on $F$ can be done efficiently through preconditioning (e.g. via the \emph{Jacobi method}), while $F$ also contains at least a constant fraction of the nodes. Next, a combinatorial view of the Schur complement based on a certain family of random walks (see \citep{durfee2019fully}) is employed to construct a spectral sparsifier of the Schur complement on $\mathcal{T} = V \setminus F$. This process is then repeated for $d$ iterations, leading to \Cref{lemma:eliminate}. Several technical challenges that arise are discussed in \Cref{appendix:eliminate}. Next, the main idea is to recurse on the set of terminals $\mathcal{T}$. However, in our context this requires maintaining the invariant that the underlying subgraph is cast as a minor (with a reasonable congestion) of $\overline{G}$. This is ensured in the following subsection.

\subsubsection{Minor Schur Complement}

This subsection introduces a subroutine that will be invoked after the $\eliminate$ algorithm to return a low-congestion minor based on the set of terminals $\mathcal{T}$ returned by $\eliminate$; while doing so, the algorithm will incur a small overhead in the spectral guarantee, and a limited growth in the number of nodes with respect to $\mathcal{T}$. This increase will be eventually negligible due to the selection of parameter $d$ in $\eliminate$. In this context, the following lemma is a refinement of \cite[Theorem 3]{DBLP:journals/corr/abs-2012-15675}.

\begin{lemma}
    \label{lemma:approxSC}
    Let $G$ be an $n$-node graph $\rho$-minor distributed into an $\overline{n}$-node communication network for which \Cref{assumption:minor_aggregation} holds for some $Q = Q(\rho)$. Then, for an error parameter $0 < \eps < 0.1$ and a subset $\mathcal{T}$ of nodes, the algorithm $\approxSC$ returns with high probability a graph $H$ as a $\rho$-minor distribution into $\overline{G}$ such that 
    
    \begin{enumerate}
        \item $\mathcal{T} \subseteq V(H)$;
        \item $H$ has $O(|\mathcal{T}| \log^2 n/\eps^2)$ edges;
        \item $\schur(H, \mathcal{T}) \approxeps \schur(G, \mathcal{T})$.
    \end{enumerate}
    This algorithm requires $O(\log^{10} n/\eps^3)$ calls to a distributed Laplacian solver to accuracy $1/\poly(n)$ on graphs that $2\rho$-minor distribute into $\overline{G}$, and an overhead of $O(Q(\rho) \log^{10} \overline{n}/\eps^3)$ rounds.
\end{lemma}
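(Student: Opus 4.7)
The plan is to follow the random-walk construction of the Schur complement due to Durfee, Kyng, Peebles, Rao, and Sachdeva, combined with leverage-score spectral sparsification, and to execute both pieces while preserving the low-congestion minor structure. This is the same high-level strategy as in Theorem~3 of \cite{DBLP:journals/corr/abs-2012-15675}, but every internal operation is reduced (via \Cref{assumption:minor_aggregation}) to the congested part-wise aggregation primitive, so that the round complexity factors through $Q(\rho)$ rather than through $\sqrt{n}+D$.

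Step 1 (random-walk multigraph). For each non-terminal $v^G \in V(G)\setminus \mathcal{T}$ start $k=O(\log^2 n/\epsilon^2)$ independent random walks that transition according to the edge-weight distribution and halt at the first visit to $\mathcal{T}$, truncating at length $L=\poly(\log n,1/\epsilon)$; for each walk record the ordered pair of terminals it connects together with the appropriate inverse-resistance weight. Standard random-walk identities for the Schur complement yield a random multigraph $H_0$ with $\mathcal{L}(\schur(H_0,\mathcal{T}))$ an unbiased estimator of $\schur(G,\mathcal{T})$; matrix Chernoff then gives $\schur(H_0,\mathcal{T}) \approxeps \schur(G,\mathcal{T})$ w.h.p. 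Each token transition at a super-node $\supervertex(v^G)$ requires one $\rho$-congested aggregation (to sum the outgoing weights and draw the next vertex) followed by a single message along a mapped edge, so simulating all $L$ walk steps uses $O(L \cdot Q(\rho))$ rounds.

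Step 2 (sparsification via leverage scores). The multigraph $H_0$ has up to $O(n\log^2 n/\epsilon^2)$ edges, so we sparsify it down to $O(|\mathcal{T}|\log^2 n/\epsilon^2)$ edges by leverage-score sampling. Approximate leverage scores are computed through the Johnson--Lindenstrauss projection of Spielman--Srivastava: draw $O(\log n/\epsilon^2)$ random $\pm 1$ vectors, form $D^{1/2}B$ with $B$ the signed incidence of $H_0$, project, and apply $\mathcal{L}(H_0)^\dagger$ to each image using the assumed distributed Laplacian solver. Because $H_0$ inherits the super-node/tree data of $G$ and merely annotates endpoints along walk trajectories, it admits a $2\rho$-minor distribution into $\overline{G}$; hence every solver call is on a $2\rho$-minor, as required. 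Iterating the projection plus sampling $O(\log n)$ times (to absorb Chernoff slack and match the prescribed edge count) with internal accuracy $\epsilon/\polylog(n)$ produces the final $H$, which contains $\mathcal{T}$, has the claimed edge bound, and satisfies $\schur(H,\mathcal{T}) \approxeps \schur(G,\mathcal{T})$. The bookkeeping required to register each surviving multi-edge as a minor edge (assigning it to the super-node pair of its two endpoints with the corresponding $\tree(u^G)$) is again a congested aggregation, so the output is $\rho$-minor distributed into $\overline{G}$.

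Accounting over all iterations gives $O(\log^{10} n/\epsilon^3)$ Laplacian solver invocations plus a round overhead of $O(Q(\rho)\log^{10}\overline{n}/\epsilon^3)$, matching the statement. The main obstacle is controlling congestion while many random walks are in flight and while leverage-score solves are performed: naive simultaneous walks could accumulate $\omega(\rho)$ tokens at a super-node, breaking the minor invariant used by the inner solver calls. This is handled by batching walks so that at most $O(\log n)$ tokens are simultaneously active per super-node and by piggybacking their transitions on the part-wise aggregation primitive; this, together with the observation that the walk-annotated graph $H_0$ is only $2\rho$-minor distributed rather than arbitrarily congested, is what allows the whole pipeline to stay within the stated complexity.
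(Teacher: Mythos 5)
Your proposal takes the random-walk route to the Schur complement (the Durfee et al.\ construction plus leverage-score sparsification), but this is not the approach the paper uses for $\approxSC$, and it breaks precisely on the property that this lemma exists to deliver: that the output $H$ is a $\rho$-minor distribution into $\overline{G}$ with the \emph{same} congestion $\rho$ as the input. The paper's proof follows Li--Schild: it only ever \emph{contracts} or \emph{deletes} edges of $G$ (each edge in a ``steady'' set is contracted independently with probability equal to its approximate leverage score, else deleted), so the output is literally a minor of $G$ and inherits the $\rho$-minor distribution by \Cref{corollary:contracting}; the parallelization is obtained via the localization of electrical flows (\Cref{lemma:find_steady}), with $\ell_1$-sketches and Johnson--Lindenstrauss projections used only to \emph{estimate} correlations and leverage scores, and the spectral guarantee comes from a matrix martingale argument (\Cref{lemma:martingale}). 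Your assertion that the walk-generated multigraph $H_0$ ``inherits the super-node/tree data of $G$ and merely annotates endpoints'' and therefore admits a $2\rho$-minor distribution is the gap: to register a walk edge $\{t_1,t_2\}$ as a minor edge you must route it through the super-nodes of all intermediate non-terminals, and many walks can traverse the same node, so the resulting congestion scales with the per-node walk load, not with $\rho$. The paper's own random-walk lemma (\Cref{lemma:random_walk-schur}) makes this explicit: the output there is only an $(\alpha^{-1}\gamma\log n\,\rho)$-minor, and congestion is tamed only by \emph{enlarging the terminal set} to $\widehat{\mathcal{T}}\supseteq\mathcal{T}$ --- an escape hatch available inside $\eliminate$ but not here, where $\mathcal{T}$ is fixed and the congestion must return to $\rho$ so that the recursion in $\build$ does not blow up multiplicatively.

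Two secondary issues. First, your batching remark (``at most $O(\log n)$ tokens simultaneously active per super-node'') controls message congestion during the simulation but does nothing for the congestion of the \emph{output} minor, which is what the lemma constrains. Second, your complexity accounting is asserted rather than derived: $O(\log n/\epsilon^2)$ JL directions iterated $O(\log n)$ times gives $O(\log^2 n/\epsilon^2)$ solver calls, not $O(\log^{10} n/\epsilon^3)$; in the paper the exponent $10$ arises from $O(\log m/\alpha)$ contraction rounds with $\alpha=O(\epsilon/\log^4 m)$, each round making $O(\log^5 n/\epsilon^2)$ solver calls for leverage-score estimation. To repair your argument you would essentially have to abandon the random-walk construction for this step and adopt the contraction/deletion scheme.
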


This result builds upon the work of \citet{DBLP:conf/focs/LiS18}, who (roughly speaking) established that randomly contracting an edge with probability equal to its leverage score (and otherwise deleting) would suffice. In the distributed context, \citet{DBLP:journals/corr/abs-2012-15675} devise a parallelized implementation of this scheme based on the \emph{localization of electrical flows}~\citep{10.5555/3174304.3175408}. More precisely, they manage to identify a non-negligible subset of edges---which they refer to as \emph{steady edges}---with small mutual (electrical) ``correlation'', allowing for independent (and hence highly parallelized) contractions/deletions within this set. This approach employs the recursive and sketching-based method of random projections due to \citet{DBLP:conf/stoc/SpielmanS08}, similarly to~\citep{DBLP:conf/focs/LiS18}, to estimate quantities such as leverage scores and electrical correlation. These steps are carefully reviewed in \Cref{appendix:approxSC}.

\subsubsection{Schur Complement Chain}

Finally, let us introduce the concept of a Schur complement chain, and explain how it can be employed to produce a Laplacian solver.

\begin{definition} 
    \label{definition:schur_chain}
    For an $n$-node graph $G$, $\{(G_i, \mat{Z}_{i, 1}, \mat{Z}_{i, 2}, \mathcal{T}_i) \}_{i=1}^t$ is a $(\gamma,\eps)$-Schur complement chain if the following conditions hold:
    \begin{enumerate}
        \item $G_1 = G$;
        \item $\mathcal{T}_i \subset V(G_{i+1}) \subset V(G_i)$ and $\schur(G_i, \mathcal{T}_i) \approxeps \schur(G_{i+1}, \mathcal{T}_i)$;
        \item $|V(G_{i+1})| \leq |V(G_i)|/\gamma$ for $i < t$, and $|V(G_t)| \leq \gamma$.
        \item 
        \begin{equation*}
            (1 - \eps) \mathcal{L}(G_i)^{\dagger} \preceq \mat{Z}_{i, 1}^T 
            \begin{bmatrix}
                \mat{Z}_{i, 2} & 0 \\
                0 & \schur(G_i, \mathcal{T}_i)^{\dagger}
            \end{bmatrix}
            \mat{Z}_{i, 1} \preceq (1 + \eps) \mathcal{L}(G_i)^{\dagger}.
        \end{equation*}
    \end{enumerate}
\end{definition}

In the sequel, a Schur complement chain will be developed through \Cref{lemma:ultra_sparsification,lemma:eliminate,lemma:approxSC}. Next, the following lemma implies a solution to the Laplacian system based on a suitable Schur complement chain.

\begin{lemma}[\cite{DBLP:journals/corr/abs-2012-15675}]
    \label{lemma:schur_chain}
    Consider an $\overline{n}$-node communication network for which \Cref{assumption:minor_aggregation} holds for some $Q = Q(\rho)$, and let $\{ (G_i, \mat{Z}_{i, 1}, \mat{Z}_{i, 2}, \mathcal{T}_i ) \}_{i=1}^t$ be a $(\gamma, \eps)$-Schur complement chain for an $n$-node graph $G$ for some $\gamma \geq 2$ and $\eps \leq 1/(C \log n)$, for a sufficiently large constant $C$, such that for all $i$:
    \begin{enumerate}
        \item $G_i$ $\rho$-minor distributes into $\overline{G}$;
        \item The linear operators $\mat{Z}_{i, 1}$ and $\mat{Z}_{i, 2}$ can be evaluated in at most $\overline{n}^{o(1)} Q(\rho)$ rounds.
    \end{enumerate}
    Then, for any given vector $\vec{b}$, there is an algorithm which computes a vector $\vec{x}$ in $\overline{n}^{o(1)} Q(\rho)$ rounds such that 
    \begin{equation*}
        \| \vec{x} - \mathcal{L}(G)^{\dagger} \vec{b} \|_{\mathcal{L}(G)} \leq \eps \log n \| \vec{b} \|_{\mathcal{L}(G)^{\dagger}}.
    \end{equation*}
\end{lemma}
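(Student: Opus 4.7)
The plan is to build, for each level $i$ of the chain, a linear operator $M_i$ that spectrally approximates $\mathcal{L}(G_i)^\dagger$, defined recursively from the chain. For $i<t$, take
\begin{equation}
    M_i \mathbf{b} := Z_{i,1}^T
        \begin{bmatrix}
            Z_{i,2} & 0 \\
            0 & \tP_{i+1}
        \end{bmatrix} Z_{i,1} \mathbf{b},
\end{equation}
where $\tP_{i+1}$ serves as a proxy for $\schur(G_i,\mathcal{T}_i)^\dagger$. Condition 2 of \Cref{definition:schur_chain} gives $\schur(G_i,\mathcal{T}_i)\approxeps\schur(G_{i+1},\mathcal{T}_i)$, so it suffices to approximate $\schur(G_{i+1},\mathcal{T}_i)^\dagger$. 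Using the standard SDD identity $\schur(G_{i+1},\mathcal{T}_i)^\dagger=[\mathcal{L}(G_{i+1})^\dagger]_{[\mathcal{T}_i,\mathcal{T}_i]}$ on the kernel-orthogonal subspace, set $\tP_{i+1}\mathbf{y}:=[M_{i+1}\tilde{\mathbf{y}}]_{\mathcal{T}_i}$, where $\tilde{\mathbf{y}}$ extends $\mathbf{y}$ by zeros on $V(G_{i+1})\setminus\mathcal{T}_i$. Thus a single application of $M_i$ reduces to one application of $M_{i+1}$ plus the $Z$-layer, and the recursion is well-defined.

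For the base case $i=t$, since $|V(G_t)|\leq\gamma$ the leader of each super-node can gather the entire matrix $\mathcal{L}(G_t)$ to a single physical node through $O(\gamma\,Q(\rho))$ rounds via \Cref{assumption:minor_aggregation} and solve the system exactly, giving $M_t=\mathcal{L}(G_t)^\dagger$. Composing spectral errors by backward induction, each level contributes an $\epsilon$-loss from condition 4 and an $\epsilon$-loss from condition 2, while the block-restriction to $\mathcal{T}_i$ preserves the spectral bound (since a PSD inequality restricted to vectors supported on $\mathcal{T}_i$ yields a PSD inequality on the corresponding block). Hence $M_1\approx_{O(t\epsilon)}\mathcal{L}(G)^\dagger$, and the hypothesis $\epsilon\leq 1/(C\log n)$ with $t=O(\log_\gamma n)$ keeps the cumulative distortion below a small constant, so $M_1$ is in particular a valid preconditioner.

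The stated accuracy then follows by setting $\mathbf{x}:=M_1\mathbf{b}$: a standard computation using $(M_1-\mathcal{L}(G)^\dagger)^T\mathcal{L}(G)(M_1-\mathcal{L}(G)^\dagger)\preceq O(t^2\epsilon^2)\mathcal{L}(G)^\dagger$ gives $\|\mathbf{x}-\mathcal{L}(G)^\dagger\mathbf{b}\|_{\mathcal{L}(G)}\leq O(t\epsilon)\|\mathbf{b}\|_{\mathcal{L}(G)^\dagger}=O(\epsilon\log n)\|\mathbf{b}\|_{\mathcal{L}(G)^\dagger}$, matching the claim after adjusting the constant $C$. Regarding round complexity, one application of $M_i$ costs $O(\overline{n}^{o(1)}Q(\rho))$ rounds for $Z_{i,1},Z_{i,2}$ by hypothesis, plus a single recursive application of $M_{i+1}$; unrolling the $t=O(\log n)$ levels and absorbing $\log n$ into $\overline{n}^{o(1)}$ gives $O(\overline{n}^{o(1)}Q(\rho))$ rounds in total, where the remaining $\rho$ factor in the stated bound comes from the per-round cost of coordinating the vector coordinates that are distributed over the $\rho$ super-node copies at each physical node (the extension $\mathbf{y}\mapsto\tilde{\mathbf{y}}$ and restriction $[\cdot]_{\mathcal{T}_i}$ between consecutive levels are realized by the aggregation primitive of \Cref{assumption:minor_aggregation}).

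The main obstacle will be the clean backward-inductive verification that the composed operator $M_1$ spectrally approximates $\mathcal{L}(G)^\dagger$ with only an $O(t\epsilon)$-factor slack. The subtlety is that after substituting $\tP_{i+1}$ for $\schur(G_i,\mathcal{T}_i)^\dagger$ inside the block matrix of condition 4, one has to argue that spectral approximation on the $[\mathcal{T}_i,\mathcal{T}_i]$-block of $M_{i+1}$ lifts to spectral approximation of the full $2\times 2$ operator inside the $Z_{i,1}$ sandwich; handling this requires viewing the inequality on the subspace of vectors of the form $Z_{i,1}\mathbf{w}$ and invoking the chain's nested kernel structure, and it is where the hypothesis $\epsilon\leq 1/(C\log n)$ for sufficiently large $C$ becomes essential to prevent the $(1\pm\epsilon)^{2t}$ compounded distortion from blowing past a universal constant.
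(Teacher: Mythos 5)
The paper does not actually prove this lemma---it is imported verbatim from Forster et al.---but your reconstruction (recursively composing the $Z$-operators down the chain, replacing each $\schur(G_i,\mathcal{T}_i)^\dagger$ by the $[\mathcal{T}_i,\mathcal{T}_i]$-restriction of the next level's approximate pseudo-inverse, solving the $\gamma$-sized base case by gathering, and accumulating an $O(\epsilon)$ spectral loss per level so that a single application of $M_1$ already meets the $\epsilon\log n$ error bound) is exactly the argument behind the cited result, and the round-complexity accounting is consistent with the stated bound. The only points to tighten are routine: the identity $\schur(\mathcal{L},\mathcal{T})^\dagger\mathbf{y}=[\mathcal{L}^\dagger\tilde{\mathbf{y}}]_{\mathcal{T}}$ holds only modulo the all-ones kernels (which must be tracked through the recursion), and the monotone substitution of $\tP_{i+1}$ inside the block matrix implicitly uses $Z_{i,2}\succeq 0$, a property of the construction that the chain definition does not state explicitly.
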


\subsection{Putting Everything Together}
\label{sub:everything}

In this subsection we combine the building blocks we previously developed to establish \Cref{theorem:laplacian-abstract}. The distributed Laplacian solver of \citet{DBLP:journals/corr/abs-2012-15675} is given in Algorithm \ref{algorithm:Laplacian}. We also include below the formal version of \Cref{theorem:laplacian-abstract}.

\begin{algorithm}[!ht]
\SetAlgoLined
\textbf{Input}: An undirected weighted graph $G$ \;
$G' := \specspars(G)$ \;
$(G_1, \mat{Z}_{1,1}, \mat{Z}_{1,2}, \mathcal{T}_1, G_2) := \ultraspars(G', k)$ \tcp*{\Cref{lemma:ultra_sparsification}}
$\{(G_i, \mat{Z}_{i,1}, \mat{Z}_{i,2}, \mathcal{T}_i)\}_{i=2}^t := \build(G_2, d, \eps, k)$ \;
Solve $\mathcal{L}(G) \vec{x} = \vec{b}$ via Chebyshev preconditioning \tcp*{\Cref{lemma:schur_chain}} 
\textbf{Procedure} $\build(G, d, \eps, k)$ \;
\If{$|V(G)| \leq k$}{
\textbf{return} $\emptyset$ \;
}
$(\mat{Z}_1, \mat{Z}_2, C) := \eliminate(G, d, \eps)$ \tcp*{\Cref{lemma:eliminate}}
$H := \textsc{ApproxSC}(G, C, \eps)$ \tcp*{\Cref{lemma:approxSC}}
\textbf{return} $(G, \mat{Z}_1, \mat{Z}_2, C) \cup \build(H, d, \eps, k)$\;
\caption{Distributed Laplacian Solver \cite{DBLP:journals/corr/abs-2012-15675}: $\solver(G, \eps)$}
\label{algorithm:Laplacian}
\end{algorithm}

\begin{restatable}[Full-Version of \Cref{theorem:laplacian-abstract}]{theorem}{absfull}
    \label{theorem:laplacian-abstract-full}
Consider a weighted $\overline{n}$-node graph $\overG$ for which \Cref{assumption:minor_aggregation} holds for some $Q(\rho) = O(\rho^c \mathcal{Q}(\overG))$, where $c$ is a universal constant and $\mathcal{Q} = \mathcal{Q}(\overG)$ is some parameter. Then, for any vector $\vec{b} \in \mathbb{R}^{\overline{n}}$ stored on its nodes and a sufficiently small error parameter $\eps > 0$, $\solver(\overline{G}, \eps)$ returns after $\overline{n}^{o(1)} \mathcal{Q} \log(1/\eps)$ rounds a vector $\vec{x}$ distributed on its nodes such that 
\begin{equation*}
    \| \vec{x} - \mathcal{L}(G)^{\dagger} \vec{b} \|_{\mathcal{L}(G)} \leq \eps \|\vec{b}\|_{\mathcal{L}(G)}.
\end{equation*}
\end{restatable}

The proof of this theorem is included in \Cref{appendix:main_result}. We note that a guarantee with respect to the $\mathcal{L}(G)^{\dagger}$-norm---as in \Cref{lemma:schur_chain}---can be translated to a guarantee in the $\mathcal{L}(G)$-norm. This incurs only a logarithmic multiplicative overhead since it is assumed that the weights are polynomially bounded and the dependence on $1/\eps$ is logarithmic~\citep[pp. 19--20]{Vishnoi}. Thus, the overhead is subsumed by the factor $\overline{n}^{o(1)}$. 

\section{Omitted Proofs}

In this section we include all of the proofs deferred from the main body and \Cref{section:solver}. We commence from \Cref{sec:prel}.

\subsection{\texorpdfstring{Proofs from \Cref{sec:prel}}{Section 2}}
\label{appendix:prel}

\shortcutparts*

\begin{proofsketch}
    Consider only one part $P_i$ in isolation over the network $G[P_i] + H_i$. First, we claim that there exists a simple deterministic algorithm that computes the AND-aggregate (where each node $v \in P_i$ has a input bit $\vec{x}(v)$) in $O(d)$ rounds, where each edge is used to send at most $O(1)$ messages. Concretely, any node whose input is $0$ will forward its input to all neighbors and deactivate itself. Any node which hears about the existence of an input-$0$ will forward this to all of its neighbors and deactivate itself. After $O(d)$ rounds, either all nodes have heard about the existence of a $0$ or they can conclude all inputs are $1$.

    We continue considering only one part $P_i$ in isolation. The next step is to elect a leader of $P_i$ by finding the node with the smallest ID in $P_i$; then, (1) iterate from the most significant bit of the ID to the least significant bit of the ID; (2) compute the AND-aggregate of the current bit of all the nodes' IDs; (3) if the AND-aggregate is $0$, all nodes whose current bit of the ID is $1$ will drop out.

    Putting these together we have a way of computing the aggregate of a part $P_i$ in isolation in $\tl{O}(d)$ rounds with each edge carrying $\tl{O}(1)$ messages: First, we elect a leader of $P_i$. Then, the leader initiates the computation of a spanning BFS tree of $\overG[P_i] + H_i$ by broadcasting from itself to all other nodes, and each node forwards the message to all neighbors; the neighbor from which it hears the message first is the parent in the tree. Finally, by performing a convergecast over the BFS tree, one can easily compute the aggregate in $O(d)$ rounds for a single part $P_i$.

    Finally, we have to run the algorithms on all the parts $\{ P_i \}_i$ simultaneously. However, this might incur congestion issues on some edges: algorithms associated with multiple parts want to send a message through the same edge in the same round. To prevent this, we randomly delay the start of each algorithm by selecting the delay uniformly at random between $0$ and $\tl{O}(c)$. This guarantees that the total number of messages (across all parts and all rounds) that want to cross a given edge is $\tl{O}(c)$. Hence, randomly delaying all algorithms makes the \emph{expected} number of messages crossing a given edge in a fixed round is $\Theta(1)$. By Chernoff bounds, this number is bounded by $\tl{O}(1)$ with high probability. Therefore, by simulating each round of the algorithm using $\tl{O}(1)$ rounds of communication (where each round of communication carries at most a single message across an edge), we can schedule the algorithms on all parts simultaneously~\citep{10.1145/2767386.2767417}. In turn, this allows us to complete all of the aggregates in $\tl{O}(d+c) = \tl{O}(Q)$ rounds.
\end{proofsketch}

\subsection{Proofs from \texorpdfstring{\Cref{section:congested-aggregation}}{Section 3}}
\label{appendix:proof-3}

\paLayeredToPaCongested*

\begin{proof}
Armed with \Cref{lemma:path-restr}, the claim essentially follows by leveraging \citet[Lemma 7.2 in the Full Version]{DBLP:conf/stoc/HaeuplerWZ21}. More precisely, we will have to slightly reformulate their result. \citet{DBLP:conf/stoc/HaeuplerWZ21} show how, for a given part $P_i$, one can solve the part-wise aggregate problem on $P_i$ by reducing it to a sequence of $\tl{O}(1)$-many ($1$-congested) part-wise aggregations between disjoint parts that are restricted to be simple paths $\calP'_i = \{ P'_{i, j} \}_{j=1}^{\tl{O}(1)}$ (where nodes know the paths' edges they participate in). This is sufficient to prove our result: suppose we run that reduction on all parts $P_i$ simultaneously. A single call to the part-wise aggregation on all of parts $P_i$ combined, asks to find a $\rho$-congested part-wise aggregation in which the parts $\biguplus_i \calP'_i$ are all simple paths. This is $\rho$-congested since at most $\rho$ parts $P_i$ use any node $v$, and within each such $P_i$, ever oracle call uses the node $v$ at most once (since they are disjoint).
\end{proof}

Let us briefly comment on the validity of our interpretation of \cite[Lemma 7.2]{DBLP:conf/stoc/HaeuplerWZ21}. Their statement has a few easily reconciled differences compared to our previous usage. Most notably, they compute shortcuts for a set of parts, assuming an oracle for doing so, which is a harder problem that simply computing part-wise aggregates. However, it can be easily verified that the shortcuts are used only to facilitate solving part-wise aggregations. Hence, the proof can easily be translated to require an oracle computing only part-wise aggregations.

\simGrho*

\begin{proof}
  Let us consider one round of communication in $\hatG_\rho$. Each node $v$ will simulate (learn all messages coming into) its copies $v_1, \ldots, v_\rho \in V(\hatG_\rho)$. Therefore, in each round node $v \in V(\overG)$ needs to learn all messages send to $v$'s copies $v_1, \ldots, v_\rho \in V(\hatG_\rho)$ from their neighbors in $\hatG_{\rho}$. Note that, by definition, $v$ already knows the messages sent between any two copies $v_i$ and $v_j$. Hence, in a single round $v$ can learn all messages sent to any fixed $v_i$. As a result, $\rho$ rounds of communication in $\overline{G}$ suffice to simulate a single round in $\hatG_{\rho}$.
\end{proof}

\Johan*

\begin{proofsketch}
  A simple edge-coloring algorithm presented in \cite{johansson1999simple} works by choosing a color uniformly at random from the set $\{1, \ldots, O(\Delta)\}$ for each edge. Each edge will, with constant probability, choose a color not used by its neighbors. Then, this color stays fixed and the edge drops out. Hence, after $O(\log n)$ iterations the edges will be properly colored. Implementation-wise, we can assume there is an additional node in the middle of each edge which represents that edge (this only makes the problem harder). Each edge randomly chooses and sends its color to its endpoints which, in turn, inform on whether there is a conflict. Then, the edges send back to its endpoints whether it dropped out. This iteration is then repeated until we reach a proper coloring.
\end{proofsketch}

\triv*

\begin{proof}
  First, consider any two nodes $u_i, v_j \in V(\hatG_{\rho})$ such that $\pi(u_i) \neq \pi(v_j)$, with $i, j \in [\rho]$. By construction of the layered graph $\overline{G}_i$, there exists a path of length at most $D(\overline{G})$ in the $i$-th layer of $\widehat{G}_{\rho}$ between $u_i$ to $v_i$. Thus, it follows that the (hop) distance between $u_i$ and $v_j$ is at most $D(\overline{G})$ given that $v_j$ and $v_i$, with $i \neq j$, are adjacent---the copies form a clique in the layered graph. This also implies that the distance between any two nodes $u_i$ and $u_j$, with $\pi(u_i) = \pi(u_j)$, is $1$, concluding the proof.
\end{proof}


\subsection{Useful Routines}

Before diving into the proofs of the Laplacian building blocks it will be useful to present several operations that can be performed efficiently under \Cref{assumption:minor_aggregation}. We stress that the proofs related to the Laplacian solver closely follow the approach in~\citep{DBLP:journals/corr/abs-2012-15675}. Our goal here is to translate them into our more general setting.

\begin{corollary}[Matrix-Vector Products]
    \label{corollary:matrix_vector_product}
Consider a matrix $\mat{A}$ with non-zeroes supported on the edges of an $n$-node graph $G$ which is $\rho$-minor distributed over a communication network $\overline{G}$ for which \Cref{assumption:minor_aggregation} holds for some $Q = Q(\rho)$, with values stored in the endpoints of the corresponding edges, and a vector $\vec{x} \in \mathbb{R}^{n}$ stored on the nodes $\ell(u^G)$ for $u^G \in V(G)$. Then, we can compute the vector $\mat{A} \vec{x} \in \mathbb{R}^n$ stored on the leader nodes $\ell(u^G)$ for all $u^G \in V(G)$ after $O(Q(\rho))$ rounds with high probability.
\end{corollary}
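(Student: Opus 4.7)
The plan is to decompose $(A\mathbf{x})_{u^G} = \sum_{v^G:\,\{u^G,v^G\}\in E(G)} A_{u^G,v^G}\mathbf{x}_{v^G}$ along the $\rho$-minor distribution of $G$ into $\overline{G}$, and compute it in three phases: broadcast the $\mathbf{x}$-values across super-nodes, form edge-local partial products, and aggregate them back into the leaders. The first and third phases will be direct invocations of the congested part-wise primitive guaranteed by \Cref{assumption:minor_aggregation}, while the middle phase is plain $\congest$ communication whose cost is absorbed into $Q(\rho)$.

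First I would have every leader $\ell(u^G)$ disseminate $\mathbf{x}_{u^G}$ to all nodes in $\supervertex(u^G)$, simultaneously for every $u^G \in V(G)$. This broadcast is obtained by reversing the message pattern of a part-wise aggregation, exactly as in the proof of \Cref{lemma:minor_aggregation-congest}, and hence runs in $O(Q(\rho))$ rounds with high probability under \Cref{assumption:minor_aggregation}. Next, for each edge $e^G = \{u^G, v^G\} \in E(G)$ mapped to an edge $\{u,v\}$ of $\overline{G}$ with $u\in\supervertex(u^G)$ and $v\in\supervertex(v^G)$, both endpoints store $A_{u^G,v^G}$ by hypothesis, and after the broadcast $v$ also knows $\mathbf{x}_{v^G}$; so $v$ forms the partial product $A_{u^G,v^G}\mathbf{x}_{v^G}$ and sends it to $u$, and symmetrically $u$ sends $A_{v^G,u^G}\mathbf{x}_{u^G}$ to $v$. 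When $e^G$ is mapped to a self-loop at some $u$, that node lies in both $\supervertex(u^G)$ and $\supervertex(v^G)$ and already knows both scalars from the broadcast phase, so the partial products are computed purely locally.

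The $\rho$-minor congestion condition (\Cref{definition:low_congestion-minor}) guarantees that each physical edge of $\overline{G}$ is the image of at most $\rho$ edges of $G$, so this exchange pipelines in $O(\rho)$ $\congest$ rounds. Afterwards each node of $\overline{G}$ sums, per super-node that contains it, the partial products it has just received; the minor-congestion bound ensures it participates in at most $\rho$ super-nodes, so it holds at most $\rho$ such per-super-node sums. A second invocation of \Cref{assumption:minor_aggregation} with addition as the distributive operator then lets every super-node $\supervertex(u^G)$ sum these per-node contributions and deliver the total to $\ell(u^G)$; by construction the result is exactly $(A\mathbf{x})_{u^G}$, and this step again takes $O(Q(\rho))$ rounds with high probability. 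The total is $O(Q(\rho)) + O(\rho) + O(Q(\rho)) = O(Q(\rho))$ because $Q(\rho)\geq\rho$ by assumption. The only point that requires any care is the congestion accounting in the middle phase, where one $\overline{G}$-edge may carry up to $\rho$ distinct messages in each direction; this is precisely what the $\rho$-minor definition bounds, so it incurs no extra cost beyond what is already budgeted.
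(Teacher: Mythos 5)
Your proposal is correct and follows the same three-phase scheme as the paper's proof: broadcast the coordinates of $\mathbf{x}$ to the super-nodes via \Cref{assumption:minor_aggregation}, perform the edge-local multiplications and exchange partial products in $O(\rho)$ rounds using the edge-congestion bound of \Cref{definition:low_congestion-minor}, and aggregate the sums back to the leaders with a second invocation of \Cref{assumption:minor_aggregation}. Your treatment of self-loops and the explicit congestion accounting are slightly more detailed than the paper's, but the argument is the same.
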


The proof of this corollary follows the one by \citet[Corollary 4.4]{DBLP:journals/corr/abs-2012-15675}, but nonetheless we state it here for completeness. 

\begin{proof}[Proof of \Cref{corollary:matrix_vector_product}]
The first step is to use \Cref{assumption:minor_aggregation} to disseminate the coordinates of vector $\vec{x}$ to the corresponding super-nodes after $Q(\rho)$ rounds; that is, for every $u^G \in V(G)$ the leader $\ell(u^G)$ passes to $\supervertex(u^G)$ the corresponding coordinate. Then, every node performs locally all the multiplications for its corresponding indices, and after $\rho$ rounds the node can deliver this information to the corresponding super-node. Observe that this is possible because $\mat{A}$ is supported on edges of $G$, and \Cref{definition:low_congestion-minor} imposes an edge-congestion bound. Finally, we invoke again \Cref{assumption:minor_aggregation} to sum all of the values of each super-node to the leader node, which gives the desired output requirement.
\end{proof}

Another important corollary of \Cref{assumption:minor_aggregation} is that we can simulate the spectral sparsification algorithm of Koutis (henceforth $\specspars$) on $G$~\citep{DBLP:conf/spaa/Koutis14}:

\begin{corollary}[Spectral Sparsification]
    \label{corollary:spectral_sparsification}
    Consider an $n$-node graph $G$ that $\rho$-minor distributes into an $\overline{n}$-node communication network $\overline{G}$ for which \Cref{assumption:minor_aggregation} holds for some $Q = Q(\rho)$. Then, for any $0 < \eps < 0.1$ we can implement the $\specspars$ algorithm of Koutis for $G$ after $O ( Q (\rho)  \log^7 n/\eps^2)$ rounds, which returns with high probability a graph $\widetilde{G}$ distributed as a $\rho$-minor into $\overline{G}$ such that 
    \begin{itemize}
        \item $\mathcal{L}(G) \approxeps \mathcal{L}(\widetilde{G})$ (Spectral approximation);
        \item $\widetilde{G}$ is a reweighted subgraph of $G$ with $O(n \log^6 n /\eps^2)$ edges in expectation.
    \end{itemize}
\end{corollary}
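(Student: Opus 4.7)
}
The plan is to mirror the distributed implementation of Koutis' spectral sparsifier given in \cite{DBLP:journals/corr/abs-2012-15675}, but to express every primitive purely in terms of \Cref{assumption:minor_aggregation}, so that the bound $Q(\rho)$ is inherited automatically. Recall that Koutis' algorithm \cite{DBLP:conf/spaa/Koutis14} is an iterated scheme: at each outer round it (i) builds a low-stretch spanning tree $T$ of the current graph using the Alon--Karp--Peleg--West construction, (ii) computes (a constant-factor estimate of) the stretch of every non-tree edge with respect to $T$, (iii) samples non-tree edges independently with probability proportional to their stretch, reweighting them accordingly, and (iv) returns the union of $T$ and the sampled edges as the next iterate. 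Iterating $O(\log\log n)$ times drives the number of edges down to $O(n \log^6 n/\epsilon^2)$ while preserving the spectral approximation with high probability.

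The distributed implementation reduces each primitive to $\widetilde{O}(1)$ invocations of \Cref{assumption:minor_aggregation}. First, the low-stretch spanning tree construction of Ghaffari, Kuhn and Su \cite{10.1145/2767386.2767440} is phrased entirely in terms of part-wise aggregations (star-decompositions and their hierarchical merging), so it carries over verbatim to our minor-distributed setting at a cost of $\widetilde{O}(Q(\rho))$ rounds. Second, stretch estimation reduces to summing edge resistances along tree paths, which is done by the standard root-to-node labeling: orient $T$ and propagate partial sums from leaves to the root by iterated aggregation on subtrees---again an instance of \Cref{assumption:minor_aggregation} on the subgraph induced by $T$, which $\rho$-minor distributes into $\overline{G}$ with the same $\rho$ because $T \subseteq G$. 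The stretch of a non-tree edge $\{u,v\}$ is then recovered from the labels at $u$ and $v$ and the label of their least common ancestor, each of which is disseminated using another call to the aggregation primitive. Third, sampling is a purely local operation requiring no communication; the output $\widetilde{G}$ is a reweighted subgraph of $G$, so the $\rho$-minor distribution of $G$ is inherited without change.

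Combining these ingredients, each outer iteration costs $\widetilde{O}(Q(\rho)\log^c n/\epsilon^2)$ rounds for some constant $c$ arising from the stretch/sampling step, and iterating until the edge count stabilizes at $O(n\log^6 n/\epsilon^2)$ gives the advertised round complexity of $O(Q(\rho)\log^7 n/\epsilon^2)$, with the spectral guarantee holding with high probability by standard matrix-Chernoff arguments applied to the sampling.

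The principal obstacle is making sure that the tree-path computations used in stretch estimation do not inflate the congestion beyond $\rho$. This is resolved by the observation above that $T$ is a subgraph of $G$ and therefore $\rho$-minor distributes into $\overline{G}$ with the same congestion parameter, so every tree aggregation is itself a legitimate instance of \Cref{assumption:minor_aggregation} and inherits the $Q(\rho)$ bound. With this in place, the rest of the proof is bookkeeping that exactly parallels \cite[Corollary~4.5]{DBLP:journals/corr/abs-2012-15675}.
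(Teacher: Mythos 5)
Your proposal implements the wrong algorithm. The $\specspars$ routine of Koutis \cite{DBLP:conf/spaa/Koutis14} is \emph{not} a low-stretch-spanning-tree/stretch-sampling scheme; it is an iterated \emph{spanner}-based scheme: in each phase one computes a bundle of $O(\log^2 n/\epsilon^2)$ spanners via the Baswana--Sen clustering algorithm \cite{DBLP:journals/rsa/BaswanaS07}, keeps all bundle edges, and samples each off-bundle edge with a constant probability (reweighting accordingly), repeating until only $O(n\log^6 n/\epsilon^2)$ edges remain. The paper's proof accordingly reduces everything to the cluster-growing steps of Baswana--Sen (leader sampling, cluster joining, edge selection), each of which is a single instance of \Cref{assumption:minor_aggregation}. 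The AKPW low-stretch tree plus stretch-proportional sampling that you describe is the content of the \emph{ultra-sparsification} step (\Cref{lemma:ultra_sparsification}, proved separately in \Cref{appendix:ultra_sparsification}), and it yields only a $k$-spectral approximation $\mathcal{L}(G)\preceq\mathcal{L}(H)\preceq k\,\mathcal{L}(G)$ with $n-1+m2^{O(\sqrt{\log n\log\log n})}/k$ edges --- not the $(1\pm\epsilon)$ approximation with $O(n\log^6 n/\epsilon^2)$ edges claimed here. Since the total stretch of a distributedly constructible AKPW tree is $m\,2^{O(\sqrt{\log n\log\log n})}$, which is superlinear in $m$, oversampling by stretch upper bounds does not drive the edge count down to the stated bound, so your scheme does not converge to the claimed sparsifier even in principle.

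There is also a secondary implementation gap: even granting your scheme, stretch estimation for arbitrary non-tree edges requires tree-path sums between arbitrary pairs of endpoints (via least common ancestors), which is not a single distributive part-wise aggregation over the parts $\supervertex(u^G)$ and hence is not directly covered by \Cref{assumption:minor_aggregation}; the paper avoids this entirely because Baswana--Sen only requires each node to compare weights of its incident edges and receive broadcasts from its cluster leader. You should restructure the proof around the spanner-based description of Koutis' algorithm, as the paper does.
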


The proof of this corollary is fairly simple (see \cite[Corollary 4.4]{DBLP:journals/corr/abs-2012-15675}), but we give a sketch for completeness.

\begin{proof}[Proof of \Cref{corollary:spectral_sparsification}]
The $\specspars$ algorithm of Koutis iteratively uses the spanner scheme of \citet{DBLP:journals/rsa/BaswanaS07}. The latter algorithm gradually grows clusters. In particular, in each round clusters are sampled at random---a ``leader'' node determines whether the cluster is included in the sample, and then forwards the information to the rest of the cluster. Then, nodes compare the weights of their incident edges to decide whether they will join some cluster, and which incident edges will be added to the spanner. As a result, all the operations of the Baswana-Sun algorithm can be performed via the routine of \Cref{assumption:minor_aggregation}, and the claim follows. 
\end{proof}

\paragraph{Composition of Minors} We also state the extensions of \cite[Lemma 4.6]{DBLP:journals/corr/abs-2012-15675} and \cite[Corollary 4.7]{DBLP:journals/corr/abs-2012-15675}, which are related to the composition of $\rho$-minors. 

\begin{lemma}[Composing Minors]
    \label{lemma:composing_minors}
    Consider a graph $G_2$ which is $\rho_2$-minor distributed into a communication network $\overline{G}$ for which \Cref{assumption:minor_aggregation} holds for some $Q = Q(\rho)$, and a graph $G_1$ which is $\rho_1$-minor distributed into $G_2$. Then, we can compute with high probability and after $\widetilde{O}(Q(\rho_1 \rho_2))$ rounds a $(\rho_1 \times \rho_2)$-minor distribution of $G_1$ into $G$.
\end{lemma}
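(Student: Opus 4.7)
The plan is to compose the two given minor distributions into a single $(\rho_1\rho_2)$-minor distribution of $G_1$ into $\overline G$. For each $u^{G_1}\in V(G_1)$ I would set
\[
S^{G_1\to\overline G}(u^{G_1})\ :=\bigcup_{v^{G_2}\in S^{G_1\to G_2}(u^{G_1})} S^{G_2\to\overline G}(v^{G_2}),
\]
and build the spanning tree $T^{G_1\to\overline G}(u^{G_1})$ as the union of the trees $T^{G_2\to\overline G}(v^{G_2})$ for $v^{G_2}\in S^{G_1\to G_2}(u^{G_1})$, augmented with the $\overline G$-image of every edge of $T^{G_1\to G_2}(u^{G_1})$. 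Each edge of $G_1$ is routed to an edge of $\overline G$ by composing the two given edge-mappings. Connectivity of the new super-node follows from the connectivity of each individual $G_2$-super-node together with the fact that the images of $T^{G_1\to G_2}(u^{G_1})$ link them. A direct counting argument bounds the congestion by $\rho_1\rho_2$: any node of $\overline G$ lies in at most $\rho_2$ super-nodes of $G_2$, each of which in turn lies in at most $\rho_1$ super-nodes of $G_1$; an analogous argument handles edges, splitting into contributions from the $T^{G_2\to\overline G}(\cdot)$ trees and from the images of tree/non-tree edges of $G_2$.

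The substantive step is to materialize the distributed representation demanded by \Cref{definition:low_congestion-minor} at every $u\in V(\overline G)$. Since $G_1$ is $\rho_1$-minor distributed into $G_2$, the leader $\ell^{G_2\to\overline G}(v^{G_2})$ of each $v^{G_2}$ already stores the (at most $\rho_1$) labels $u^{G_1}$ with $v^{G_2}\in S^{G_1\to G_2}(u^{G_1})$, together with the analogous tree-membership and edge-preimage tables for the $G_2$-edges incident to $v^{G_2}$. I would have every such leader broadcast this bundle across its super-node (and along the attached tree edges) using the broadcast direction of the primitive in \Cref{assumption:minor_aggregation} applied to $G_2$. Each $u\in V(\overline G)$ then locally assembles the required lists by taking the union over the $v^{G_2}$ containing $u$, while each edge $e\in E(\overline G)$ that is the image of some $e^{G_2}$ recovers its preimage set from the information delivered to the endpoint of $e^{G_2}$ in $\overline G$.

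The main technical obstacle will be to fit all of these broadcasts within $\widetilde O(Q(\rho_1\rho_2))$ rounds. A naive implementation performs up to $\rho_1$ sequential broadcasts of $O(\log\overline n)$-bit messages per leader on the $\rho_2$-minor $G_2$, costing $\rho_1\cdot Q(\rho_2)$ rounds. The polynomial property posited in \Cref{assumption:minor_aggregation}, namely $Q(\rho_1\rho_2)\le\rho_1^c Q(\rho_2)$ with a universal constant $c\ge 1$, is precisely what is meant to absorb this factor of $\rho_1$; indeed, for the concrete primitives provided by \Cref{lemma:minor_aggregation-congest} and \Cref{lemma:minor_aggregation-ncc} one checks directly (with $c=3$ and $c=2$, respectively) that $\rho_1\cdot Q(\rho_2)\le Q(\rho_1\rho_2)$. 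Equivalently, one may regard the $\rho_1$ concurrent broadcasts on $G_2$ as a single aggregation operation on an expanded structure of total congestion $\rho_1\rho_2$ in $\overline G$, to which \Cref{assumption:minor_aggregation} applies directly at parameter $\rho_1\rho_2$.
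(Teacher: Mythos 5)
Your construction is correct and is essentially the standard composition argument underlying \cite[Lemma 4.6]{DBLP:journals/corr/abs-2012-15675}; the paper itself gives no proof of \Cref{lemma:composing_minors} and simply defers to that reference, so there is nothing to diverge from. The one subtlety you rightly flag is that $\rho_1\cdot Q(\rho_2)\le Q(\rho_1\rho_2)$ does \emph{not} follow from \Cref{assumption:minor_aggregation}, which only bounds $Q(\rho_1\rho_2)$ from above by $\rho_1^c Q(\rho_2)$; your closing observation is the right fix, since the $\rho_1$ per-leader broadcasts can be cast as a single part-wise broadcast over the (already locally known) trees $T^{G_2\to\overline G}(\cdot)$, with each pair $(u^{G_1},v^{G_2})$ forming its own part, and this instance has congestion at most $\rho_1\rho_2$, so \Cref{assumption:minor_aggregation} applies directly at parameter $\rho_1\rho_2$.
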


\begin{corollary}[Parallel Contraction]
    \label{corollary:contracting}
    Consider a graph $G$ which is $\rho$-minor distributed into a communication network $\overline{G}$ for which \Cref{assumption:minor_aggregation} holds for some $Q = Q(\rho)$. If $F$ represents a subset of the edges of graph $G$, we can obtain with high probability a $\rho$-minor distribution of $G / F$ into $\overline{G}$ in $\widetilde{O}(Q(\rho))$ rounds.
\end{corollary}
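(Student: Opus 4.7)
The plan is to reduce the task to (i) computing the connected components of the subgraph $(V(G), F) \subseteq G$, and then (ii) composing the resulting $1$-minor distribution of $G/F$ into $G$ with the given $\rho$-minor distribution of $G$ into $\overline{G}$ via \Cref{lemma:composing_minors}.

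For step (i), I would run a Boruvka-style connected-components algorithm on $G$, where each communication step on $G$ is simulated by the aggregation/broadcast primitive of \Cref{assumption:minor_aggregation}. Initially every leader $\ell(u^G)$ stores $u^G$ as its cluster identifier. In each of $O(\log n)$ phases, every cluster identifies its minimum-ID outgoing $F$-edge via an aggregation within the cluster, paired clusters merge, and the new cluster identifiers are disseminated by a broadcast. As clusters grow across multiple original super-nodes, aggregations within a cluster are handled by invoking \Cref{lemma:composing_minors} to compose the (trivially $1$-congested) current-cluster minor on top of the underlying $\rho$-minor; the polynomial growth condition $Q(\rho_1 \rho_2) \le \rho_1^c Q(\rho_2)$ absorbs the constant blow-up. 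As a by-product of the merges, a spanning tree of each component in $F$ is recorded at the final cluster leader. The total cost is $\widetilde{O}(Q(\rho))$.

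For step (ii), once the partition $\{C_j\}$ is known and a spanning tree in $F$ has been computed for each $C_j$, the graph $G/F$ $1$-minor distributes into $G$: each super-vertex of $G/F$ corresponds to some $C_j \subseteq V(G)$ with the computed spanning tree, while each non-$F$ edge of $G$ between distinct components is mapped to the corresponding edge of $G/F$. Applying \Cref{lemma:composing_minors} with $\rho_1 = 1$ and $\rho_2 = \rho$ then yields a $\rho$-minor distribution of $G/F$ into $\overline{G}$ in $\widetilde{O}(Q(\rho))$ rounds. Preservation of the congestion bound is immediate: the new super-nodes are indexed by components that partition $V(G)$, so each $u \in V(\overline{G})$ lies in no more new super-nodes than old ones; moreover, on each edge of $\overline{G}$, the per-edge count in the definition of a $\rho$-minor is only relabeled---an $F$-edge image that is absorbed into a new tree is compensated by the loss of that image contribution---so it does not increase.

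The main obstacle is step (i): executing connected components in the minor-distributed setting without inflating the congestion across phases. This is handled by the observation that every intermediate cluster partition is itself a $1$-minor of $G$, so repeated minor composition keeps the working congestion at $O(\rho)$ throughout the Boruvka process, while the polynomial dependence of $Q$ on $\rho$ ensures that the $O(\log n)$ phases together still cost only $\widetilde{O}(Q(\rho))$ rounds.
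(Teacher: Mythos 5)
Your proposal is correct and follows essentially the argument the paper relies on: the paper states this corollary without proof, deferring to Forster et al., where it is likewise obtained by computing the connected components of $F$ (with spanning trees) via Boruvka-style merging simulated through the part-wise aggregation primitive, observing that $G/F$ then $1$-minor distributes into $G$, and composing with the given $\rho$-minor distribution via \Cref{lemma:composing_minors}. Your accounting of why the node- and edge-congestion bounds are preserved (components partition $V(G)$; $F$-edge images are merely reclassified from edge-images to tree-edges) is exactly the right justification.
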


Recall that the notation $G / F$ implies the graph obtained from $G$ after contracting all the edges in the set $F \subseteq E(G)$.

\subsection{Ultra-Sparsification: Proof of \Cref{lemma:ultra_sparsification}}
\label{appendix:ultra_sparsification}

The first ingredient required for \Cref{lemma:ultra_sparsification} is a distributed version of the celebrated Alon-Karp-Peleg-West (AKPW) low-stretch spanning tree construction \cite{10.1137/S0097539792224474}, which is due to \citet{10.1145/2767386.2767440}. We commence by stating their definition of a \emph{distributed $N$-node cluster graph}, which incidentally was the basis for \Cref{definition:low_congestion-minor}.
 
 \begin{definition}[Distributed Cluster Graph, \cite{10.1145/2767386.2767440}]
    \label{definition:distributed_cluster}
 A distributed $N$-node cluster graph is a $5$-tuple $\mathcal{G} = (\mathcal{V}, \mathcal{E}, \mathcal{L}, \mathcal{T}, \psi)$ satisfying the following properties:
 
 \begin{enumerate}
     \item $\mathcal{V} = \{S_1, \dots, S_N\}$ forms a partition of the node set into $N$ clusters;
     \item $\mathcal{E}$ represents a multi-set of (weighted) edges;
     \item $\mathcal{L}$ is the set of leaders such that every cluster $S_i$ has \emph{exactly} one leader $\ell_i \in \mathcal{L}$. The ID of the leader node will also serve as the ID of the cluster, while it is assumed that nodes know the ID of their leader, as well as the size of their cluster;
     \item $\mathcal{T} = \{T_1, \dots, T_N\}$ is a set of cluster trees such that each cluster tree $T_i = (S_i, E_i)$ is a (rooted) spanning tree of the induced subgraph $G[S_i]$ of $G$, with root the leader of the cluster $\ell_i \in S_i$ (observe that this implies that the subgraph induced by each cluster $S_i$ is connected);
     \item $\psi: \mathcal{E} \mapsto E$ is a bijective function that maps every edge $\{S_i, S_j\} \in \mathcal{E}$ to some edge $\{u_i, u_j\} \in E$ connecting the corresponding clusters; i.e., it holds that $u_i \in S_i$ and $u_j \in S_j$. It is assumed that the two nodes $u_i$ and $u_j$ know that the edge $\{u_i, u_j\}$ is used to connect their respective clusters, as well as its weight.
 \end{enumerate}
 \end{definition}

Having introduced the concept of a distributed cluster graph, we state the following lemma, which is a direct corollary of the communication primitives we previously described.

\begin{lemma}
    \label{lemma:simulation-distributed_cluster}
    Let $G = (V, E)$ be an $n$-node graph $\rho$-minor distributed into an $\overline{n}$-node communication network $\overline{G} = (\overline{V}, \overline{E})$ for which \Cref{assumption:minor_aggregation} holds for some $Q = Q(\rho)$. If $\mathcal{G} = (\mathcal{V}, \mathcal{E}, \mathcal{L}, \mathcal{T}, \psi)$ is a distributed cluster graph for $G$, the following operations can be performed in $\widetilde{O}(Q(\rho))$ rounds:
    
    \begin{enumerate}
        \item The leader $\ell_i$ of each cluster $S_i$ broadcasts an $O(\log \overline{n})$-bit message to every node in $S_i$;
        \item Computing aggregation functions on $O(\log \overline{n})$-bit inputs simultaneously for all clusters, assuming the tree $T_i$ is known. 
    \end{enumerate}
\end{lemma}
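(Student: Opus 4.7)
The plan is to recognize that a distributed cluster graph $\mathcal{G}$ for $G$ is essentially a $1$-minor distribution of $\mathcal{G}$ into $G$ in the sense of \Cref{definition:low_congestion-minor}, and then to invoke \Cref{lemma:composing_minors} to turn this into a $\rho$-minor distribution of $\mathcal{G}$ into $\overline{G}$. Once this is in place, the two claimed operations follow directly from \Cref{assumption:minor_aggregation}.

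More concretely, first I would verify the correspondence: for each $S_i \in \mathcal{V}$, set the super-node to be $S_i$ itself (viewed as a subset of $V(G)$), the leader to be $\ell_i$, and the spanning tree to be $T_i$. Since $\{S_1,\dots,S_N\}$ forms a partition of $V(G)$, every node of $G$ lies in exactly one super-node, giving node-congestion $1$. Each internal edge of a cluster appears in at most one $T_i$, while each inter-cluster edge is used at most once via $\psi$, so the edge-congestion is also $1$. Thus $\mathcal{G}$ is $1$-minor distributed into $G$, and the distributed representation required by \Cref{definition:low_congestion-minor} is already encoded in the data stored by the distributed cluster graph (each node knows its leader, each cluster knows its tree, and the endpoints of edges $\psi(\{S_i,S_j\})$ know their role).

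Next, I would apply \Cref{lemma:composing_minors} with $\rho_1 = 1$ and $\rho_2 = \rho$ to obtain, with high probability and in $\widetilde{O}(Q(1\cdot\rho)) = \widetilde{O}(Q(\rho))$ rounds, a $\rho$-minor distribution of $\mathcal{G}$ directly into $\overline{G}$. Under this new distribution, the super-nodes of $\mathcal{G}$ viewed from $\overline{G}$ are precisely the unions $\bigcup_{v\in S_i} \supervertex(v)$, and each such super-node has a known leader (namely the image in $\overline{G}$ of $\ell_i$'s leadership role).

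Finally, I would invoke \Cref{assumption:minor_aggregation} on this composed minor: the aggregation primitive immediately gives item~(2), allowing every cluster to compute a distributive aggregate on $O(\log\overline{n})$-bit inputs simultaneously in $Q(\rho)$ rounds. For the broadcast in item~(1), I would observe (as in the proof of \Cref{lemma:minor_aggregation-congest}) that the message pattern of an aggregation can be reversed in time, so broadcast from the leader to the cluster also costs $Q(\rho)$ rounds. Summing all contributions yields the claimed $\widetilde{O}(Q(\rho))$ bound.

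The only mildly subtle point is ensuring that the composition step really preserves the data needed to simulate the cluster trees $T_i$ rather than producing some arbitrary spanning subgraph on each composed super-node; this is where I would rely on the fact that \Cref{lemma:composing_minors} produces a valid distributed representation satisfying \Cref{definition:low_congestion-minor}, and that the aggregation primitive of \Cref{assumption:minor_aggregation} depends only on the existence (and distributed knowledge) of \emph{some} connecting tree within each super-node, not on its particular shape. Everything else is a routine invocation of the black-box primitives stated earlier in the paper.
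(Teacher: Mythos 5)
Your proposal is correct and follows the paper's own argument exactly: observe that the distributed cluster graph is a $1$-minor distribution of $\mathcal{G}$ into $G$, compose via \Cref{lemma:composing_minors} to get a $\rho$-minor distribution into $\overline{G}$ in $\widetilde{O}(Q(\rho))$ rounds, and then apply \Cref{assumption:minor_aggregation}. Your write-up simply spells out the congestion-$1$ verification and the broadcast-by-reversal step in more detail than the paper does.
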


\begin{proof}
The definition of a distributed $N$-node cluster graph (\Cref{definition:distributed_cluster}) implies that $\mathcal{G}$ is $1$-minor distributed over $G$, and in turn $\rho$-minor distributed into $\overline{G}$. Note that the induced distributed mapping can be obtained using $\widetilde{O}(Q(\rho) )$ rounds of communication by virtue of \Cref{lemma:composing_minors}. Thus, \Cref{assumption:minor_aggregation} leads to the desired claim.
\end{proof}

As a result, it follows that the $\textsc{SplitGraph}$ algorithm in \cite{10.1145/2767386.2767440} can be simulated on a graph $G$ which is $\rho$-minor distributed into $\overline{G}$ after $n^{o(1)} Q(\rho)$ communication rounds---under \Cref{assumption:minor_aggregation}. In particular, this observation directly gives a distributed construction of a low-stretch spanning tree:

\begin{lemma}[\cite{DBLP:journals/corr/abs-2012-15675,10.1145/2767386.2767440}]
    \label{lemma:overall_stretch}
Consider an $n$-node $m$-edge graph $G$ which is $\rho$-minor distributed into an $\overline{n}$-node communication network $\overline{G}$ for which \Cref{assumption:minor_aggregation} holds for some $Q = Q(\rho)$. Then, we can construct a spanning tree $T$ of $G$ after $n^{o(1)} Q(\rho)$ rounds such that the nodes know upper bounds on the corresponding stretches that sum to at most $m 2^{O(\sqrt{\log n \log \log n})}$.
\end{lemma}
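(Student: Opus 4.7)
The plan is to simulate the distributed AKPW low-stretch spanning tree construction of Ghaffari, Karrenbauer, Kuhn, Lenzen, and Patt-Shamir on the minor graph $G$, by reducing each of its basic operations to the cluster-graph primitives established in \Cref{lemma:simulation-distributed_cluster}. Recall that the AKPW algorithm proceeds in $O(\sqrt{\log n \log \log n})$ hierarchical phases: each phase grows clusters of geometrically increasing radius, contracts them, and relies on a \textsc{SplitGraph} subroutine whose internal operations consist solely of (i) broadcasts from each cluster leader to its cluster along the cluster tree, (ii) distributive aggregations within each cluster, and (iii) lookups on edges bridging distinct clusters, each with the weight stored at its endpoints.

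First I would verify that the intermediate (coarsened) cluster graphs produced by this hierarchy remain $\rho$-minor distributed into $\overline{G}$. Starting from the initial $\rho$-minor distribution of $G$, each merging step is a contraction of edges of $G$, so \Cref{corollary:contracting} produces, in $\widetilde{O}(Q(\rho))$ rounds, a valid $\rho$-minor distribution of the resulting graph into $\overline{G}$ with the same congestion budget. With this invariant in place, the distributed cluster graph built on top of the coarsened graph is in particular $1$-minor distributed into the coarsened graph (\Cref{definition:distributed_cluster}), hence by \Cref{lemma:composing_minors} it is $\rho$-minor distributed into $\overline{G}$. \Cref{lemma:simulation-distributed_cluster} then guarantees that each of the three \textsc{SplitGraph} primitives above runs in $\widetilde{O}(Q(\rho))$ rounds of the underlying model.

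Next I would aggregate the costs. In the GKKLP analysis the algorithm, when run on its own network, terminates in $n^{o(1)}$ rounds with a spanning tree whose total stretch is bounded by the classical AKPW guarantee $m\cdot 2^{O(\sqrt{\log n \log \log n})}$. Under our simulation, each communication round on $G$ is replaced by $\widetilde{O}(Q(\rho))$ rounds on $\overline{G}$; since the dependence of $Q$ on $\rho$ is polynomial (\Cref{assumption:minor_aggregation}) and polylogarithmic factors are absorbed into $n^{o(1)}$, we obtain the stated $O(n^{o(1)} Q(\rho))$ round complexity. The per-edge stretch upper bound is determined at the phase in which the edge is absorbed into the tree (as a function of the cluster radius at that level) and can be delivered to both endpoints of each tree edge by a single additional broadcast through \Cref{assumption:minor_aggregation}, so that every node indeed learns the advertised upper bounds.

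The principal obstacle I anticipate is certifying that the $\rho$-minor invariant is genuinely preserved across the $O(\sqrt{\log n \log \log n})$ phases: naive composition of minors via \Cref{lemma:composing_minors} would multiply congestion phase after phase and destroy the bound. The key point that prevents this blow-up is that AKPW merging corresponds to graph contraction rather than arbitrary minor composition, so \Cref{corollary:contracting} applies and the congestion budget $\rho$ remains intact throughout. Once this invariant is established, the rest of the argument is a mechanical translation of the $\rho = 1$ simulation of GKKLP into the general-$\rho$ setting afforded by \Cref{assumption:minor_aggregation}.
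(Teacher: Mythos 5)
Your proposal follows the same route as the paper: the paper establishes this lemma by observing that every operation of the GKKLP \textsc{SplitGraph} routine is a cluster-graph broadcast or aggregation, which \Cref{lemma:simulation-distributed_cluster} (itself built on \Cref{lemma:composing_minors}) implements in $\widetilde{O}(Q(\rho))$ rounds, and then multiplying by the $n^{o(1)}$ round count of the GKKLP simulation. Your additional care about preserving the $\rho$-minor invariant across phases via \Cref{corollary:contracting} is a correct filling-in of a detail the paper leaves implicit, not a departure from its argument.
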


Importantly, it turns out that the guarantee of \Cref{lemma:overall_stretch} suffices to sample edges by stretch, as implied by the following lemma.

\begin{lemma}[\cite{5671167}]
    \label{lemma:sampling-stretch}
Consider an $n$-node graph $G$ and a tree $T$ such that the nodes know upper bounds on the corresponding stretches that sum up to $\alpha$. Then, for any parameter $k$ there is a sampling procedure, implementable locally, that gives a graph $H$ which satisfies with high probability the following:

\begin{enumerate}
    \item $\mathcal{L}(G) \preceq \mathcal{L}(H) \preceq k \mathcal{L}(G)$;
    \item $H$ contains the edges of $T$ and $O(\alpha \log n /k)$ additional edges.
\end{enumerate}
\end{lemma}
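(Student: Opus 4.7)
My plan is to follow the standard Koutis--Miller--Peng style ultra-sparsification analysis, adapted so that every decision is local. I would first describe the sampling procedure, then bound the number of retained edges in expectation, and finally establish the spectral two-sided inequality via matrix concentration.

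The procedure I would use is as follows. Initialize $H$ to contain every edge of the spanning tree $T$ with its original weight. For each off-tree edge $e = \{u,v\} \in E(G) \setminus E(T)$, let $s_e$ be the upper bound on $\mathrm{str}_T(e)$ known at $u$ and $v$, and set
\begin{equation}
    p_e \;:=\; \min\!\left\{1,\; C \,\frac{s_e \log n}{k}\right\},
\end{equation}
where $C$ is a sufficiently large absolute constant. Flip independent biased coins at the endpoints of $e$ (using shared randomness if necessary, or by designating the endpoint with smaller ID to perform the flip and notify the other) and, with probability $p_e$, include $e$ in $H$ with reweighted weight $w_e / p_e$; otherwise drop it. No global coordination is needed: every endpoint only consults the stretch upper bound that it already knows.

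For the edge count, linearity of expectation gives $\mathbb{E}[|E(H)\setminus E(T)|] \le \sum_e p_e \le (C\log n / k)\sum_e s_e \le C\alpha\log n/k$. Because the trials are independent, a Chernoff bound promotes this to a high-probability $O(\alpha \log n /k)$ bound (with standard padding when $\alpha\log n/k = o(\log n)$, by raising $C$). The lower bound $\mathcal{L}(G) \preceq \mathcal{L}(H)$ is the easy direction, since $H$ retains all tree edges at full weight and the off-tree edges are rescaled to be unbiased estimators: in expectation $\mathbb{E}[\mathcal{L}(H)] = \mathcal{L}(G)$, and the tree contribution plus concentration ensure that $\mathcal{L}(H) \succeq \mathcal{L}(G)$ with high probability.

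The main technical ingredient is the upper bound $\mathcal{L}(H) \preceq k\,\mathcal{L}(G)$. Here I would invoke the matrix Chernoff (Tropp) inequality, in the Spielman--Srivastava framework: writing $\mathcal{L}(H) - \mathcal{L}(T) = \sum_e X_e$ where each $X_e = (\mathbf{1}[e \text{ sampled}]/p_e)\, w_e \mathbf{b}_e \mathbf{b}_e^T$ is a rank-one random matrix with $\mathbb{E}[X_e] = w_e \mathbf{b}_e\mathbf{b}_e^T$, the operator norm bound required for matrix Chernoff in the $\mathcal{L}(G)$-metric reduces to controlling $\|\mathcal{L}(G)^{\dagger/2} X_e \mathcal{L}(G)^{\dagger/2}\|$. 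The key structural fact is that, because $T$ is a spanning subgraph, the leverage score of $e$ with respect to $\mathcal{L}(T)$ (and hence with respect to $\mathcal{L}(G)$) is at most $s_e$, so with our choice of $p_e$ each rescaled contribution has norm $O(k/\log n)$ in the appropriate inner product. Matrix Chernoff then yields $\mathcal{L}(H) \preceq k\,\mathcal{L}(G)$ with probability at least $1 - n^{-\Omega(1)}$ for $C$ large enough.

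The main obstacle, and the part where I would spend the most care, is the bookkeeping for the spectral upper bound: one must be precise about which metric is used for the matrix concentration (effective resistance with respect to $T$ vs.\ with respect to $G$), and about how the factor of $k$ is absorbed into both the sampling probabilities and the tolerated spectral slack. All of this, however, is exactly what is carried out in~\cite{5671167}, so the proof of \Cref{lemma:sampling-stretch} ultimately amounts to instantiating their analysis with the stretch upper bounds provided by \Cref{lemma:overall_stretch} and observing that every sampling decision is local.
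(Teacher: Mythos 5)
The paper does not actually prove this lemma---it imports it verbatim from Koutis--Miller--Peng---so your reconstruction has to stand on its own, and it has a genuine gap in the lower bound $\mathcal{L}(G) \preceq \mathcal{L}(H)$. You keep the tree $T$ at its \emph{original} weight and sample each off-tree edge with probability $p_e = \min\{1, C s_e \log n/k\}$, reweighting by $1/p_e$. Unbiasedness in expectation does not yield a one-sided matrix inequality with high probability: measured in the $\mathcal{L}(G)$-metric, each rescaled off-tree term has norm up to $\lev_G(e)/p_e \leq k/(C \log n)$, so matrix Bernstein only controls the deviation of the off-tree sum from its mean to within $\Theta(\sqrt{k/C})\,\mathcal{L}(G)$. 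A downward fluctuation of that magnitude is harmless for the upper bound (where you have slack $k$), but it makes the lower bound vacuous as soon as $k \gg C$. The problem is with the construction, not merely the proof: take $G$ a cycle and $T$ the Hamiltonian path, so the single off-tree edge $e$ has stretch $n-1$; for $k = 2C(n-1)\log n$ your procedure drops $e$ with probability $1/2$, and then $\mathcal{L}(H) = \mathcal{L}(T) \not\succeq \mathcal{L}(G)$ in the direction $\mathcal{L}(G)^{\dagger}\mathbf{b}_e$.

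The missing idea---the heart of the construction in \cite{5671167}---is that the condition-number budget $k$ must be spent on the tree rather than on sampling slack. One first passes to $\hat{G}$, obtained from $G$ by scaling the tree edges up by $\Theta(k)$, so that $\mathcal{L}(G) \preceq \mathcal{L}(\hat{G}) \preceq O(k)\,\mathcal{L}(G)$ holds deterministically. Relative to $\hat{G}$, the leverage score of an off-tree edge is at most $s_e/\Theta(k)$, so sampling with your probabilities $p_e \propto s_e \log n/k$ is now a genuine constant-factor oversampling of $\hat{G}$: matrix Chernoff gives the two-sided bound $\frac{1}{2}\mathcal{L}(\hat{G}) \preceq \mathcal{L}(H) \preceq \frac{3}{2}\mathcal{L}(\hat{G})$, and after rescaling $H$ by a constant both inequalities of the lemma follow, with the same $O(\alpha \log n/k)$ count of off-tree edges and the same purely local implementation. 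Your edge-count argument, your upper-bound calculation, and the observation that stretch upper-bounds leverage are all correct; only the tree scaling is missing, and without it the claimed preconditioning guarantee is false.
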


The final step for establishing \Cref{lemma:ultra_sparsification} uses the parallel elimination procedure of \citet{DBLP:journals/mst/BlellochGKMPT14}, which requires a logarithmic number of rounds under the $\pram$ model of computation. Thus, we can show the following lemma:

\begin{lemma}
    \label{lemma:parallel_contraction}
Consider an $n$-node graph $H$ which is $\rho$-minor distributed into an $\overline{n}$-node communication network $\overline{G}$ for which \Cref{assumption:minor_aggregation} holds for some $Q = Q(\rho)$. Moreover, let $T$ be a spanning tree of $H$ and $W$ be the set of off-tree edges of $H$ with respect to $T$. Then, there is an algorithm which runs in $O(Q(\rho) \log n)$ rounds and returns a graph $\widehat{G}$, $1$-embeddable into $H$, satisfying the following:

\begin{enumerate}
    \item $\widehat{G}$ contains $O(|W|)$ nodes and edges;
    \item There are operators $\mat{Z}_1$ and $\mat{Z}_2$, which can be evaluated in $O(Q(\rho) \log n)$ rounds, such that 
    \begin{equation*}
        \mathcal{L}(H)^{\dagger} = \mat{Z}_1^T 
        \begin{bmatrix}
            \mat{Z}_2 & 0 \\
            0 & \mathcal{L}(\widehat{G})^{\dagger}
        \end{bmatrix}
        \mat{Z}_1.
    \end{equation*}
\end{enumerate}
\end{lemma}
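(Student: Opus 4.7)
The plan is to adapt the parallel tree-elimination scheme of Blelloch et al.\ \cite{DBLP:journals/mst/BlellochGKMPT14} and to argue that each of its $O(\log n)$ parallel phases can be simulated in $O(Q(\rho))$ rounds of the underlying network. The algorithm proceeds in phases: within each phase one identifies a constant-fraction subset $F$ of current tree-vertices that are either leaves (current tree-degree $1$) or degree-$2$ tree-vertices, and that carry no off-tree edge of $W$; an MIS/matching step (executable via \Cref{assumption:minor_aggregation}) then restricts to a subset of $F$ whose members are pairwise non-adjacent, so that their Schur-eliminations commute and can be performed in parallel. A degree-$1$ vertex is eliminated by simply deleting it from the Laplacian; a degree-$2$ vertex between tree-neighbours $u$ and $v$ is replaced by a single series-combined edge $\{u,v\}$, possibly merged in parallel with a pre-existing $\{u,v\}$. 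No edge of $W$ is ever touched, and no endpoint of an edge in $W$ is ever eliminated.

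\textbf{Termination and sparsity.} A standard analysis of parallel greedy tree-reduction shows that a constant fraction of the surviving eligible vertices is removed per phase, so after $O(\log n)$ phases the remaining vertex set consists solely of endpoints of $W$ together with ``branching'' tree-vertices of current degree $\geq 3$. Since the number of such branching tree-vertices is at most the number of leaves of the reduced tree, and every such leaf must carry an off-tree edge, the surviving graph $\widehat{G}$ has $O(|W|)$ vertices and hence $O(|W|)$ edges. Moreover $\widehat{G}$ is $1$-embeddable into $H$: each surviving vertex represents the contracted subtree of all vertices eliminated into it, and each edge of $\widehat{G}$ is mapped injectively either to the unique original edge of $W$ it represents, or to one distinguished surviving tree-edge of the series-combination it represents, giving a congestion-$1$ embedding.

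\textbf{Distributed simulation and operators.} Each phase requires (i) computing current tree-degrees, (ii) selecting the commuting elimination set $F$, (iii) contracting the eliminated paths, and (iv) recording, at each eliminated vertex, a constant-size local linear map that implements the corresponding forward/backward substitution. Operations (i)--(iii) are distributive aggregations on constant-radius neighbourhoods of $F$ in the current minor, hence executable in $O(Q(\rho))$ rounds via \Cref{assumption:minor_aggregation} and \Cref{corollary:contracting}, while (iv) is purely local. After $O(\log n)$ phases we obtain $\widehat{G}$ as a $1$-minor distribution into $H$ together with operators $Z_1, Z_2$ of elimination-depth $O(\log n)$; exactness of block Cholesky across phases then yields
\begin{equation*}
\mathcal{L}(H)^{\dagger} = Z_1^T
\begin{bmatrix}
Z_2 & 0 \\
0 & \mathcal{L}(\widehat{G})^{\dagger}
\end{bmatrix}
Z_1.
\end{equation*}
Each level of the resulting elimination DAG can be evaluated by a single round of aggregation in $O(Q(\rho))$ rounds, so evaluating $Z_1 \mathbf{x}$, $Z_1^T \mathbf{x}$, or $Z_2 \mathbf{x}$ costs $O(Q(\rho) \log n)$ rounds in total, as claimed.

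\textbf{Main obstacle.} The principal bookkeeping difficulty will be maintaining the $\rho$-minor distribution of the evolving graph after repeated path-contractions, so that each phase remains a valid instance of the $\rho$-congested part-wise aggregation problem; repeated invocation of \Cref{corollary:contracting} is what keeps the per-phase cost at $O(Q(\rho))$ rather than allowing the congestion to blow up across phases. A secondary care-point is ensuring that the ``commuting'' condition for parallel Schur elimination is actually implemented by the local non-adjacency restriction on $F$, which is why the MIS-style thinning step is needed before invoking \Cref{corollary:contracting}.
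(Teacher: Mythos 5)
Your proposal is correct and follows the same route the paper takes: the paper establishes this lemma by adapting the parallel elimination procedure of Blelloch et al.\ (degree-$1$/degree-$2$ tree eliminations over $O(\log n)$ phases), simulating each phase in $O(Q(\rho))$ rounds via \Cref{assumption:minor_aggregation}, which is exactly your plan. Your write-up in fact supplies more detail (the independence/thinning step, the $O(|W|)$ counting via branching vertices, and the congestion-$1$ edge mapping) than the paper, which states the lemma with only a citation.
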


With these pieces in place, \Cref{lemma:ultra_sparsification} follows directly from \Cref{lemma:overall_stretch}, \Cref{lemma:sampling-stretch}, and \Cref{lemma:parallel_contraction}.

\subsection{Sparsified Cholesky: Proof of \Cref{lemma:eliminate}}
\label{appendix:eliminate}

The proof of \Cref{lemma:eliminate} mainly relies on a distributed implementation of the \emph{Schur Complement Chain} (SCC) construction of \citet{10.1145/2897518.2897640}. In particular, the first step is to formalize a notion of almost-independence:

\begin{definition}
    \label{definition:DD}
   A matrix $\mat{M}$ is $\alpha$-diagonally dominant (henceforth $\alpha$-DD) if 
   \begin{equation*}
       \mat{M}_{i, i} \geq (1 + \alpha) \sum_{j \neq i} |\mat{M}_{i, j}|, \quad \forall i.
   \end{equation*}
   Moreover, an index set $F$ is $\alpha$-DD if $\mat{M}_{[F, F]}$ is $\alpha$-DD.
\end{definition}

An important observation is that computing the inverse $\mat{M}^{-1}[F, F]$ for an $\alpha$-DD set can be efficiently performed using a preconditioned gradient descent method. In this context, \citet{10.1145/2897518.2897640} give a simple sampling algorithm for finding ``large'' $\alpha$-DD sets given a Laplacian matrix. More precisely, their algorithm initially selects a random subset of nodes, and then it filters out these which do not met the condition of \Cref{definition:DD}. This leads to the following result:

\begin{lemma}
    \label{lemma:DD}
    Let $G$ be an $n$-node graph $\rho$-minor distributed into a communication network $\overline{G}$ for which \Cref{assumption:minor_aggregation} holds for some $Q = Q(\rho)$. Then, if $\mathcal{L}$ is the Laplacian matrix of $G$ and $\alpha \geq 0$ some parameter, there is an algorithm which computes an $\alpha$-DD subset $F$ of $\mathcal{L}$ of size at least $n/(8(1 + \alpha))$ in $O(Q(\rho) \log n)$ rounds with high probability.
\end{lemma}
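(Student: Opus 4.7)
I would distributedly simulate the two-step sampling-and-filter algorithm of Kyng et al.\ using the primitives afforded by \Cref{assumption:minor_aggregation}. First, each leader $\ell(u^G)$ independently samples $u^G$ into a candidate set $F_0$ with probability $p := 1/(4(1+\alpha))$ and broadcasts this bit to its super-node using \Cref{assumption:minor_aggregation} in $O(Q(\rho))$ rounds. To carry out the filter, the leader needs the weighted degree $d_{u^G} := \mathcal{L}_{u^G,u^G}$ and the restricted neighborhood sum $S_{u^G} := \sum_{v^G \sim u^G,\, v^G \in F_0} w_{u^G v^G}$. Both quantities are entries of a single matrix-vector product of the weighted adjacency matrix $A$ of $G$---whose non-zeroes are stored at edge endpoints, consistent with the hypothesis of \Cref{corollary:matrix_vector_product}---against the all-ones vector $\mathbf{1}$ and against the indicator $\mathbf{1}_{F_0}$ respectively. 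By \Cref{corollary:matrix_vector_product} both products are delivered at the leaders in $O(Q(\rho))$ rounds. The leader then locally declares $u^G \in F$ iff $u^G \in F_0$ and $d_{u^G} \geq (1+\alpha)\, S_{u^G}$; by construction this enforces \Cref{definition:DD} for $\mathcal{L}[F,F]$ deterministically.

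\textbf{Expected size.} Conditional on $u^G \in F_0$, the random variable $S_{u^G}$ is a sum of independent contributions $w_{u^G v^G} \mathbb{1}[v^G \in F_0]$ with total mean $p\, d_{u^G}$ (the sampling of $u^G$ is independent of its neighbors' samples). Markov's inequality then gives $\Pr[S_{u^G} > d_{u^G}/(1+\alpha) \mid u^G \in F_0] \leq p(1+\alpha) = 1/4$, so $\Pr[u^G \in F] \geq 3p/4$ and hence $\mathbb{E}[|F|] \geq 3n/(16(1+\alpha))$, which is safely above the target $n/(8(1+\alpha))$.

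\textbf{Obstacle and boosting.} The main technical nuisance is upgrading this expectation bound to a high-probability guarantee: the filter indicators $\{\mathbb{1}[u^G \in F]\}$ share the common randomness of $F_0$ and are not independent, so a straightforward Chernoff concentration on $|F|$ is unavailable. The cleanest fix is to run the sample-filter scheme $\Theta(\log n)$ times with fresh randomness, evaluate $|F|$ at each iteration via a single distributive aggregation (absorbed in $O(Q(\rho))$ since summing a single bit per $u^G$ is a special case of \Cref{assumption:minor_aggregation}), and return the largest surviving set. A one-sided Markov argument applied to the nonnegative random variable $n - |F|$ shows that a single iteration yields $|F| \geq n/(8(1+\alpha))$ with a probability bounded below by a constant (depending only on $\alpha$), and $\Theta(\log n)$ independent repetitions then drive the overall failure probability below $1/\mathrm{poly}(n)$. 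Multiplying through, the total round complexity is $O(Q(\rho)\log n)$, matching the claim. A sharper bounded-differences analysis in the spirit of Kyng-Sachdeva could in principle shave the repetition overhead, but is not required for the stated bound.
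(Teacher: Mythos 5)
Your proposal is correct and takes essentially the same route as the paper, which simply defers to the sample-and-filter algorithm of Kyng et al.\ and its distributed implementation in Forster et al.\ (Lemma 6.7 there); your reconstruction of that algorithm via \Cref{corollary:matrix_vector_product} and the aggregation primitive of \Cref{assumption:minor_aggregation} is accurate, as is the expectation bound. The one caveat is your boosting step: the reverse-Markov argument yields a per-iteration success probability of order $1/(1+\alpha)$, so $\Theta(\log n)$ independent repetitions give a high-probability guarantee only when $\alpha = O(1)$; for general $\alpha$ one needs $\Theta((1+\alpha)\log n)$ repetitions, which would exceed the stated $O(Q(\rho)\log n)$ bound. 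Since the lemma is only ever invoked with $\alpha = 4$, this does not affect anything downstream.
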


Indeed, the algorithm of \citet{10.1145/2897518.2897640} determines an $\alpha$-DD subset of size $n/(8(1 + \alpha))$, while the round-complexity guarantee follows similarly to the proof in \citep[Lemma 6.7]{DBLP:journals/corr/abs-2012-15675}. Here we should note that the global aggregation steps required in the distributed implementation of \citep[Lemma 6.7]{DBLP:journals/corr/abs-2012-15675} can be trivially performed in $O(Q(1))$ rounds.

The next step is to construct an operator that approximates $\mathcal{L}^{-1}_{[F, F]}$, where $F$ is an $\alpha$-DD set, and can be efficiently applied to vectors. This is ensured by the following lemma:

\begin{lemma}[\cite{DBLP:journals/corr/abs-2012-15675}]
    \label{lemma:jacobi}
    Let $G$ be a graph $\rho$-minor distributed into an $\overline{n}$-node communication network $\overline{G}$ for which \Cref{assumption:minor_aggregation} holds for some $Q = Q(\rho)$. Moreover, let $\mathcal{L}$ be the Laplacian matrix associated with $G$, and $F$ be a subset of $V(G)$ such that $\mathcal{L}_{[F, F]}$ is $\alpha$-DD for some $\alpha \geq 4$. Then, for any vector $\vec{b}$ stored on the leaders of the super-nodes, there is an algorithm which returns in $O(Q(\rho) \log(1/\eps))$ rounds the vector $\mat{Z} \vec{b}$ stored on the same nodes, where $\mat{Z}$ is a linear operator such that 
    
    \begin{equation*}
        \mathcal{L}_{[F, F]} \preceq \mat{Z}^{-1} \preceq \mathcal{L}_{[F, F]} + \eps \cdot \schur(\mathcal{L}, F),
    \end{equation*}
    for any sufficiently small $\eps > 0$.
\end{lemma}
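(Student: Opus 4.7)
The plan is to implement a truncated Neumann series (equivalently, a preconditioned Richardson/Jacobi iteration) for $M := \mathcal{L}_{[F,F]}$, using the $\alpha$-DD property to guarantee geometric convergence. Write $M = D - B$, where $D$ is the diagonal of $M$ and $B$ collects the negated off-diagonal entries; since $M$ is a principal submatrix of a Laplacian, $B$ has non-negative entries. Set $\tilde{B} := D^{-1/2} B D^{-1/2}$. The $\alpha$-DD condition (Definition \ref{definition:DD}) gives $D \succeq (1+\alpha)\diag(B\mathbf{1})$; combining this with the standard bound $\|D_B^{-1/2} B D_B^{-1/2}\|_{\mathrm{op}} \leq 1$ for a symmetric nonnegative matrix with diagonal-of-row-sums $D_B$ (the normalized-adjacency bound), I obtain $\|\tilde{B}\|_{\mathrm{op}} \leq 1/(1+\alpha) \leq 1/5$. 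I then define
\begin{equation*}
    Z \; := \; D^{-1/2}\!\left(\sum_{k=0}^{K-1} \tilde{B}^k\right)\! D^{-1/2}, \qquad K = \Theta(\log(1/\epsilon)),
\end{equation*}
so that $Z \preceq M^{-1}$ (immediately giving $M \preceq Z^{-1}$) and the truncation error $M^{-1} - Z = D^{-1/2}\tilde{B}^{K}(I - \tilde{B})^{-1}D^{-1/2}$ decays geometrically at rate $1/5$.

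For the distributed implementation, I apply $Z$ to $\mathbf{b}$ by maintaining an intermediate vector stored on the leader nodes $\ell(u^G)$, $u^G \in V(G)$, and iterating $K$ times: each step consists of local rescaling by $D^{\pm 1/2}$ (trivial, since the diagonal entry at a leader is known locally) followed by a matrix-vector product against $B$ (or $\tilde{B}$). Since $B$ is supported on the edges of $G[F]$ and $G$ is $\rho$-minor distributed into $\overline{G}$, each matrix-vector product takes $O(Q(\rho))$ rounds under \Cref{assumption:minor_aggregation} by \Cref{corollary:matrix_vector_product}. Summing over the $K = O(\log(1/\epsilon))$ iterations yields the claimed round complexity $O(Q(\rho)\log(1/\epsilon))$.

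The main obstacle is the sharper upper bound $Z^{-1} \preceq M + \epsilon \cdot \schur(\mathcal{L}, F)$ rather than the easier $Z^{-1} \preceq (1+\epsilon) M$. Using the closed form $Z = D^{-1/2}(I - \tilde{B}^{K})(I - \tilde{B})^{-1} D^{-1/2}$, one computes
\begin{equation*}
    Z^{-1} - M \; = \; D^{1/2}(I-\tilde{B})\, \tilde{B}^{K}(I-\tilde{B}^{K})^{-1} D^{1/2},
\end{equation*}
so the task is to show this residual is dominated by $\epsilon \cdot \schur(\mathcal{L}, F)$ once $K = \Theta(\log(1/\epsilon))$. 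The key observation I plan to use is that the $\alpha$-DD ``slack'' $D - \diag(B\mathbf{1})$ is exactly the diagonal of external edges from $F$ to $F^c$, and that the Laplacian $\schur(\mathcal{L}, F)$ on $F$ (Schur complements of Laplacians are Laplacians) spectrally captures this external contribution; together with $\|\tilde{B}\| \leq 1/5$, a geometrically small $\tilde{B}^K$ factor turns the PSD bound $D^{1/2}(I - \tilde B) D^{1/2} = M$ into the refined bound against $\schur(\mathcal{L}, F)$. The underlying algebraic identities follow the analysis of Kyng et al.\ \cite{10.1145/2897518.2897640}; the new contribution is the distributed execution through Corollaries \ref{corollary:matrix_vector_product} and the $\rho$-minor framework.
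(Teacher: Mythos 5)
Your proposal follows essentially the same route as the paper, whose proof simply invokes the Jacobi-iteration guarantee of Kyng et al.\ \cite{10.1145/2897518.2897640} for the spectral bounds and implements each of the $O(\log(1/\epsilon))$ iterations as a matrix--vector product in $O(Q(\rho))$ rounds via \Cref{corollary:matrix_vector_product}; your Neumann-series derivation of $\|\tilde{B}\|\le 1/(1+\alpha)$ and the residual identity just spell out the content of that cited lemma, and you correctly defer the one genuinely nontrivial step---dominating the truncation error by $\epsilon\cdot\schur(\mathcal{L},F)$ rather than by $\epsilon\,\mathcal{L}_{[F,F]}$---to the same source, exactly as the paper does. (Minor caveat: take $K$ even, so that the tail $\tilde{B}^{K}(I-\tilde{B})^{-1}$ is PSD and the lower bound $\mathcal{L}_{[F,F]}\preceq Z^{-1}$ actually holds, since $\tilde{B}$ need not be PSD.)
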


Again, this lemma follows from the guarantee in \cite{10.1145/2897518.2897640} regarding the \emph{Jacobi procedure}, as well as by directly adapting the distributed implementation in \cite{DBLP:journals/corr/abs-2012-15675} using \Cref{corollary:matrix_vector_product}.

\paragraph{Approximating the Schur Complement.} Moreover, $\alpha$-DD sets will be useful in the approximation of the Schur complement induced by the complementary subset of nodes. First, let us recall a combinatorial view of the Schur complement as a Laplacian matrix with weights estimated by certain random walks:

\begin{lemma}[\cite{durfee2019fully}]
    \label{lemma:laplacian-schur}
    Let $G$ be an $n$-node weighted graph and a subset of nodes $\mathcal{T}$. Moreover, consider parameters $0 < \eps < 1$ and $\mu = O(\log n /\eps^2)$. If $H$ is an initially empty graph, repeat for every edge $\{u, v\} \in E(G)$ and for $\mu$ iterations the following procedure:
    
    \begin{enumerate}
        \item Simulate a random walk starting from $u$ until it first hits $\mathcal{T}$ at some node $t_1$;
        \item Simulate a random walk starting from $v$ until it first hits $\mathcal{T}$ at some node $t_2$;
        \item Combine these two walks to get a walk $t_1 = u_0, \dots, u_{\ell} = t_2$, where $\ell$ is the length of the combined walk.
        \item Add the edge $\{t_1, t_2\}$ to $H$ with weight 
        \begin{equation*}
            \frac{1}{\mu \sum_{i=0}^{\ell-1} 1 / \vec{w}(u_i, u_{i+1})}.
        \end{equation*}
    \end{enumerate}
    Then, the resulting graph $H$ satisfies $\mathcal{L}(H) \approxeps \schur(G, \mathcal{T})$ with high probability.
\end{lemma}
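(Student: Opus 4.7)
The plan is to follow the combinatorial interpretation of the Schur complement established in \cite{durfee2019fully}; the lemma is really a direct quotation of that work, so my proof proposal is an outline of the underlying argument rather than new mathematics.

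First I would establish the expectation identity: for a single sample associated with edge $e = \{u,v\}$, the random edge added to $H$ is an unbiased estimator of the corresponding contribution to $\schur(G, \mathcal{T})$. The clean way to do this is to decompose $\mathcal{L}(G)$ edge-by-edge as $\mathcal{L}(G) = \sum_{e} w_e \mathbf{b}_e \mathbf{b}_e^T$ and to use the classical identity
\begin{equation}
    \schur(G, \mathcal{T}) = \mathcal{L}(G)_{[\mathcal{T},\mathcal{T}]} - \mathcal{L}(G)_{[\mathcal{T},F]} \mathcal{L}(G)_{[F,F]}^{\dagger} \mathcal{L}(G)_{[F,\mathcal{T}]},
\end{equation}
where $F := V \setminus \mathcal{T}$. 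The matrix $P := -\mathcal{L}(G)_{[F,F]}^{-1} \mathcal{L}(G)_{[F,\mathcal{T}]}$ is precisely the harmonic extension / hitting distribution matrix: $P_{v,t}$ equals the probability that a random walk started at $v \in F$ first enters $\mathcal{T}$ at $t$. Expanding this Neumann series edge-by-edge shows that the expected weight of the random edge $\{t_1, t_2\}$ produced from $e$ equals exactly the contribution of $e$ to $\schur(G, \mathcal{T})$, with the resistance-along-the-walk factor accounting for the "$1/w$" harmonic weighting (this is the content of Section 3 of \cite{durfee2019fully}).

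Next I would upgrade the expectation statement to a spectral concentration statement. For each edge $e$ and each of the $\mu$ samples, one obtains a rank-one matrix $w_{e,i} \mathbf{b}_{t_1 t_2} \mathbf{b}_{t_1 t_2}^T$, and these random matrices are independent across $(e,i)$. Summing and averaging, $\mathcal{L}(H) = \sum_{e,i} \frac{1}{\mu} X_{e,i}$ is a sum of independent PSD matrices with expectation $\schur(G, \mathcal{T})$. The main analytic step is a matrix Chernoff / Bernstein bound applied to the whitened sum $\schur(G,\mathcal{T})^{-1/2} \mathcal{L}(H) \schur(G,\mathcal{T})^{-1/2}$; the effective resistance across an edge gives the operator-norm bound on each summand, and $\mu = \Theta(\log n / \epsilon^2)$ suffices to push the failure probability below $1/\mathrm{poly}(n)$ while guaranteeing $(1\pm\epsilon)$-spectral approximation on the range of $\schur(G,\mathcal{T})$.

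The main obstacle in a fully self-contained proof is the variance/operator-norm bound on the single-edge estimator $X_{e,i}$: one needs to argue that, despite the random-walk length $\ell$ being unbounded in the worst case, the rank-one estimator has small operator norm relative to $\schur(G,\mathcal{T})$, because the harmonic-mean weight $1/(\mu \sum 1/w_{u_i u_{i+1}})$ precisely cancels the long-walk pathology. This is exactly the technical heart of \cite{durfee2019fully}, and for our purposes I would simply cite that bound; the remaining matrix-concentration wrap-up is standard. Since this lemma is invoked as a black-box combinatorial characterization of $\schur(G,\mathcal{T})$ and will be simulated distributedly in the subsequent development, no further distributed-algorithmic content is needed at this stage.
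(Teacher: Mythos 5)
The paper states this lemma purely as a black-box import from \cite{durfee2019fully} and supplies no proof of its own, so there is no in-paper argument to diverge from. Your sketch---unbiasedness via the harmonic-extension/hitting-probability identity for $-\mathcal{L}(G)_{[F,F]}^{-1}\mathcal{L}(G)_{[F,\mathcal{T}]}$, a matrix Chernoff bound with $\mu = \Theta(\log n/\epsilon^2)$ samples, and deferring the operator-norm/variance bound on the single-walk estimator to the cited work---is a faithful outline of the argument in that reference and correctly identifies the one step that genuinely must be cited rather than reproduced.
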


It should be noted that the random walks in the lemma are implied in the usual sense, wherein a step from a node is taken with probability proportional to the edge-weights of the incident edges. In the sequel, we will compute an $\alpha$-DD set $F$ via \Cref{lemma:DD}, and then the goal will be to approximate the Schur complement on the set $\mathcal{T} = V \setminus F$. Importantly, given that $F$ is $\alpha$-DD, we can guarantee that the random walks required in \Cref{lemma:laplacian-schur} will be short in expectation. Nonetheless, a challenge that arises in the distributed context---and in particular under the $\congest$ model---is that the expected congestion of an edge may by prohibitively large. This issue will be resolved by incorporating new nodes to the terminals whenever they exceed some threshold of congestion. At the same time, however, we also have to limit the node-congestion since $G$ is minor distributed into $\overline{G}$, and we can only deal with limited congestion. This will be addressed by invoking the spectral sparsification algorithm, ensuring that the average degree, and subsequently the congestion, remains limited.    

Before we proceed with the algorithm that approximates the Schur complement, we note that we can implement the random walks of \Cref{lemma:laplacian-schur} in $\widetilde{O}(Q(\rho))$ rounds under \Cref{assumption:minor_aggregation}, as implied by the approach in~\citep{DBLP:journals/corr/abs-2012-15675}.

\begin{lemma}[\cite{DBLP:journals/corr/abs-2012-15675}]
    \label{lemma:random_walk-schur}
    Let $G$ be an $n$-node graph $\rho$-minor distributed into an $\overline{n}$-node communication network $\overline{G}$ for which \Cref{assumption:minor_aggregation} holds for some $Q = Q(\rho)$. Moreover, let $F$ be an $\alpha$-DD set, $\mathcal{T} = V \setminus F$ the set of terminals, $\eps \in (0, 1)$ some error parameter, and $\gamma \geq 1$ the congestion parameter. Then, the algorithm $\randwalkschur$ runs in $O(\alpha^{-1} \gamma Q(\rho) \log^2 n/\eps^2)$ rounds, and returns a graph $H$ along with its $(\alpha^{-1} \gamma \log n \rho)$-minor distribution into $\overline{G}$ such that 
    \begin{equation*}
        \mathcal{L}(H) \approxeps \schur(G, \widehat{\mathcal{T}}),
    \end{equation*}
    with high probability, where $\widehat{\mathcal{T}} \supseteq \mathcal{T}$ has size at most $n - |F| + O(\alpha^{-1} m \eps^{-2} \log^2 n / \gamma)$.
\end{lemma}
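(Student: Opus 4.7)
The plan is to carry out the combinatorial construction of \Cref{lemma:laplacian-schur} in a distributed manner, in parallel across all $m$ edges, while exploiting the $\alpha$-DD property of $F$ to bound walk lengths and introducing a congestion threshold to control the minor-embedding parameters. Concretely, for each edge $\{u,v\} \in E(G)$ we launch $\mu = O(\log n/\epsilon^2)$ independent random walks starting from $u$ and $v$ which are to be continued within the induced subgraph on $F$ until they first hit a terminal in $\mathcal{T}$. Each walk step is a local operation at a node that picks an incident edge with probability proportional to its weight; since $G$ is $\rho$-minor distributed into $\overline{G}$ and \Cref{assumption:minor_aggregation} holds, each such step can be coordinated in $O(Q(\rho))$ rounds via an aggregation call that sends the walk token across the chosen minor edge (using the spanning trees $\tree(u^G)$ to move between super-nodes). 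Combining the two half-walks then produces a candidate edge in $H$ with the reweighting prescribed by \Cref{lemma:laplacian-schur}.

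The key quantitative ingredient is that the $\alpha$-DD property of $F$ forces each walk to hit $\mathcal{T}$ quickly: at every node of $F$ the probability of escaping to $\mathcal{T}$ in a single step is $\Omega(\alpha/(1+\alpha))$, so the expected walk length is $O(1/\alpha)$ and a standard geometric tail bound yields a uniform high-probability length of $O(\alpha^{-1}\log n)$, which will be our dilation. In parallel, all $m\mu$ walks are simulated; the expected total ``visit mass'' at any fixed node is $O(m\mu/(n\alpha))$ in a regular setting, but in general can be heavy-tailed. To bound both the time and the node-congestion of the minor embedding of $H$ into $\overline{G}$, we impose the threshold $\gamma$: whenever the number of walks passing through a node $w \in F$ would exceed $\tau := \Theta(\gamma\log n)$, we promote $w$ to $\widehat{\mathcal{T}}$, truncate all walks at $w$ (which merely uses $w$ as their hitting terminal), and resume the remaining walks in the next batch. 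This promotion is exactly what allows us to keep every node's participation in the resulting minor of $H$ bounded by $O(\alpha^{-1}\gamma\log n \cdot \rho)$: each walk of length $O(\alpha^{-1}\log n)$ induces participation in that many super-nodes, and after truncation no node lies in more than $\tau$ walks.

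Counting the newly added terminals reduces to a charging argument: the expected total visit count summed over all nodes is $O(m\mu/\alpha) = O(m\log n/(\alpha\epsilon^2))$, and every node promoted to $\widehat{\mathcal{T}}$ absorbs at least $\tau = \Omega(\gamma\log n)$ units of this mass, yielding at most $O(\alpha^{-1} m \epsilon^{-2}\log^2 n/\gamma)$ additional terminals with high probability via a Markov/concentration argument over independent walks. For the round complexity, the dilation is $O(\alpha^{-1}\log n)$ steps per walk, the per-step cost on the minor is $O(Q(\rho))$, there are $\mu = O(\log n/\epsilon^2)$ walks per edge, and since every node handles at most $\tau = O(\gamma\log n)$ walk tokens at a time we can pipeline all walks in $O(\alpha^{-1}\gamma Q(\rho)\log^2 n/\epsilon^2)$ rounds by a Leighton--Maggs--Rao style scheduling, exactly as invoked in the proof of \Cref{lemma:minor_aggregation-congest}. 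Finally, \Cref{lemma:laplacian-schur} applied on $G$ with terminal set $\widehat{\mathcal{T}}$ gives $\mathcal{L}(H) \approxeps \schur(G,\widehat{\mathcal{T}})$; that $\widehat{\mathcal{T}} \supseteq \mathcal{T}$ is automatic since the promotions only move nodes from $F$ to $\widehat{\mathcal{T}}$.

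The main obstacle is the last piece of bookkeeping: realising $H$ as a genuine $\rho'$-minor distribution into $\overline{G}$ with $\rho' = O(\alpha^{-1}\gamma\log n \cdot \rho)$. Each edge of $H$ arises from a pair of concatenated walks, so the super-node of a vertex $t \in \widehat{\mathcal{T}}$ in $H$ must contain the traces of all walks that end at $t$, together with the trees used to route them inside $\overline{G}$. Bounding simultaneously (i) the number of $H$-super-nodes that contain any given node of $\overline{G}$ and (ii) the edge-congestion in $\overline{G}$ requires carefully combining the walk-length tail bound, the threshold $\tau$, and the fact that the original minor of $G$ already has congestion $\rho$. Once this accounting is done—i.e., once each node of $\overline{G}$ is shown to participate in at most $O(\alpha^{-1}\gamma\log n)$ walks (by construction) and each walk contributes to at most $O(\rho)$ super-nodes of $\overline{G}$ (from the inherited embedding of $G$)—the claimed minor-distribution parameter follows and the proof is complete.
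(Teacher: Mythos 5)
Your overall architecture matches the paper's: bound walk lengths by $O(\alpha^{-1}\log n)$ using the $\alpha$-DD property, add high-congestion nodes to the terminal set to cap node participation at $O(\gamma)$, charge the total walk mass $O(\alpha^{-1} m \epsilon^{-2}\log^2 n)$ against the threshold to bound $|\widehat{\mathcal{T}}\setminus\mathcal{T}|$, and assign walk traces to terminal super-nodes to get the $O(\alpha^{-1}\gamma\log n\,\rho)$-minor distribution. The round-complexity accounting is also essentially identical.

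However, there is a genuine gap in how you construct $\widehat{\mathcal{T}}$. You promote nodes to $\widehat{\mathcal{T}}$ \emph{adaptively}, based on the realized congestion of the walks as they are being simulated, truncating in-flight walks at newly promoted nodes and letting later walks see a larger terminal set. This means the final terminal set is a function of the walk realizations, so conditioned on $\widehat{\mathcal{T}}$ the walks are neither independent nor distributed as fresh first-hitting walks to $\widehat{\mathcal{T}}$ (e.g., conditioning on a node being promoted biases the trajectories passing through it, and the $\mu$ walks per edge are no longer identically distributed). \Cref{lemma:laplacian-schur} is stated for i.i.d.\ walks with respect to a \emph{fixed} terminal set, so your final step --- ``\Cref{lemma:laplacian-schur} applied on $G$ with terminal set $\widehat{\mathcal{T}}$ gives $\mathcal{L}(H)\approxeps\schur(G,\widehat{\mathcal{T}})$'' --- does not follow without a new concentration argument for this adaptively truncated ensemble. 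The paper sidesteps this entirely with a two-phase design: it first runs a deterministic congestion-\emph{estimation} pass (propagating expected congestion to neighbors for $O(\alpha^{-1}\log n)$ steps), uses the resulting expectations to fix $\widehat{\mathcal{T}}$ once and for all, and only then generates fresh random walks with respect to the now-fixed $\widehat{\mathcal{T}}$; a Chernoff bound then shows the realized congestion of the remaining nodes is $O(\gamma)$ with high probability. You should restructure your argument the same way (or supply the missing martingale-style analysis for the adaptive variant).
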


\begin{proof}
Let us briefly describe the $\randwalkschur$ algorithm. First, we compute the expected congestion of the family of random walks $W$ predicted by \Cref{lemma:laplacian-schur} with respect to the set of terminals $\mathcal{T}$. This is done by propagating the congestion to neighbors for $O(\alpha^{-1} \log n)$ steps. Then, we create a new set $\widehat{\mathcal{T}}$ which includes $\mathcal{T}$ as a subset, as well as all the nodes which exceeded the congestion threshold of $\gamma$ based on the estimation procedure of the previous step. Note that the congestion of a node with respect to $W$ is simply the number of times this particular node participates in some random walk of $W$. By construction, it follows that the size of $\widehat{\mathcal{T}}$ is $n - |F|$ along with all the nodes that exceeded the congestion threshold of $\gamma$. However, since $F$ is an $\alpha$-DD set it follows that the length of a random walk is $O(\alpha^{-1} \log n)$ with high probability, while for every edge we simulate $\mu = O(\log n /\eps^2)$ random walks (this is related to the concentration of the corresponding random variables, as implied by \Cref{lemma:laplacian-schur}), in turn implying that the total congestion generated by these random walks is $O(\alpha^{-1} m \eps^{-2} \log^2 n)$. As a result, only $O(\alpha^{-1} m \eps^{-2} \log^2 n/\gamma)$ nodes can have congestion more than $\gamma$, verifying the assertion regarding the size of $\widehat{\mathcal{T}}$. Next, the algorithm implements the random walks of \Cref{lemma:laplacian-schur}, but with respect to the augmented set of terminals $\widehat{\mathcal{T}}$. A Chernoff bound argument assures us that all nodes in $V \setminus \widehat{\mathcal{T}}$ will have congestion $O(\gamma)$ with high probability.

In terms of the distributed implementation, estimating the congestion can be implemented in $O(\alpha^{-1} Q(\rho) \log^2 n/\eps^2)$ rounds; this follows since every walk has length $O(\alpha^{-1} \log n)$ with high probability, and we execute $\mu = O(\log n/\eps^2)$ iterations for every edge. Also note that a single step in the procedure estimating the congestion can be implemented in $O(Q(\rho))$ rounds. Next, the generation of the random walks with respect to the augmented set $\widehat{T}$ can be performed in $O(\alpha^{-1} \gamma Q(\rho) \log^2 n/\eps^2)$ rounds with high probability; this uses the aforementioned guarantee for the congestion. The final step is to minor-distribute the graph $H$ with weights as dictated by \Cref{lemma:laplacian-schur}. This is done by assigning to the terminals the leaders of all intermediate (non-terminal) nodes. The congestion guarantee ensures that the resulting mapping is an $O(\alpha^{-1} \gamma \log n \rho)$-minor distribution into $\overline{G}$.
\end{proof}

\begin{proof}[Proof of \Cref{lemma:eliminate}]
The $\eliminate$ algorithm proceeds in $d$ rounds, initializing $\mat{M}^{(0)}$ to be an $\eps$-spectral sparsifier of $\mathcal{L}(G)$ (recall \Cref{corollary:spectral_sparsification}). In every round $i \geq 1$, (i) we compute an $\alpha$-DD set $F_i$ with $\alpha := 4$; (ii) we employ \Cref{lemma:jacobi} to have access to an operator that approximates $\mat{M}_{[F, F]}^{(i-1)}$; and (iii) we compute an $\eps$-spectral sparsifier $\mat{M}^{(i)}$ of the Schur complement $\schur(\mat{M}^{(i-1)}, \widehat{\mathcal{T}}_i)$ approximated via \Cref{lemma:random_walk-schur}; here, $\widehat{\mathcal{T}}_i = \widehat{\mathcal{T}}_{i-1} - F_i + U_i$, where $U_i$ represents the set of extra nodes added to ensure low congestion. In particular, \Cref{lemma:random_walk-schur} is invoked with congestion parameter $\gamma := 1000C \alpha^{-1} \log^8 n/\eps^4$, where $C$ is a sufficiently large constant. The sparsification algorithm of Koutis (\Cref{corollary:spectral_sparsification}) tells us that the number of edges will be $m = (n \log^6 n /\eps^2)$, in turn implying that the number of nodes drops by at least a multiplicative factor of $49/50$. 

In terms of the distributed implementation, notice that due to the selection of the parameters the approximation of the Schur complement (\Cref{lemma:random_walk-schur}) can be performed in $O(Q(\rho) \log^{10} n /\eps^6)$ rounds. Next, the spectral sparsification step can be implemented in $O(Q(\rho') \log^7 n /\eps^2)$, where $\rho' = \alpha^{-1} \gamma \log n \rho = O(\log^9 n /\eps^4) \rho$. Thus, by virtue of \Cref{assumption:minor_aggregation} we can infer that $Q(\rho') = O(\log^{c'} n/\eps^{c'}) Q(\rho)$, where $c'$ is some universal constant. Thus, after $d$ iterations the cost of these operations is bounded by $O(Q(\rho) (\log^c n /\eps^c)^d)$, where $c$ is some universal constant. Finally, the error guarantee follows directly from \Cref{lemma:jacobi,lemma:laplacian-schur,lemma:random_walk-schur}, after a direct argument bounding the accumulation of the error.
\end{proof}

\subsection{Minor Schur Complement: Proof of \Cref{lemma:approxSC}}
\label{appendix:approxSC}

We commence this subsection by introducing the notion of \emph{steady edges}, which are in a sense edges which are mutually "uncorrelated":

\begin{definition}[\cite{DBLP:journals/corr/abs-2012-15675}]
    A stochastic subset of edges $Z \subseteq E$ is called $(\alpha, \delta)$-steady with respect to an $m$-edge graph $H$ if 
    \begin{enumerate}
        \item $\mathbb{E}_Z \left[ \sum_{e \in Z} \vec{r}(e)^{-1} \vec{b}(e) \vec{b}(e)^T  \right] \preceq \alpha \mathcal{L}(H)$;
        \item For all $e \in Z$ we have $\sum_{e \neq f \in Z} \frac{|\vec{b}(e)^T \mathcal{L}(H)^{\dagger} \vec{b}(f)|}{\sqrt{\vec{r}(e)} \sqrt{\vec{r}(f)}} \leq \delta$;
        \item For all $e \in Z$ it holds that
        \begin{equation*}
            \vec{r}(e)^{-1} \vec{b}(e)^T \mathcal{L}(H)^{\dagger}
            \begin{bmatrix}
                \schur(H, \mathcal{T}) & 0 \\
                0 & 0
            \end{bmatrix}
            \mathcal{L}(H)^{\dagger} \vec{b}(e) \leq \frac{32|\mathcal{T}|}{m}.
        \end{equation*}
    \end{enumerate}
\end{definition}

In words, the first constraint ensures that no edge will be selected in the steady set with too high of a probability; the second corresponds to the localization constraint, circumscribing the (mutual) correlation of edges within the set; and the final constraint imposes a bound on the variance, and will be used in the martingale analysis (to apply Freedman's inequality). It should be stressed that the existence of such objects is highly non-trivial, and follows from the localization of electrical flows recently shown by~\citet{10.5555/3174304.3175408}. In the distributed setting, the following result will be established:

\begin{lemma}[\cite{DBLP:journals/corr/abs-2012-15675}]
    \label{lemma:find_steady}
    Let $G$ be an $n$-node $m$-edge graph $\rho$-minor distributed into $\overline{G}$ for which $\Cref{assumption:minor_aggregation}$ holds for some $Q = Q(\rho)$. For a constant $\delta \in (0, 1)$ and a subset of terminals $\mathcal{T} \subseteq V(G)$, there exists an algorithm which has access to a distributed Laplacian solver, and returns with high probability a set of at least $\delta m/(2000C \log^2 m)$ edges in expectation which is $(\delta/(1000C \log^2 m), \delta)$-steady, where $C$ is a sufficiently large constant. This algorithm requires $O(\log^2 n)$ calls to a distributed Laplacian solver to $1/\poly(n)$ accuracy on graphs that $2\rho$-minor distribute into $\overline{G}$, and $O(Q(\rho) \log^2 n)$ communication rounds. 
\end{lemma}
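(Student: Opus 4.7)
The plan is to follow the localization-of-electrical-flows strategy of \cite{10.5555/3174304.3175408} as adapted in \cite{DBLP:journals/corr/abs-2012-15675}, combined with the Spielman--Srivastava recursive sketch \cite{DBLP:conf/stoc/SpielmanS08} so that every ``query'' on an unknown quadratic form reduces to a small number of Laplacian solves plus matrix-vector products, each of which we can execute under \Cref{assumption:minor_aggregation} via \Cref{corollary:matrix_vector_product}. Concretely, I would first use the Johnson--Lindenstrauss sketch to compute an $O(\log n)$-dimensional representation $\Pi \mathbf{B} \mathbf{R}^{-1/2}$ of the edge-embedding $\mathbf{R}^{-1/2}\mathbf{B}\mathcal{L}(H)^{\dagger}$, where $\mathbf{B}$ is the signed edge--vertex incidence matrix and $\mathbf{R}$ the resistance matrix; this takes $O(\log n)$ invocations of the assumed Laplacian solver on the original $\rho$-minor, giving every endpoint access to an $O(\log n)$-dimensional vector $\mathbf{y}_e$ whose inner products with $\mathbf{y}_f$ approximate $\mathbf{r}_e^{-1/2} \mathbf{r}_f^{-1/2}\mathbf{b}_e^T \mathcal{L}(H)^{\dagger} \mathbf{b}_f$ up to a constant factor with high probability.

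With this sketch in hand, every edge locally knows (an approximation of) its leverage score $\lev_H(e)$ and, for any candidate $Z$, its total ``correlation budget'' $\sum_{f \in Z, f \neq e} |\mathbf{b}_e^T \mathcal{L}(H)^{\dagger} \mathbf{b}_f|/\sqrt{\mathbf{r}_e \mathbf{r}_f}$. To produce the steady set $Z$ I would follow a two-stage randomized pruning: (i) first sample each edge independently with probability $q_e = \min\{1, \delta / (C \log^2 m \cdot \lev_H(e))\}$ (a standard choice that, together with $\sum_e \lev_H(e) = n - 1$, makes condition~1 hold in expectation with the claimed density $\delta m /(2000 C \log^2 m)$); (ii) then delete any sampled $e$ for which the localized correlation, estimated through the JL sketch, exceeds $\delta/2$. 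A Markov-type argument on the expected correlation sum (using that leverage scores and resistances are globally ``spread'' in the sense captured by Kirchhoff's identity $\sum_{e} \mathbf{r}_e^{-1} \mathbf{b}_e \mathbf{b}_e^T = \mathcal{L}(H)$) shows that only a constant fraction of the sampled edges are deleted, so the expected size stays $\Omega(\delta m / \log^2 m)$. Condition~3 is handled by a second, independent JL sketch applied to the operator $\sqrt{\schur(H,\mathcal{T})}\,\mathcal{L}(H)^{\dagger}$; here one needs to evaluate $\schur(H,\mathcal{T})$-multiplication, which \Cref{lemma:random_walk-schur} provides as an approximation that $2\rho$-minor distributes into $\overline{G}$, so that the Laplacian solves on the sketched Schur-complement graph fall into the ``$2\rho$-minor distribute'' regime promised by the statement.

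For the distributed cost I would count: the JL sketches require $O(\log n)$ independent solver invocations, each on a graph $2\rho$-minor distributed into $\overline{G}$, and each of the subsequent local computations (inner products, thresholding, sampling) is a constant number of aggregation/broadcast calls along super-nodes, hence $O(Q(\rho))$ rounds per round of logic. Composing with the $O(\log n)$ sketch dimension and an additional $O(\log n)$ factor arising from repeating the sampling to boost success probability yields $O(\log^2 n)$ solver calls and $O(Q(\rho) \log^2 n)$ rounds of overhead on top, exactly matching the claim.

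The main obstacle will be the localization property (condition~2), because controlling $\sum_{f \in Z, f \neq e} |\mathbf{b}_e^T \mathcal{L}(H)^{\dagger} \mathbf{b}_f|/\sqrt{\mathbf{r}_e \mathbf{r}_f}$ simultaneously for every surviving $e$ requires a union bound over $|Z|$ JL-sketch estimates while still guaranteeing a linear-in-$m$ expected yield; the delicate point is that each pairwise quantity only has a constant-factor multiplicative JL guarantee, so the thresholds for sampling and for rejection must be separated by exactly the right logarithmic margin, and one must verify that the Markov-style bound controlling the deletion probability is robust to the JL error. The second obstacle is ensuring that the graph on which the Laplacian solver is invoked in the sketch step for condition~3 truly $2\rho$-minor distributes into $\overline{G}$; this will follow from \Cref{lemma:composing_minors} applied to the $\randwalkschur$ output whose congestion parameter $\gamma$ must be chosen sufficiently small, but absorbing that overhead into the polylogarithmic factors requires care in the parameter bookkeeping.
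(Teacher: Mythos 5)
Your high-level architecture (localization of electrical flows plus sketching, with every query reduced to Laplacian solves executed under \Cref{assumption:minor_aggregation}) matches the paper, which simply instantiates the $\textsc{FindSteady}$ algorithm of Forster et al.\ and charges its cost to \Cref{lemma:approx_column} and \Cref{lemma:diff_apx}. However, there is a genuine gap in how you certify the localization condition (condition~2). The quantity $\sum_{f \in Z, f \neq e} |\mathbf{b}_e^T \mathcal{L}(H)^{\dagger} \mathbf{b}_f| / (\sqrt{\mathbf{r}_e}\sqrt{\mathbf{r}_f})$ is an $\ell_1$ norm of a column of correlations, and a Johnson--Lindenstrauss sketch does not let you read it off: JL preserves $\ell_2$ norms, hence inner products only up to additive error proportional to the norms, and even granting exact pairwise inner products, recovering a sum of $|Z|$ absolute values from $O(\log n)$-dimensional $\ell_2$ sketches requires either per-pair communication (which blows the $O(Q(\rho)\log^2 n)$ budget) or a concentration statement that $\ell_2$ sketches do not provide. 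You correctly flag this as ``the main obstacle'' but do not resolve it. The resolution used here is \Cref{lemma:approx_column}, which relies on Indyk's $\ell_1$-sketch (Cauchy projections with a median estimator) so that the entire sum is approximated to within a factor of $2$ using $O(\log^2 n)$ solver calls; this is the step your argument is missing.

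Two smaller issues. First, your sampling probability $q_e = \min\{1, \delta/(C\log^2 m \cdot \lev_H(e))\}$ does not obviously give condition~1: the clean way to get $\mathbb{E}_Z[\sum_{e\in Z}\mathbf{r}_e^{-1}\mathbf{b}_e\mathbf{b}_e^T] \preceq \alpha\mathcal{L}(H)$ is uniform sampling at rate $\alpha$ (using $\sum_e \mathbf{r}_e^{-1}\mathbf{b}_e\mathbf{b}_e^T = \mathcal{L}(H)$), whereas under your rule edges of small leverage score are kept with probability $1$ and their aggregate contribution in some directions need not be dominated by $\alpha\mathcal{L}(H)$. The intended pipeline first applies \Cref{lemma:split} so that all leverage scores lie in $[3/16, 13/16]$ and then samples uniformly; that preprocessing is also precisely where the ``$2\rho$-minor distribute'' in the statement comes from. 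Second, condition~3 is handled by \Cref{lemma:diff_apx} (a sketch of the relevant quadratic form on the split graph), not by routing through $\randwalkschur$, which belongs to the $\eliminate$ pipeline and would introduce an additional spectral error and congestion overhead that the claimed round and solver-call bounds do not account for.
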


The first step towards establishing this lemma is to approximate the correlation of edges within some arbitrary set:

\begin{lemma}[\cite{DBLP:journals/corr/abs-2012-15675}]
    \label{lemma:approx_column}
    Let $G$ be an $n$-node graph with resistances $\vec{r}$, $\rho$-minor distributed into a communication network $\overline{G}$ for which \Cref{assumption:minor_aggregation} holds for some $Q = Q(\rho)$. Then, there is an algorithm, with access to a distributed Laplacian solver, which for any subset $W \subseteq E(G)$ and any edge $e \in W$ returns with high probability the quantity 
    
    \begin{equation*}
        \sum_{e \neq f \in W} \frac{|\vec{b}(e)^T \mathcal{L}(G)^{\dagger} \vec{b}(e)|}{\sqrt{\vec{r}(e)} \sqrt{\vec{r}(f)}}
    \end{equation*}
    to within a factor of $2$. This algorithm requires $O(\log^2 n)$ calls to a distributed Laplacian solver on graphs that $\rho$-minor distribute into $\overline{G}$ to accuracy $1/\poly(n)$, and an additional $O(Q(\rho) \log^2 n)$ communication rounds. 
\end{lemma}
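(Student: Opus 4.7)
My plan is to combine the Spielman--Srivastava random-projection technique \cite{DBLP:conf/stoc/SpielmanS08} with an Indyk-style Cauchy $\ell_1$-sketch, so that the per-edge quantity is extracted locally from a small number of Laplacian solves. Let $B_W$ denote the signed incidence matrix of $G$ restricted to $W$ and let $R := \diag(\mathbf{r}_e)_{e \in W}$. For each $e \in W$, define
$$\mathbf{y}_e := \frac{1}{\sqrt{\mathbf{r}_e}} R^{-1/2} B_W \mathcal{L}(G)^{\dagger} \mathbf{b}_e \in \mathbb{R}^{|W|},$$
so that the $f$-th coordinate is precisely $(\mathbf{b}_e^T \mathcal{L}(G)^{\dagger} \mathbf{b}_f)/\sqrt{\mathbf{r}_e \mathbf{r}_f}$, and the quantity of interest equals $\|\mathbf{y}_e\|_1 - \lev_G(e)$, since the diagonal contribution is exactly the leverage score of $e$.

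Rather than solving a Laplacian system separately for each $e$, the idea is to sketch all of the $\mathbf{y}_e$'s at once. Sample a matrix $\Pi \in \mathbb{R}^{k \times |W|}$ of i.i.d.\ standard Cauchy entries with $k = O(\log^2 n)$, and compute $Z := \Pi R^{-1/2} B_W \mathcal{L}(G)^{\dagger}$ by invoking the assumed distributed solver once per row of $\Pi$, on right-hand sides formed by aggregating Cauchy-weighted endpoint contributions of edges in $W$ via \Cref{corollary:matrix_vector_product}. This costs $O(\log^2 n)$ Laplacian solves and an additional $O(Q(\rho) \log^2 n)$ rounds, and leaves every node $u$ holding its coordinates in all $k$ rows of $Z$. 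For each edge $e = \{u, v\}$, its endpoints exchange their $k$-vectors to form $\Pi \mathbf{y}_e = Z(\mathbb{1}_u - \mathbb{1}_v)/\sqrt{\mathbf{r}_e}$; by the $1$-stability of the Cauchy distribution this is a vector of $k$ i.i.d.\ $\mathrm{Cauchy}(0, \|\mathbf{y}_e\|_1)$ samples, and the median estimator yields a $(1 \pm o(1))$-approximation of $\|\mathbf{y}_e\|_1$ with failure probability $1/\poly(n)$. A union bound over $e \in W$ absorbs the extra $\log n$ factor in $k$, while a parallel $\ell_2$-JL sketch produces leverage-score estimates at the same accuracy in $O(\log n)$ further solves.

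To sidestep the potential cancellation in $\|\mathbf{y}_e\|_1 - \lev_G(e)$, I would subtract \emph{inside} the sketch: both endpoints of $e$ know the $e$-th column $\Pi_{:,e}$ (its entries are Cauchy samples determined solely by $e$) and the approximate leverage score, so they locally form $\Pi \mathbf{y}_e - \lev_G(e)\, \Pi_{:,e}$, which is exactly the Cauchy sketch of $\mathbf{y}_e$ with its $e$-th coordinate zeroed out; the Indyk estimator then targets $\sum_{f \neq e} |(\mathbf{y}_e)_f|$ directly, recovering the factor-$2$ guarantee. The main obstacle I foresee is the interplay with the finite accuracy of the solver: the vectors stored in $Z$ are only $1/\poly(n)$-approximate, perturbing the Cauchy samples. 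This is handled by taking the solver accuracy polynomially small, so that the relative perturbation of each sketch entry is dominated by the constant-factor noise of the $\ell_1$ estimator; all remaining bookkeeping fits in $O(Q(\rho) \log^2 n)$ rounds, matching the claim.
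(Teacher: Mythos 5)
Your proposal is correct and takes essentially the same route as the paper, which proves this lemma only by deferring to \cite{DBLP:journals/corr/abs-2012-15675} and noting that it ``leverages the $\ell_1$-sketch of Indyk'': your reconstruction---Cauchy-sketching the columns $R^{-1/2}B_W\mathcal{L}(G)^{\dagger}\mathbf{b}_e/\sqrt{\mathbf{r}_e}$ via $O(\log^2 n)$ Laplacian solves, reading off $\ell_1$-norms with the median estimator, and charging the aggregation steps to \Cref{corollary:matrix_vector_product}---is exactly that argument with the right round accounting. The one overstatement is that subtracting $\widetilde{\lev}_G(e)\,\Pi_{:,e}$ with only a $(1+\delta)$-approximate leverage score leaves an additive error of order $\abs{\lev_G(e)-\widetilde{\lev}_G(e)}$ rather than a clean multiplicative factor-$2$ guarantee when the off-diagonal column sum is very small; this residual is, however, exactly what the downstream thresholding in \Cref{lemma:find_steady} tolerates.
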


The proof of this lemma follows directly from~\citep[Lemma 5.13]{DBLP:journals/corr/abs-2012-15675}, and leverages the $\ell_1$-sketch of \citet{DBLP:journals/jacm/Indyk06}. Similarly, a sketch can be employed to estimate the effect of each edge on the Schur complement:

\begin{lemma}[\cite{DBLP:journals/corr/abs-2012-15675}]
    \label{lemma:diff_apx}
    Let $G$ be an $n$-node with resistances $\vec{r}_e$, $\rho$-minor distributed into a communication network $\overline{G}$ for which \Cref{assumption:minor_aggregation} holds for some $Q = Q(\rho)$. Then, for a subset $\mathcal{T} \subseteq V(G)$, there exists an algorithm which returns with high probability an estimate of
    
    \begin{equation*}
        \vec{r}(e)^{-1} \vec{b}(e)^T \mathcal{L}(G)^{\dagger}
        \begin{bmatrix}
            \schur(G, \mathcal{T}) & 0 \\
            0 & 0
        \end{bmatrix}
        \mathcal{L}(G)^{\dagger} \vec{b}(e)
    \end{equation*}
    to within a factor of $2$. This algorithm requires $O(\log n)$ calls to a distributed Laplacian solver to accuracy $1/\poly(n)$ on graphs that $2\rho$-minor distribute into $\overline{G}$, and $O(Q(\rho) \log n)$ communication rounds. 
\end{lemma}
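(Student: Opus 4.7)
The goal is to approximate, simultaneously for all $e \in E(G)$ and up to a factor of $2$, the quantity
\begin{equation*}
f(e) := \mathbf{r}_e^{-1}\, \mathbf{b}_e^T \mathcal{L}(G)^{\dagger}\, K'\, \mathcal{L}(G)^{\dagger}\, \mathbf{b}_e,
\qquad K' := \begin{bmatrix} \schur(G, \mathcal{T}) & 0 \\ 0 & 0 \end{bmatrix}.
\end{equation*}
The approach is a Johnson--Lindenstrauss sketch in the spirit of the one powering \Cref{lemma:approx_column}: exhibit a factorization $K' = C^T C$ whose action on a vector can be implemented by a constant number of (sub-)Laplacian solves, draw $k = O(\log n)$ random $\pm 1$ probes, and then evaluate the resulting sums of squares edge by edge.

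\paragraph{Step 1: factorization via harmonic extension.} Partition $V$ into $\mathcal{T}$ and $S := V \setminus \mathcal{T}$, and write $\mathcal{L}(G)$ in block form with blocks $L_{TT}, L_{TS}, L_{ST}, L_{SS}$. The classical variational identity
\begin{equation*}
\mathbf{y}_T^T \schur(G, \mathcal{T})\, \mathbf{y}_T \;=\; \min_{\mathbf{z}\,:\,\mathbf{z}_T = \mathbf{y}_T} \mathbf{z}^T \mathcal{L}(G)\, \mathbf{z},
\end{equation*}
with unique minimizer $\mathbf{z}^*_S = -L_{SS}^{-1} L_{ST} \mathbf{y}_T$, yields $K' = P^T \mathcal{L}(G)\, P$ for the ``harmonic extension from $\mathcal{T}$'' operator $P := \begin{bmatrix} I & 0 \\ -L_{SS}^{-1} L_{ST} & 0 \end{bmatrix}$. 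Writing $\mathcal{L}(G) = B^T W B$, with signed edge--vertex incidence matrix $B$ and $W := \diag(1/\mathbf{r}_e)$, then gives
\begin{equation*}
f(e) \;=\; \mathbf{r}_e^{-1}\, \bigl\| W^{1/2} B P\, \mathcal{L}(G)^{\dagger} \mathbf{b}_e \bigr\|_2^2.
\end{equation*}

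\paragraph{Step 2: JL probes and distributed realization.} Sample $k = O(\log n)$ i.i.d.\ Rademacher vectors $\mathbf{q}_1, \dots, \mathbf{q}_k \in \{\pm 1\}^{|E(G)|}$, and for each $i$ compute the node-vector
\begin{equation*}
\mathbf{v}_i \;:=\; \mathcal{L}(G)^{\dagger}\, P^T\, B^T W^{1/2}\, \mathbf{q}_i.
\end{equation*}
For $e = \{u, v\}$, the sketched inner product $\mathbf{q}_i^T W^{1/2} B P \mathcal{L}(G)^{\dagger} \mathbf{b}_e$ equals $\mathbf{v}_i(u) - \mathbf{v}_i(v)$, so the standard Johnson--Lindenstrauss tail bound combined with a union bound over the $\poly(n)$ edges gives, with high probability and for every $e$,
\begin{equation*}
\tfrac{1}{2} f(e) \;\le\; \frac{1}{k} \sum_{i=1}^{k} \mathbf{r}_e^{-1} \bigl(\mathbf{v}_i(u) - \mathbf{v}_i(v)\bigr)^2 \;\le\; 2 f(e).
\end{equation*}
Each $\mathbf{v}_i$ is assembled by four chained operations, handled through \Cref{corollary:matrix_vector_product} and the distributed Laplacian solver: (i) rescale $\mathbf{q}_i$ by $\sqrt{w_e}$ locally; (ii) form $\mathbf{u}_i := B^T W^{1/2} \mathbf{q}_i$ as a signed node-wise aggregation of incident edge values in $O(Q(\rho))$ rounds; (iii) apply $P^T$ by solving $L_{SS} \mathbf{s} = (\mathbf{u}_i)_S$ and returning the vector with $\mathcal{T}$-block $(\mathbf{u}_i)_T - L_{TS} \mathbf{s}$ and $S$-block $\mathbf{0}$; (iv) apply $\mathcal{L}(G)^{\dagger}$ through one distributed Laplacian solve.

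\paragraph{Main obstacle.} The non-routine step is (iii): $L_{SS}$ is SDD but not itself a Laplacian, and must be inverted by reducing to a Laplacian system, e.g.\ by introducing a virtual ``ground'' vertex that absorbs the boundary weights from $S$ to $\mathcal{T}$, or by Gremban's two-copies reduction. This reduction at most doubles the congestion of the underlying $\rho$-minor distribution, which is precisely why the lemma hypothesises solver calls on $2\rho$-minor-distributed graphs. The remaining quantitative concern is that, because $\mathbf{v}_i(u) - \mathbf{v}_i(v)$ is squared in the JL estimator, each solver invocation must be carried out to accuracy $1/\poly(n)$ so that the perturbation of $\mathbf{v}_i$ stays comfortably below the $\tfrac{1}{2}$-slack in the JL guarantee; propagating this through all four steps and union-bounding over the $m = \poly(n)$ edges yields the stated factor-$2$ estimate simultaneously for every edge. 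The final tally is $k = O(\log n)$ JL probes with $O(1)$ solver calls each, giving $O(\log n)$ Laplacian solves on $2\rho$-minor graphs plus $O(Q(\rho) \log n)$ additional rounds of communication for the incidence-matrix applications and the edge-wise differences, matching the claim.
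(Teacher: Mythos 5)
Your proposal is correct and matches the approach the paper defers to (the sketch-based argument of Forster et al.): factor the middle block matrix as $P^T\mathcal{L}(G)P$ via harmonic extension, reduce to estimating $\ell_2$-norms of $W^{1/2}BP\mathcal{L}(G)^{\dagger}\mathbf{b}_e$ with $O(\log n)$ Johnson--Lindenstrauss probes, and realize each probe with a constant number of solver calls and aggregation steps. Your identification of the $L_{SS}$-solve as the source of the $2\rho$ is also right; note that in the minor-distribution framework the ground-vertex reduction can be realized by letting the ground super-node be all of $\overline{G}$ with a BFS spanning tree and mapping the new edges to self-loops, which increases the congestion only from $\rho$ to $\rho+1\le 2\rho$.
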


As a result, \Cref{lemma:find_steady} is established based on the algorithm $\textsc{FindSteady}$ in \cite{DBLP:journals/corr/abs-2012-15675}, with the round complexity guarantee following directly from \Cref{lemma:approx_column} and \Cref{lemma:diff_apx}.

The next ingredient is a pre-processing step which ensures that all the edges have leverage scores bounded away from $0$ and $1$. 

\begin{lemma}[\cite{DBLP:conf/focs/LiS18}]
    \label{lemma:split}
    Let $G$ be an $n$-node graph $\rho$-minor distributed into a communication network $\overline{G}$ for which \Cref{assumption:minor_aggregation} holds for some $Q = Q(\rho)$. If $1.1$-approximate leverage scores $\widetilde{\lev}_G(e)$ for the edges in $G$ are known, then there exists a process which returns after $\widetilde{O}(Q(\rho))$ rounds a graph $H$ such that 
    
    \begin{enumerate}
        \item $H$ is electrically equivalent to $G$;
        \item $H$ is $2\rho$-minor distributed into $\overline{G}$;
        \item All the leverage scores of edges in $H$ are between $[3/16, 13/16]$.
    \end{enumerate}
    Moreover, there exists a procedure which takes as input $G$ and returns in $O(Q(\rho))$ rounds a graph resulting from collapsing paths and parallel edges, and removing non-terminal leaves, along with a $\rho$-minor distribution into $\overline{G}$.
\end{lemma}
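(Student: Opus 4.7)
The strategy is to adapt the edge-splitting scheme of Li and Schild \cite{DBLP:conf/focs/LiS18} into the minor-distributed setting, making all decisions local using \Cref{assumption:minor_aggregation}. For each edge $e = \{u^G, v^G\}$ of resistance $r_e$, I would inspect its $1.1$-approximate leverage $\widetilde{\ell}$ and take one of three actions: if $\widetilde{\ell} > 2/3$, perform a \emph{parallel split}, replacing $e$ by two parallel edges of resistance $2 r_e$ each; if $\widetilde{\ell} < 1/3$, perform a \emph{series split}, replacing $e$ by a two-edge path through a new middle vertex $m$, each sub-edge having resistance $r_e/2$; otherwise leave $e$ alone. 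A direct calculation with the series/parallel combination rules shows that both operations preserve the effective resistance between every original pair of nodes (yielding electrical equivalence), and that the true leverage $\ell$ of $e$ is transformed into $\ell/2$ on each parallel copy and into $(1+\ell)/2$ on each series copy. Using the bound $\widetilde{\ell}/1.1 \leq \ell \leq 1.1\widetilde{\ell}$, the three cases yield post-operation leverages in $(0.303, 0.5]$, $[0.5, 0.684)$, and $[0.303, 0.733]$ respectively, each of which is contained in $[3/16, 13/16]$.

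For the distributed realization, the classification of each edge and the construction of new $H$-nodes and $H$-edges can be agreed upon locally by the two endpoint super-nodes. For a parallel split, both new copies inherit the image of the original edge in $\overline{G}$, increasing its mapping multiplicity from at most $\rho$ to at most $2\rho$, while leaving the super-node structure untouched. For a series split on an edge whose image is the $\overline{G}$-edge $\{x,y\}$, I would assign the middle node $m$ a super-node consisting of a single endpoint (say $\{x\}$), balancing the choices across the at most $\rho$ subdivided $G$-edges mapped to each $\overline{G}$-edge so that both endpoints absorb approximately half of the resulting middle nodes; the two new $H$-edges are then routed as a self-loop at $x$ and as a copy of $\{x,y\}$ respectively. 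Under this balanced placement, both the node- and the edge-congestion at most double, so $H$ is a valid $2\rho$-minor distribution of $\overline{G}$. The global dissemination of all of these local rewrites can be carried out in $\widetilde{O}(Q(\rho))$ rounds via \Cref{assumption:minor_aggregation}.

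The second part of the lemma---collapsing non-terminal degree-$2$ nodes, replacing parallel edges by their harmonic combination, and pruning non-terminal leaves---is handled by the three standard electrically-equivalent local rewrites. Each super-node can detect its terminal status, $G$-degree, and the presence of parallel edges using a single aggregation, then locally update its super-node contents, edge images, and spanning tree; since we only remove or consolidate existing minor structure, the resulting graph remains $\rho$-minor distributed into $\overline{G}$, and the entire pass takes $O(Q(\rho))$ rounds. The main obstacle is the careful bookkeeping required in the series-split step, where the ownership of each new middle vertex must be distributed across the subdivided edges so that the node congestion is provably kept within the factor $2$ of the statement; the parallel-split step and the reduction routines are essentially mechanical once this accounting is in place.
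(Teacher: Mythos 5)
Your splitting scheme and the accompanying leverage-score calculus are exactly the Li--Schild construction that the paper invokes (the paper's own justification is a single sentence deferring to \cite{DBLP:conf/focs/LiS18} and to \Cref{lemma:composing_minors}), and your arithmetic is correct: series and parallel splits preserve all effective resistances and send $\ell\mapsto(1+\ell)/2$ and $\ell\mapsto\ell/2$ respectively, so with $1.1$-approximate scores all three of your cases land in $[3/16,13/16]$. The edge-congestion accounting is also fine: each edge of $\overline{G}$ is the image of at most $\rho$ edges of $G$, and each split at most doubles the number of $H$-edges mapped to it.

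The gap is in the node-congestion bound for the series split. Your balancing is performed \emph{per image edge}: for a fixed $\overline{G}$-edge $\{x,y\}$ you distribute the at most $\rho$ subdivided $G$-edges mapped to it between $x$ and $y$, so each endpoint receives at most $\rho/2$ new singleton super-nodes from \emph{that} edge. But $x$ is an endpoint of $\deg_{\overline{G}}(x)$ image edges and collects up to $\rho/2$ middle super-nodes from each, so its total membership count can reach $\rho+\Theta(\rho\deg_{\overline{G}}(x))$, not $2\rho$. The same obstruction reappears if one instead tries to route through \Cref{lemma:composing_minors} by exhibiting $H$ as a $2$-minor of $G$: the vertex $u^G$ would lie in the super-node of every middle vertex assigned to its side, i.e.\ in up to $1+\deg_G(u^G)$ super-nodes. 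Moreover, no cleverer local assignment can rescue the constant in general: if every edge of $G$ has small leverage (e.g.\ $G=\overline{G}=K_n$, where every leverage score equals $2/n$), the split graph has $n+\binom{n}{2}$ vertices, and a pigeonhole count shows \emph{any} minor distribution into $\overline{G}$ places some node of $\overline{G}$ in $\Omega(n)$ super-nodes. A correct argument therefore has to exploit that the routine is only invoked on already-sparsified graphs (so only $\widetilde{O}(n)\le\widetilde{O}(\rho\overline{n})$ edges are subdivided) and perform a \emph{global} load-balanced assignment of middle vertices to $\overline{G}$-nodes, or else settle for a polylogarithmically larger congestion that \Cref{assumption:minor_aggregation} then absorbs into $Q$. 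You correctly flag this as ``the main obstacle,'' but the balancing you propose does not close it.
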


The distributed implementation of this lemma is fairly simple, and relies on \Cref{lemma:composing_minors}. We will also use the following lemma, which is based on the random projection scheme of \citet{DBLP:conf/stoc/SpielmanS08}:

\begin{lemma}[\cite{DBLP:journals/corr/abs-2012-15675}]
    \label{lemma:approx_leverages}
    Let $G$ be an $n$-node graph $\rho$-minor distributed into a communication network $\overline{G}$ for which \Cref{assumption:minor_aggregation} holds for some $Q = Q(\rho)$. Then, there is an algorithm with access to a distributed Laplacian solver which for all edges $e \in E(G)$ approximates the leverage score $\lev_G(e)$ to within a factor of $1 + \delta$ with high probability. This algorithm requires $O(\log n /\delta^2)$ calls to a distributed Laplacian solver on graphs which $\rho$-minor distribute into $\overline{G}$ to accuracy $1/\poly(n)$, as well as $O(Q(\rho) \log n/\delta^2)$ communication rounds.
\end{lemma}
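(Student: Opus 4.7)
The plan is to implement the Spielman--Srivastava random projection scheme \cite{DBLP:conf/stoc/SpielmanS08} in our distributed setting. Write $B \in \mathbb{R}^{m \times n}$ for the signed edge-vertex incidence matrix of $G$ and $R = \diag(\mathbf{r})$, so that $\mathcal{L}(G) = B^T R^{-1} B$ and $\lev_G(e) = \mathbf{r}_e^{-1} \mathbf{b}_e^T \mathcal{L}(G)^{\dagger} \mathbf{b}_e$. Let $W \in \mathbb{R}^{k \times m}$ be a random matrix with i.i.d.\ entries uniform on $\{\pm 1/\sqrt{k}\}$ for $k = \Theta(\log n/\delta^2)$, and set $Y := W R^{-1/2} B \mathcal{L}(G)^{\dagger} \in \mathbb{R}^{k \times n}$. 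Since $\mathbb{E}[W^T W] = I_m$ and $\mathcal{L}(G)^{\dagger} B^T R^{-1} B \mathcal{L}(G)^{\dagger} = \mathcal{L}(G)^{\dagger}$ when restricted to the image of $\mathcal{L}(G)$ (which contains every $\mathbf{b}_e$), a direct computation gives $\mathbb{E} \|Y \mathbf{b}_e\|_2^2 = \mathbf{b}_e^T \mathcal{L}(G)^{\dagger} \mathbf{b}_e = \mathbf{r}_e \lev_G(e)$. A standard Johnson--Lindenstrauss concentration bound together with a union bound over the at most $n^2$ vectors $\mathbf{b}_e$ guarantees that $\|Y \mathbf{b}_e\|_2^2 / \mathbf{r}_e \in (1 \pm \delta) \lev_G(e)$ simultaneously for every $e \in E(G)$ with high probability; this is our estimator.

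The matrix $Y$ is built row-by-row. The transpose of its $i$-th row is $\mathcal{L}(G)^{\dagger} \mathbf{v}_i$, where $\mathbf{v}_i := B^T R^{-1/2} W_{i,:}^T$ has as its $u$-th coordinate a signed sum of the quantities $W_{i,e}/\sqrt{\mathbf{r}_e}$ over incident edges $e$ of $G$. To generate the signs consistently, the endpoint of each edge $e$ with the smaller identifier samples $W_{i,e}$ locally and sends it across the mapped $\overline{G}$-edge; then each super-node sums up its contributions via \Cref{assumption:minor_aggregation}, so that after $O(Q(\rho))$ rounds the leader $\ell(u^G)$ holds the $u$-th coordinate of $\mathbf{v}_i$. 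We then invoke the distributed Laplacian solver on $G$ (which is itself $\rho$-minor distributed into $\overline{G}$) with right-hand side $\mathbf{v}_i$ to accuracy $1/\poly(n)$. After the $k$ such calls each leader $\ell(u^G)$ holds the vector $Y_{:,u} \in \mathbb{R}^k$.

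To finally evaluate $\|Y_{:,u} - Y_{:,v}\|_2^2 / \mathbf{r}_e$ at each edge $e = \{u^G, v^G\}$, we first broadcast $Y_{:,u}$ within $\supervertex(u^G)$ via \Cref{assumption:minor_aggregation}, using $k$ broadcasts in total at a cost of $O(Q(\rho) \log n/\delta^2)$ rounds. Then the $\overline{G}$-endpoints of each image edge exchange their $k$-dimensional sketches in $O(\rho k)$ $\congest$ rounds, where the factor $\rho$ reflects that an edge of $\overline{G}$ may be the image of up to $\rho$ edges of $G$; this is absorbed into $O(Q(\rho) \log n/\delta^2)$ since $Q(\rho) \geq \rho$. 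Each $G$-edge then computes its leverage-score estimate locally. In total this is $O(\log n/\delta^2)$ calls to the Laplacian solver on graphs $\rho$-minor distributed into $\overline{G}$ and $O(Q(\rho) \log n/\delta^2)$ communication overhead, matching the claim.

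The main obstacle is controlling the error introduced by the Laplacian solver, whose guarantee is expressed in the $\mathcal{L}(G)$-norm rather than coordinate-wise. I would handle it as follows: the non-zero spectrum of $\mathcal{L}(G)$ lies in $[1/\poly(\overline{n}), \poly(\overline{n})]$ thanks to the weight bound $W = \poly(\overline{n})$, so the $1/\poly(n)$ relative $\mathcal{L}(G)$-norm error on $\mathcal{L}(G)^{\dagger} \mathbf{v}_i$ translates into an $\ell_2$-error of $1/\poly(n)$ on the computed row, given that $\|\mathbf{v}_i\|_2$ is itself polynomially bounded. By tightening the solver accuracy by an absolute constant, the induced perturbation to each $\|Y \mathbf{b}_e\|_2^2$ is negligible against the JL tolerance $\delta$, and the estimator remains a $(1+\delta)$-approximation of $\lev_G(e)$. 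All other pieces are direct applications of the primitives already established.
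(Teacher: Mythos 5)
Your proposal is correct and follows essentially the same route the paper takes (and attributes to Forster et al.\ and Spielman--Srivastava): a Johnson--Lindenstrauss sketch with a random sign matrix---precisely Achlioptas's variant cited in \Cref{appendix:approxSC}---whose $k = \Theta(\log n/\delta^2)$ rows are obtained via Laplacian solves, with the sketch coordinates disseminated through the aggregation primitive of \Cref{assumption:minor_aggregation}. Your accounting of the solver calls, the round overhead, and the translation of the $\mathcal{L}(G)$-norm solver error into a negligible perturbation of the estimator all match the intended argument.
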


The proof of this lemma follows directly from \citep[Lemma 5.4]{DBLP:journals/corr/abs-2012-15675}, and uses Achliopta's variant of the Johnson-Lindenstrauss lemma~\citep{DBLP:journals/jcss/Achlioptas03}. With these pieces at hand, we are ready to describe the algorithm for computing a minor Schur complement. At each iteration we first determine a set of steady edges via \Cref{lemma:find_steady}. Then, we estimate the leverage scores via the random projection scheme of \Cref{lemma:approx_leverages}, and each edge in the steady set is contracted (independently) with probability given by its (approximate) leverage scores; otherwise, the edge is deleted (for this we will use \Cref{corollary:contracting}). We also employ \Cref{lemma:split} in every iteration to ensure that leverage scores are bounded away from $0$ and $1$. This process is repeated as long as the number of edges exceeds a threshold, leading to the algorithm $\approxSC$ in~\citep{DBLP:journals/corr/abs-2012-15675}. The next theorem was shown in~\citep{DBLP:journals/corr/abs-2012-15675} using matrix martingale analysis:

\begin{lemma}[\cite{DBLP:journals/corr/abs-2012-15675}]\label{lemma:martingale}
    The algorithm $\approxSC$ takes as input a graph $G$ with a set of terminals $\mathcal{T}$ and an error parameter $\eps$, and returns with high probability a graph $H$ satisfying $|E(H)| = O(|\mathcal{T}| \log^2 n/\eps^2)$ and $\schur(H, \mathcal{T}) \approxeps \schur(G, \mathcal{T})$.
\end{lemma}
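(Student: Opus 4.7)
The plan is to analyze $\approxSC$ through a matrix martingale argument that bounds the accumulated spectral error over a logarithmic number of ``steady contraction/deletion'' rounds. Starting from $H^{(0)} := G$, each iteration transforms $H^{(i)}$ into $H^{(i+1)}$ by the following four ingredients, all of which are supplied by the earlier lemmas. First, Lemma \ref{lemma:split} is invoked to produce an electrically equivalent graph all of whose edges have leverage scores in $[3/16,13/16]$, while blowing up the congestion by at most a factor of $2$. Second, Lemma \ref{lemma:find_steady} extracts a set $Z_i$ that is $(\alpha,\delta)$-steady with $\alpha = \Theta(\delta/\log^2 m)$ and $|Z_i| = \Omega(\delta m / \log^2 m)$ in expectation. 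Third, Lemma \ref{lemma:approx_leverages} produces $(1+\delta')$-approximate leverage scores $\widetilde{\lev}(e)$ for $e \in Z_i$. Fourth, each $e \in Z_i$ is independently contracted with probability $\widetilde{\lev}(e)$, and otherwise deleted, using Corollary \ref{corollary:contracting} to maintain the $\rho$-minor distribution.

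The size bound is the easy half. Each iteration removes every edge of $Z_i$, and $|Z_i|$ is an $\Omega(\delta / \log^2 m)$ fraction of the current edge set in expectation; a standard Azuma-type concentration guarantees a geometric shrinkage with high probability. Hence after $O(\log^2 n / \epsilon^2 \cdot \log n)$ iterations the edge count drops below $O(|\mathcal{T}| \log^2 n/\epsilon^2)$, at which point the loop terminates.

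The spectral guarantee $\schur(H,\mathcal{T}) \approxeps \schur(G,\mathcal{T})$ is obtained via matrix martingale concentration. The Li--Schild identity asserts that contracting/deleting an edge $e$ with probability equal to its true leverage score $\lev_G(e)$ preserves $\schur(\cdot,\mathcal{T})$ in expectation, so the sequence $X_i := \schur(H^{(i)},\mathcal{T}) - \schur(G,\mathcal{T})$ is (up to a small bias from using $\widetilde{\lev}$) a matrix martingale. Apply a matrix Freedman inequality to bound $\|X_t\|$ in the Loewner order relative to $\schur(G,\mathcal{T})$. The three conditions in the definition of a steady set are exactly what feed the hypotheses of this inequality: the first bounds the aggregate size of the per-round increments $\sum_{e \in Z_i} \mathbf{r}_e^{-1} \mathbf{b}_e \mathbf{b}_e^T$ by $\alpha \mathcal{L}(H^{(i)})$; the localization property ensures that the off-diagonal correlations among edges in $Z_i$ contribute an $O(\delta)$ multiplicative slack; and the third property caps each edge's contribution to the quadratic variation projected onto the $\schur(\cdot,\mathcal{T})$ subspace. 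Summing these bounds over the $O(\log^2 n/\epsilon^2)$ iterations, with $\delta$ and $\delta'$ chosen as sufficiently small constants, yields total spectral distortion at most $\epsilon$ with high probability.

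The main obstacle is that the analysis must bookkeep two sources of non-ideality simultaneously: we only contract with \emph{approximate} leverage scores (introducing an $O(\delta')$ bias per edge), and our rounds contract edges of $Z_i$ \emph{independently} rather than in the sequential Li--Schild order. Both deviations must be absorbed into the $\alpha$-budget of the steady-set variance bound, which forces the choice $\delta' \ll \alpha = \Theta(\delta/\log^2 m)$. A further subtlety is that Freedman's inequality must be applied in the norm $\|\cdot\|_{\schur(G,\mathcal{T})}$ rather than $\|\cdot\|_{\mathcal{L}(G)}$, and the third steady-edge property is specifically designed so that the predictable quadratic variation is controllable in exactly this norm. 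Once these constants are chosen consistently, the rest of the argument is a direct application of the martingale tail bound and a union bound over iterations.
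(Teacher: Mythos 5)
Your proposal follows essentially the same route as the source: the paper does not reprove this lemma but imports it from Forster et al., whose argument is exactly the steady-set/matrix-Freedman martingale analysis you outline (the Li--Schild expectation identity, the localization property to control correlations among independently processed edges, and the third steady condition to bound the quadratic variation in the $\schur(G,\mathcal{T})$ norm). The only quibble is quantitative and immaterial to the statement: the number of iterations is $O(\log m/\alpha)$ with $\alpha = O(\epsilon/\log^4 m)$, not the count you quote for the size bound.
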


\begin{proof}[Proof of \Cref{lemma:approxSC}]
First, the algorithm only performs deletions and contractions, implying that it indeed returns a minor. Moreover, the correctness follows directly from \Cref{lemma:martingale}. To bound the requirements of the algorithm note that $\approxSC$ executes $O(\log m /\alpha)$ iterations, where $\alpha := \delta/(1000C \log^2 m) = O(\eps/\log^4 m)$, with high probability. In each iteration the dominant cost in terms of calls to a distributed Laplacian solver follows from the subroutine approximating leverage scores, which requires $O(\log n/\delta^2) = O(\log^5 n /\eps^2)$. Thus, we may conclude that $\approxSC$ requires $O(\log^{10} n/\eps^3)$ calls to a distributed Laplacian solver. The bound in terms of the round complexity follows similarly. 
\end{proof}

\subsection{Proof of \Cref{theorem:laplacian-abstract-full}}
\label{appendix:main_result}

\absfull*

\begin{proof}
The correctness of the algorithm follows directly from \Cref{lemma:ultra_sparsification,lemma:eliminate,lemma:approxSC,lemma:schur_chain}, so let us focus on the round complexity. By the guarantee of \Cref{lemma:ultra_sparsification} we know that the $\ultraspars$ routine returns a graph $G_2$ such that $|V(G_2)| = |V(G_1)| 2^{O(\sqrt{\log n \log \log n})}/k$; this follows since we have sparsified the graph in the first step. Thus, for $k = 2^{(\log \overline{n})^{2/3}}$ we can infer that $|V(G_2)| \leq |V(G_1)|/k^{1 - o(1)}$. Next, with regards to the Schur complement chain, \Cref{lemma:eliminate,lemma:approxSC} imply that $|V(G_{i+1})| \leq |V(G_i)| O(0.98^{d} \log^2 n /\eps^2)$. Hence, setting $d = 2^{(\log \log \overline{n})^2}$ and $\eps = 1/(\log \overline{n})^2$ gives us that $|V(G_{i+1})| \leq |V(G_i)| 2^{- \Theta((\log \log \overline{ n})^2)}$.

As a result, $\build$ returns a $(2^{\Theta((\log \log \overline{n})^2)}, \eps)$-Schur complement chain, which in turn implies that this chain has length $O(\log \overline{n}/(\log \log \overline{n})^2)$. Thus, \Cref{lemma:schur_chain} implies that we can use this chain to produce a solution in $\rho \overline{n}^{o(1)} Q(\rho)$ rounds, where $\rho$ represents the maximum congestion of a graph along the chain; it will be establish that $\rho = \overline{n}^{o(1)}$.

Let $f(n, \rho)$ represent the number of rounds required by $\solver$ on a graph with $n$ nodes which $\rho$-minor distributes into $\overline{G}$, and $g(n, \rho)$ the number of rounds required by $\build$ with input an $n$-node graph which $\rho$-minor distributes into $\overline{G}$. Then, if we ignore lower order terms, it follows that 
\begin{equation*}
    f(n, \rho) = \overline{n}^{o(1)} Q(\rho) + g(n/k^{1 - o(1)}, \rho),
\end{equation*}
where we used that $|V(G_2)| \leq |V(G_1)|/k^{1 - o(1)}$. Moreover, we have that

\begin{align*}
    g(n, \rho) &= O \left( (\log^c n /\eps^c)^{(\log \log \overline{n})^2} Q(\rho)  \right) + f(n, 2\rho) O(\log^{10} n/\eps^3) + g(n/2^{\Theta((\log \log \overline{n})^2)}, \rho) \\
    &= \overline{n}^{o(1)} Q(\rho) + \polylog (\overline{n}) f(n, 2 \rho) + g(n/2^{\Theta((\log \log \overline{n})^2)}, \rho),
\end{align*}
where we used that $|V(G_{i+1})| \leq |V(G_i)| 2^{- \Theta((\log \log \overline{ n})^2)}$, and we ignored lower order terms. As a result, the overall increase in congestion is $2^{O(\log \overline{n}/(\log \log \overline{n})^2)} = \overline{n}^{o(1)}$. That is, all the graphs constructed $(\overline{n}^{o(1)})$-minor distribute into $\overline{G}$. Finally, the theorem follows since by \Cref{assumption:minor_aggregation} the dependence of $Q(\rho)$ on $\rho$ is polynomial.
\end{proof}

\subsection{Proof of \texorpdfstring{\Cref{theorem:lower_bound}}{Theorem 4.7}}
\label{appendix:lb}

\lowerb*

\begin{proof}
First of all, as pointed out in \citep[Theorem 2]{DBLP:journals/corr/abs-2012-15675}, it suffices to establish the lower bound for a high-precision solver, i.e. for a sufficiently small $\eps = 1/\poly(\overline{n})$. Indeed, a low-accuracy solver ($\eps \leq \frac{1}{2}$) can always be ``boosted'' with only an $O(\log \overline{n})$ overhead in the overall complexity.

In this context, let $\overline{H}$ be the input to the spanning connected subgraph problem. We construct a resistor network $H'$ so that $\vec{r}(e) = 1$ if $e \in E(\overline{H})$, and $\vec{r}(e) = \overline{n}^4 $ for every edge $e \notin E(\overline{H})$. Moreover, let us select arbitrarily a node $v \in V(\overline{G})$. The key idea of the proof is to consider as input to the Laplacian solver a vector $\vec{b} \in \mathbb{R}^{\overline{n}}$ such that $\vec{b}(u) = -1$ for all $u \in V(\overline{G}) \setminus \{ v\}$, while $\vec{b}(v) = \overline{n}-1$. 

To analyze the output of that Laplacian system, we first analyze the simpler Laplacian system with input a vector $\vec{\chi}_{v, u} \in \mathbb{R}^{\overline{n}}$ for which the coordinate corresponding to node $v$ is $1$; the coordinate corresponding to node $u$ is $-1$; and any other coordinate is set to $0$. We recall the following well-known facts.

\begin{fact}
    \label{fact:order}
Let $\vec{\phi} = \mathcal{L}(H')^\dagger \vec{\chi}_{v, u}$. Then, for any node $w \in V(\overline{G})$ it holds that $\vec{\phi}(v) \geq \vec{\phi}(w) \geq \vec{\phi}(u)$.
\end{fact}

\begin{fact}
    \label{fact:effect-res}
Let $\vec{\phi} = \mathcal{L}(H')^\dagger \vec{\chi}_{v, u}$. Then, the $v-u$ effective resistance is such that $\res_{H'}(v, u) = \vec{\phi}(v) - \vec{\phi}(u)$.
\end{fact}

As argued in \citep{DBLP:journals/corr/abs-2012-15675}, the output of the Laplacian with input $\vec{\chi}_{v, u}$ and a sufficiently small error $\eps = 1/\poly(\overline{n})$ can be used to determine whether $v$ and $u$ are connected. Indeed, the following arguments have been extracted from their lower bound.

\begin{claim}
    \label{claim:eff-res-small}
  If $u$ and $v$ are connected in $\overline{H}$ it follows that $\res_{H'}(v, u) \leq \overline{n}-1$.
\end{claim}

\begin{proof}
It is well-known that the effective resistances satisfy the triangle inequality. Moreover, given that $v$ and $u$ are connected in $\overline{H}$, it follows that there exists a path of length at most $\overline{n} - 1$ in $H'$ so that every edge has resistance $1$ (by construction of the resistor network $H'$). As a result, the triangle inequality implies that $\res_{H'}(v, u) \leq \overline{n} - 1$.
\end{proof}

\begin{claim}
    \label{claim:eff-res-large}
    If $v$ and $u$ are not connected in $\overline{H}$ it follows that $\res_{H'}(v, u) \geq \overline{n}^2$.
\end{claim}

\begin{proof}
Suppose that $e_1, \dots, e_k$ are the edges leaving the connected component of $v$ in $\overline{H}$, for some $k \leq \overline{n}^2$. Then, the Nash-Williams inequality implies that 
\begin{equation*}
    \res_{H'}(v, u) \geq \frac{1}{\sum_{i=1}^k \frac{1}{\vec{r}(e_i)}} \geq \overline{n}^2, 
\end{equation*}
by construction of the resistor network.
\end{proof}

The next step of the proof is to incorporate in the analysis the error of the solver. To this end, let $\vec{\phi}'$ be an $\eps$-approximate solution to the linear system $\mathcal{L}(H') \vec{\phi} = \vec{\chi}_{v, u}$ in the sense that 
\begin{equation*}
    \| \vec{\phi}' - \mathcal{L}(H')^\dagger \vec{\chi}_{v, u} \|_{\mathcal{L}(H')} \leq \eps \| \vec{\chi}_{v, u}\|_{\mathcal{L}(H')^\dagger} = \eps \sqrt{\res_{H'}(v, u)}.
\end{equation*}
Moreover, since the Laplacian matrix has integer resistances up to range $\poly(\overline{n})$, it follows that for any $\vec{x}$, $\| \vec{x} \|_\infty \leq \poly(\overline{n}) \|\vec{x}\|_{\mathcal{L}}$. Thus, by setting $\eps = 1/\poly(\overline{n})$ to be sufficiently small, we have that 
\begin{equation*}
    \res_{H'}(v, u) - \frac{1}{\overline{n}} \leq \vec{\phi}'(v) - \vec{\phi}'(u) \leq \res_{H'}(v, u) + \frac{1}{\overline{n}}.
\end{equation*}
Now we will use these bounds to argue about the initial Laplacian system with input vector $\vec{b}$. By linearity, a solution of the Laplacian system with input $\vec{b}$ can be expressed as the sum of solutions of Laplacians with input $\vec{\chi}_{v, u}$ over all $u \in V(\overline{G}) \setminus \{ v \}$. Next, we let $\vec{\phi} = \mathcal{L}(H')^\dagger \vec{b}$, and $\vec{\phi}'$ be the output of the Laplacian solver for a sufficiently small $\eps = 1/\poly(\overline{n})$. Our analysis distinguishes between the following cases.

\paragraph{Case I} Suppose that $\overline{H}$ is connected. In turn, this implies that $v$ is connected with any node $u \in V(\overline{G})$. As a result, it follows from \Cref{fact:order}, \Cref{fact:effect-res} and \Cref{claim:eff-res-small} that for any node $u$,
\begin{equation}
    \label{eq:small-phi}
    \vec{\phi}'(v) - \vec{\phi}'(u) \leq (\overline{n} - 1)^2 + 1.
\end{equation}
\paragraph{Case II} In the contrary case, there must be node $u$ such that $v$ and $u$ are disconnected on $\overline{H}$. By \Cref{claim:eff-res-large} and\Cref{fact:order} 
this yields that 
\begin{equation}
    \label{eq:large-phi}
    \vec{\phi}'(v) - \vec{\phi}'(u) \geq \overline{n}^2 - 1.
\end{equation}
Thus, \eqref{eq:small-phi} and \eqref{eq:large-phi} imply that the output $\vec{\phi}'$ of the Laplacian solver contains enough information to determine whether $\overline{H}$ is connected or not since $\overline{n}^2-1>(\overline{n}-1)^2 + 1$ for any $\overline{n}\geq 2$. 

To leverage this in the $\congest$ model we proceed as follows. First, node $v$ sends to every other node in the graph its own part of the output from the Laplacian solver. This step can be clearly completed after $D(\overline{G})$ rounds. Then, each node $u$ inspects whether the value $\vec{\phi}'(v) - \vec{\phi}'(u)$ is larger than $\overline{n}^2 - 1$. In that case, node $u$ can transmit this information to the entire network; this step is easily seen to be implementable in $D(\overline{G})$. As a result, assuming that $\SQ(\overline{G}) \geq 3 D(\overline{G})$, the proof follows immediately from \Cref{theorem:shortcutquality-lb}. But the contrary case is also immediate since on \emph{any} topology solving a Laplacian system trivially requires $\Omega(D(\overline{G}))$ rounds. This completes the proof.
\end{proof}

\section{Congested Part-Wise Aggregation in the NCC Model}
\label{appendix:ncc}

The purpose of this section is to establish \Cref{lemma:conge_PA-ncc} by appropriately leveraging the machinery developed by \citet{DBLP:conf/spaa/AugustineGGHSKL19}. To this end, let us first describe one of their key communication primitives.

\paragraph{The Aggregation Problem} In the \emph{aggregation problem}, as defined by \citet{DBLP:conf/spaa/AugustineGGHSKL19}, we are given a distributive function and a set of \emph{aggregation parts} $\{ P_1, \dots, P_k \}$, with $P_i \subseteq V(\overline{G})$ for all $i$. Every aggregation part is associated with some target node $t_i \in P_i$.\footnote{In \citep{DBLP:conf/spaa/AugustineGGHSKL19} the target node does not have to belong to the corresponding aggregation part, but this additional flexibility will not be required for our purposes.} Assuming that every node holds \emph{exactly one} input value for each aggregation part of which it is a member, the goal is to let all the target nodes learn the aggregate values with respect to the associated aggregation parts. This setting allows a node to be part of multiple groups, and in particular, we let $\ell$ be the \emph{local load}: the number of groups a given node may be included in---or an upper bound thereof. In addition, if $L = \sum_{i=1}^k |P_i|$ represents the \emph{global load} of the aggregation problem, \citet[Theorem 2.3]{DBLP:conf/spaa/AugustineGGHSKL19} established the following result.

\begin{lemma}[\cite{DBLP:conf/spaa/AugustineGGHSKL19}]
    \label{lemma:aggregation-ncc}
    There exists an aggregation algorithm which solves with high probability the aggregation problem in $O(L/\overline{n} + \ell/\log \overline{n} + \log \overline{n})$ rounds of $\ncc$.
\end{lemma}

In the context of the $\rho$-congested part-wise aggregation problem (\Cref{definition:congested_part_wise_problem}), it is clear that $\ell \leq \rho$ and $L \leq \rho \overline{n}$. Thus, we are now ready to establish \Cref{lemma:conge_PA-ncc}, the statement of which is recalled below.

\lemmaCongPaNcc*

\begin{proof}
    We first employ the communication protocol of \Cref{lemma:aggregation-ncc} so that after $O(\rho + \log \overline{n})$ rounds of $\ncc$ each target node learns with high probability the aggregate values with respect to the associated aggregation parts. Next, we can essentially reverse in time the previous communication pattern, but this time using the aggregate values as determined by the target nodes. As a result, every node will know with high probability the aggregate value for each of its aggregation parts after $O(\rho + \log \overline{n})$ rounds of $\ncc$.
\end{proof}

\end{document}